\newcommand{\real}{\mathbb{R}}
\newcommand{\eps}{\varepsilon}
\newcommand{\mat}[1] {\mathtt{#1}}
\newcommand{\supp} {\mathrm{supp}}
\newcommand{\iprod}[1]  {\bigl\langle #1 \,\bigr\rangle}
\newcommand{\norm}[1]   {\| #1 \|}
\newcommand{\cancel}[1] {}
\newbox\ProofSym
\begin{document}
\title{A Dynamic Working Set Method for \\ Compressed Sensing\thanks{Research supported by Research Grants Council, Hong Kong, China (project no.~16203718).  The conference version is to appear in Proceedings of the International Computing and Combinatorics Conference, 2025.}}

\titlerunning{A Dynamic Working Set Method}
\author{Siu-Wing Cheng \and Man Ting Wong}

\authorrunning{S.-W. Cheng and M.T. Wong}
\institute{HKUST, Hong Kong, China}

\maketitle              

\begin{abstract}
We propose a dynamic working set method (DWS) for the problem $\min_{\mat{x} \in \real^n} \frac{1}{2}\norm{\mat{Ax}-\mat{b}}^2 + \eta\norm{\mat{x}}_1$ that arises from compressed sensing.  DWS manages the working set while iteratively calling a regression solver to generate progressively better solutions.  Our experiments show that DWS is more efficient than other state-of-the-art software in the context of compressed sensing.  Scale space such that $\norm{\mat{b}}=1$.  Let $s$ be the number of non-zeros in the unknown signal.  We prove that for any given $\eps > 0$, DWS reaches a solution with an additive error $\eps/\eta^2$ such that each call of the solver uses only $O(\frac{1}{\eps}s\log s \log\frac{1}{\eps})$ variables, and each intermediate solution has $O(\frac{1}{\eps}s\log s\log\frac{1}{\eps})$ non-zero coordinates.

\keywords{Compressed sensing \and working set \and linear regression}
\end{abstract}

\section{Introduction}

Compressed sensing allows for the recovery of sparse signals using very few observations.  Applications include multislice brain imaging~\cite{mri}, wavelet-based image/signal reconstruction and restoration~\cite{nosr}, the single-pixel Camera~\cite{spiv}, and hyperspectral imaging~\cite{ssec}.  There are two components in compressed sensing.  First, a matrix $\mat{A} \in \real^{k \times n}$ is designed such that for any unknown signal $\mat{z} \in \real^n$, a small number of $k$ noisy observations are taken as $\mat{b} = \mat{Az} + \mat{n} \in \real^k$, where $\mat{n}$ denotes Gaussian noise.  Second, an algorithm is run on $\mat{A}$ and $\mat{b}$ to recover $\mat{z}$.

Let $s$ be the number of non-zeros in the unknown $\mat{z} \in \real^n$.
In many applications, $s$ is no more than 8\% of $n$ (e.g.~\cite{spiv,ssec}), and it has been argued~\cite{cs} that certain images with $n$ pixels can be reconstructed with $O(\sqrt{n}\log^3 n)$ observations, i.e., $s = o(n)$.  If $\mat{A}$ has the \emph{restricted isometry properties} (RIP), it has been proved that $\mat{z}$ can be recovered with high probability by solving 
\begin{equation}
\min_{\mat{x} \in \real^n} F(\mat{x}) = \min_{\mat{x} \in \real^n} \frac{1}{2}\|\mat{A}\mat{x} - \mat{b}\|^2 + \eta \|\mat{x}\|_1
\label{eq:cs}
\end{equation}
for an appropriate $\eta > 0$ with $k = Cs\ln(n/s)$ for some constant $C$~\cite{bar08,rupe,cs}.  It is popular to use a random matrix $\mat{A}$ to achieve RIP with high probability. 
For example, sample each matrix entry independently from the normal distribution $\mathcal{N}(0,1)$ and then orthonormalize the rows~\cite{gpsr};
all non-zero singular values of $\mat{A}$ are thus equal to $1$.  A detailed discussion of RIP can be found in~\cite{bar08,rupe,cs}.

In this paper, we are concerned with solving $\min_{x \in \real^n} F(\mat{x})$ when $s \ll n$, $\mat{A}$ is an arbitrary $k \times n$ matrix with $\norm{\mat{A}} \leq 1$, and $\eta = \alpha\norm{\mat{A}^t\mat{b}}_\infty$ for some fixed $\alpha \in (0,1)$.\footnote{Whenever $\eta \geq \norm{\mat{A}^t\mat{b}}_\infty$, $\mat{x} = 0$ is the optimal solution~\cite{fuchs}.}  We propose a dynamic working set method and show that it gives superior performance than several state-of-the-art solvers in compressed sensing experiments when $\mat{A}$ is generated randomly as described above.  We also mathematically analyze the convergence and efficiency of our method.

\vspace{6pt}

\noindent {\bf Related work.}  If $\mat{A}$ in \eqref{eq:cs} is an arbitrary matrix, the problem is generally known as Lasso~\cite{lasso}, which is originally proposed for regularized regression and variable selection.   The sparsity level for Lasso to yield the best fit is typically unknown, whereas the compressed sensing applications often give a specific sparsity range for the unknown signal.  Problem~\eqref{eq:cs} can be transformed to a convex quadratic programming problem (e.g.~\cite{gpsr}) that can be solved in $O(n^3L)$ time~\cite{monteiro89}, where $L$ is the total number of bits representing the instance.  Tailor-made algorithms have also been developed.  The earlier ones include gradient projection for sparse reconstruction (GPSR)~\cite{gpsr}, iterated thresholding (IST)~\cite{dd04}, L1\_LS~\cite{kse07}, the homotopy method~\cite{ddy08}, and L1-magic~\cite{l1magic}.
In compressed sensing experiments, L1\_LS runs faster than L1-magic and the homotopy method~\cite{l1magic}, and GPSR runs faster than IST and L1\_LS~\cite{gpsr}.

Recently, coordinate descent algorithms with theoretical guarantees have been effective in solving large convex optimization problems with sparse solution~\cite{nesterov2012efficiency,tseng2009coordinate}.  Two solvers in this category are glmnet~\cite{glmlassopaper} and scikit-learn~\cite{scikit}.  To solve problems with even more variables, 
working set strategies have been combined with coordinate descent or other solvers.  
They iteratively call a solver to generate progressively better solutions, and a small set of free variables is maintained to reduce the execution time of each call.
Algorithms that employ the working set methods include Picasso~\cite{picasso}, Blitz~\cite{Blitz}, Fireworks~\cite{fireworks}, Celer~\cite{celer2}, and Skglm~\cite{skglm}.  The convergence of these methods has been proven.  
In Lasso experiments, Blitz runs faster than L1\_LS and glmnet~\cite{Blitz}, Celer runs faster than Blitz and scikit-learn~\cite{celer2}, and Skglm performs better than Celer, Blitz, and Picasso~\cite{skglm}.  

According to the literature, GPSR, Skglm, and Celer would be the major competing solvers for compressed sensing problems.

\vspace{6pt}

\noindent {\bf Our contributions.}  We propose a dynamic working set (DWS) algorithm for solving problem~\eqref{eq:cs} when $s \ll n$, $\mat{A}$ is an arbitrary $k \times n$ matrix with $\norm{\mat{A}} \leq 1$, and $\eta =\alpha\norm{\mat{A}^t\mat{b}}_\infty$ for a fixed $\alpha \in (0,1)$.  

Define the \emph{support set} of a solution to be the subset of non-zero variables in it.  DWS checks how well the support set size matches the working set size in the previous iteration.  The result determines the number of free variables that will be added to the previous support set to form the next working set.

We ran compressed sensing experiments on DWS with GPSR as the solver.  We set $s$ to be 1\%, 4\%, and 8\% of $n$ which is similar to the ranges of $s$ used in previous works~\cite{fsso,gpsr,srfr}. DWS is $1.91\times$ faster than Skglm, $3\times$ faster than Celer, and $2.45\times$ faster than running GPSR alone on average.  Similar trends are observed for other values of $s$ in the range of 1\% to 8\% of $n$.

Scale space such that $\norm{\mat{b}} = 1$.  Take any $\eps \in (0,1)$.  Let $U$ be an upper bound on any working set size before DWS reaches a solution $\mat{x}_r$ such that $F(\mat{x}_r) \leq \mathrm{optimum} + \eps/\eta^2$.  We prove that $U = O(\frac{1}{\eps}s\log s\log\frac{\eta}{\eps})$ if $\eps$ is given beforehand and $U = O(\frac{1}{\eps}k\log k\log\frac{\eta}{\eps})$ otherwise.  There are two implications.  First, DWS can converge to any positive error.  Second, if $\eps$ is given beforehand or $k = \Theta(s\log(n/s))$ (which allows the recovery of the sparse signal), then DWS uses provably small working sets and produces provably sparse solutions until $\mat{x}_r$.

\vspace{6pt}

\noindent {\bf Notations.}  Matrices are represented by uppercase letters in typewriter font.  Vectors are represented by lowercase letters in typewriter font or lowercase Greek symbols. The inner product of $\mat{x}$ and $\mat{y}$ is $\langle \mat{x},\mat{y} \rangle$ or $\mat{x}^t\mat{y}$. We use $(\mat{x})_i$ to denote the $i$-th coordinate of a vector $\mat{x}$. 
Define the support set of $\mat{x}$ to be $\mbox{supp}(\mat{x}) = \bigl\{i : (\mat{x})_i \neq 0 \bigr\}$.  
Given a matrix $\mat{M}$ and a vector $\mat{x}$, we use $\norm{\mat{M}}$ and $\norm{\mat{x}}$ to denote their $L_2$-norms, and we use $\norm{\mat{x}}_1$ and $\norm{\mat{x}}_\infty$ to denote the $L_1$-norm and $L_\infty$-norm of $\mat{x}$, respectively. Let $n$ be the total number input variables.  Let $s$ be the support set size of the optimal solution.

\section{Algorithm DWS}
\label{sec:alg}

Let $f(\mat{x}) = \frac{1}{2}\|\mat{Ax}-\mat{b}\|^2$.  Let $g(\mat{x}) = \eta\norm{\mat{x}}_1$.  The objective function is $F(\mat{x}) = f(\mat{x}) + g(\mat{x})$.  DWS calls a solver iteratively. In each iteration, some variables are free, forming the \emph{working set}, and the others are fixed at zero.  We use $\mat{x}_r$ to denote the solution returned by the solver in the $r$-th iteration.

Algorithm~\ref{alg:1} gives the pseudocode of DWS.  We define $\mat{x}_{0} = \mat{0}$.  For $r \geq 0$, we extract a subset of variables 
\[
E_{r} = \left\{j \in [n] : \left|\frac{\partial f(\mat{x}_{r})}{\partial (\mat{x})_j}\right| > \eta \right\}.  
\]
We will prove that for all $j \in E_{r}$, if $\partial f(\mat{x}_{r})/\partial (\mat{x})_j < 0$, the $j$-th positive axis is a descent direction from $\mat{x}_{r}$; otherwise, if $\partial f(\mat{x}_{r})/\partial (\mat{x})_j > 0$, the $j$-th negative axis is a descent direction from $\mat{x}_{r}$.  The \emph{weight} of $j \in [n]$ is $\left|\partial f(\mat{x}_{r}/\partial (\mat{x})_j\right|$.  An element $j$ is \emph{heavier} than another if its weight is larger.  DWS uses a parameter $p_0$ to initialize the first working set $W_1$ to consist of  the $p_0$ elements of $[n]$ with the $p_0$ largest $\bigl|\partial f(\mat{x}_0)/\partial (\mat{x})_j\bigr|$.
When $p_0 \leq |E_0|$, $W_1$ consists of the $p_0$ heaviest elements of $E_0$, and the same initialization is done in Skglm. When $p_0 > |E_0|$, Skglm selects $p_0-|E_0|$ variables outside $E_0$ in some order and inserts them into $W_1$, which is similar to what we do. Celer also starts with a working set of size $p_0$ by some selection criterion.  The working set of DWS for the $(r+1)$-th iteration for $r \geq 1$ is $W_{r+1} = \supp(\mat{x}_{r}) \cup \{\text{the $\tau_{r+1}$ heaviest elements in $E_{r}$}\}$,
where $\tau_{r+1}$ is defined in lines~\ref{alg:tau-1}--\ref{alg:tau-3} of Algorithm~\ref{alg:1}.   DWS uses a basic step size $\tau$ for increasing the working set size, and $\tau_{r+1}$ is equal to $\min\{h^{a_r}\tau, k, |E_r|\}$ for some appropriate integer $a_r$.  By our assumption that $k > s$, we will not release more than $k$ variables from $E_{r}$ to $W_{r+1}$.  The variables in $W_r$ that are zero will be kicked out of $W_{r+1}$.  This can significantly reduce the running time of the next iteration. The rationale behind the setting of $a_r$ is:
\begin{itemize}

\item If $|\supp(\mat{x}_r)| \leq |\supp(\mat{x}_{r-1})| + \tau/h$, lines~\ref{alg:tau-1}--\ref{alg:tau-3} of Algorithm~\ref{alg:1} set $a_r = 0$, i.e., $\tau_{r+1} = \min\{\tau,k,|E_r|\}$.  The slow growth in the support set size suggests that the working set size may be close to the ideal.  We should not increase the working set size so much to slow down the next iteration.
 
\item Otherwise, let $m$ be the smallest non-negative integer such that $|\supp(\mat{x}_r)| \leq |\supp(\mat{x}_{r-1})| + h^m \tau$.  We can release $h^{m+1}\tau$ or $h^{a_{r-1}+1}\tau$ variables from $E_r$, i.e., a factor $h$ more.
To avoid a large increase in the working set size, lines~\ref{alg:tau-1}--\ref{alg:tau-3} set $a_r = \min\{m+1,a_{r-1}+1\}$.
\end{itemize}

\begin{algorithm}
\caption{DWS}\label{alg:1}
\begin{algorithmic}[1]
        \STATE{$h \gets$ any constant in $(1,2]$} \hspace*{31pt} /* $h = 2$ in the experiments. */
        \STATE{$\tau \gets$ any integer in $[k]$} \hspace*{47pt} /* $\tau = \lfloor 4\ln^2 n \rfloor$ in the experiments */
        \STATE{$\mat{x}_0 \gets 0$}
        \STATE{compute $\nabla f(\mat{x}_{0}) = -\mat{A}^t\mat{b}$ to generate $E_0$}  
        \STATE{$W_1 \gets \left\{\text{the $p_0$ elements of $[n]$ with the $p_0$ largest $\bigl|\frac{\partial f(\mat{x}_0)}{\partial (\mat{x})_j}\bigr|$}\right\}$}  \hspace{10pt} /* $p_0=10$ in the experiments */
        \STATE{$a_0 \gets 0$; $r \gets 1$}
	\WHILE{$E_{r-1} \not= \emptyset$}
            \STATE{$\mat{A}_{r} \gets$ submatrix of $\mat{A}$ with columns corresponding to $W_{r}$}
	    \STATE{$\mat{x}_r \gets \text{optimal solution obtained by calling the solver with $\mat{A}_{r}$ and $\mat{b}$}$} \label{alg:qp}
            \STATE{$m \gets \text{the smallest integer in $[-1,\infty)$ s.t. $|\supp(\mat x_r)|\le h^m \tau+ |\supp(\mat x_{r-1})|$}$} \label{alg:tau-1}
            \STATE{$a_r \gets \min\{m+1,a_{r-1}+1\}$}
            \STATE{compute $\nabla f(\mat{x}_{r}) = \mat{A}^t\mat{A}\mat{x}_r -\mat{A}^t\mat{b}$ to generate $E_{r}$}
            \STATE{$\tau_{r+1} \gets \min\{h^{a_r}\tau, \, k, \, |E_r|\}$}   \label{alg:tau-3}
            \STATE{$W_{r+1} \gets \supp(\mat{x}_{r}) \cup \{\text{the $\tau_{r+1}$ heaviest elements in $E_{r}$}\}$} \label{alg:update}
            \STATE{$r \gets r+1$}
        \ENDWHILE
        \STATE{{\bf return} $\mat{x}_r$}
\end{algorithmic}
\end{algorithm}

\section{Experimental results}

In our experiments, we generate a random matrix as described in the introduction.  All non-zero singular values of $\mat{A}$ are equal to 1.  To generate a vector $\mat{b}$, we first generate a true signal $\mat{z} \in \mathbb{R}^n$ by sampling $s$ coordinates uniformly at random, setting each to $-1$ or $1$ with probability 1/2, and setting the other $n-s$ coordinates to zero.  Then, compute $\mat{b} = \mat{A}\mat{z} + \mat{n}$, where each entry of $\mat{n}$ is drawn independently from $\mathcal{N}(0,10^{-4})$.

We follow the experimental set up in GPSR~\cite{gpsr} to set $\eta = 0.1 \cdot \|\mat{A}^t \mat{b}\|_{\infty}$.  Note that if $\eta \geq \norm{\mat{A}^t\mat{b}}_{\infty}$, then $\mat{x}=0$ is the optimal solution~\cite{fuchs}.  We will report our experimental results with $n \in \{15000,\,30000,\,45000,\,60000\}$, $s \in \{0.01n, 0.04n, 0.08n\}$, and $k = 2s\ln(n/s)$.  
A similar range of $s$ has been used in previous 
works~\cite{fsso,gpsr,srfr} 
and some compressed sensing applications such as Single Pixel Camera \cite{spiv} and hyperspectral imaging \cite{ssec}.  
%
We also tried random inputs with $k = Cs\ln(n/s)$ for $C \in \{1.6,3,4\}$ and other values of $s$ in the range $[0.01n,0.08n]$.
Similar trends have been observed.  
All experiments were run on a 12th Gen Intel\(^{\tiny\textregistered}\) Core\(^{\tiny\texttrademark}\) i9-12900KF CPU (3.19 GHz and 64 GB RAM).



We use BenchOpt~\cite{benchopt} to conduct experiments. It comes with Celer and Skglm.  It allows the user to add new methods.  It generates informative graphs, such as the support set size against iteration, the working set size against iteration, and the \emph{suboptimality curve}, i.e., $F(\mat{x}_r) - F(\mat{x}_*)$ against running time.  

BenchOpt does not simply run a working set method $\mathcal{A}$ to completion.  It starts with a variable $i=1$, runs the first $i$ iterations of $\mathcal{A}$, produces a data point, increments $i$, and repeats the above on the same input. For example, 
a data point for the suboptimality curve is the tuple formed by the running time of the $i$ iterations and $F(\mat{x}_i) - F(\mat{x}_*)$.  
%
As mentioned in~\cite{skglm}, different runs of a solver on the same input may have different running times.  So a plot for $\mathcal{A}$ may not be monotone with respect to the $x$-axis (e.g.~the suboptimality curves for Skglm and Celer in Figure~\ref{fig:sota_showcase_12}).
BenchOpt does not use the termination condition prescribed by $\mathcal{A}$; instead, it stops running $\mathcal{A}$ when the objective function value does not decrease for several consecutive iterations.  The final error is thus clear for comparison.  For clarity, we circle the data points in all graphs at which the corresponding methods should have terminated.   BenchOpt uses the smallest objective function value $V$ among all solvers tested and take $F(\mat{x}_*)$ to be $V - 10^{-10}$.

In implementing DWS, we use the GPSR-BB version of the GPSR package as the solver.  For simplicity, we refer to the GPSR-BB version as GPSR.
Figure~\ref{fig:rtaugpsr_34} shows that DWS is significantly faster than GPSR when $s$ is 1\% or 4\% of $n$; DWS has a similar efficiency as GPSR when $s$ is 8\% of $n$; the average speedup achieved by DWS is roughly $2.45\times$.  

\begin{figure}[]
	\centering
	\includegraphics[width=0.35\columnwidth]{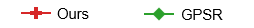}\\
	\includegraphics[width=.32\columnwidth]{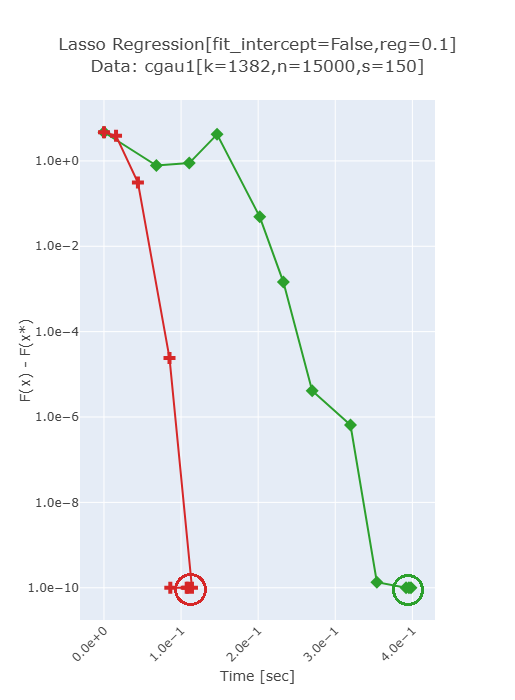}
	\includegraphics[width=.32\columnwidth]{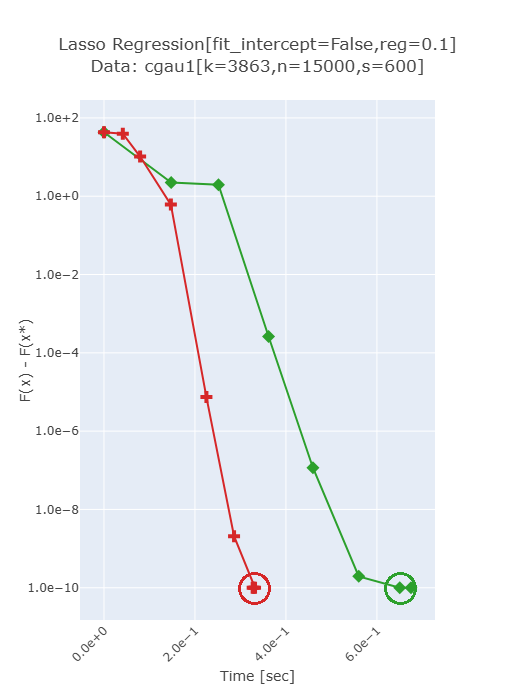}
	\includegraphics[width=.32\columnwidth]{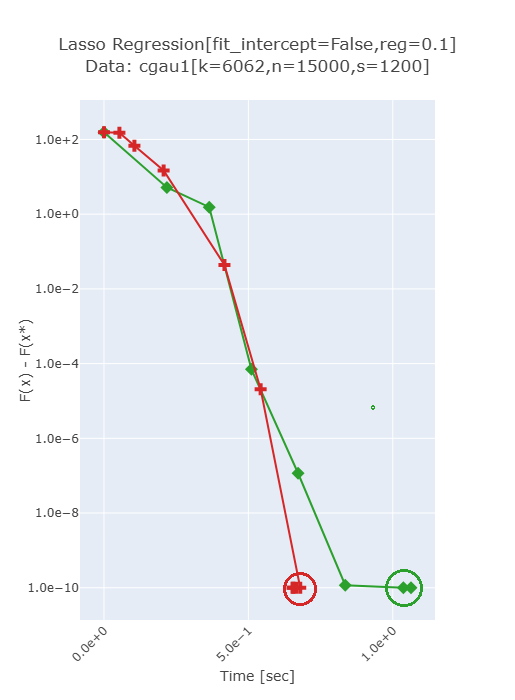}\\
	\includegraphics[width=.32\columnwidth]{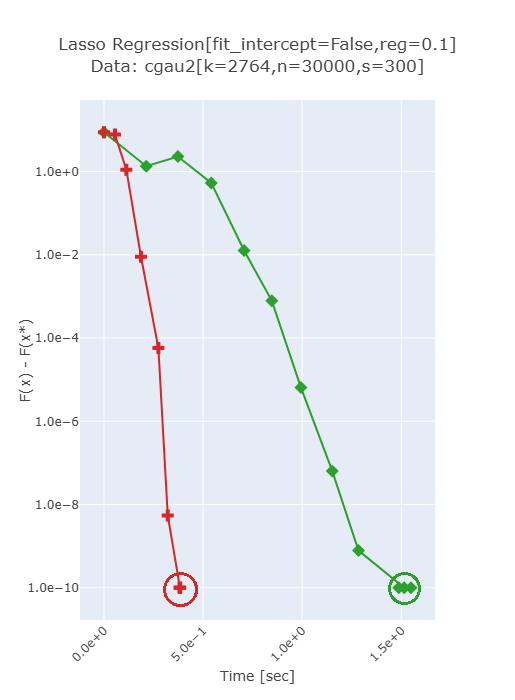}
	\includegraphics[width=.32\columnwidth]{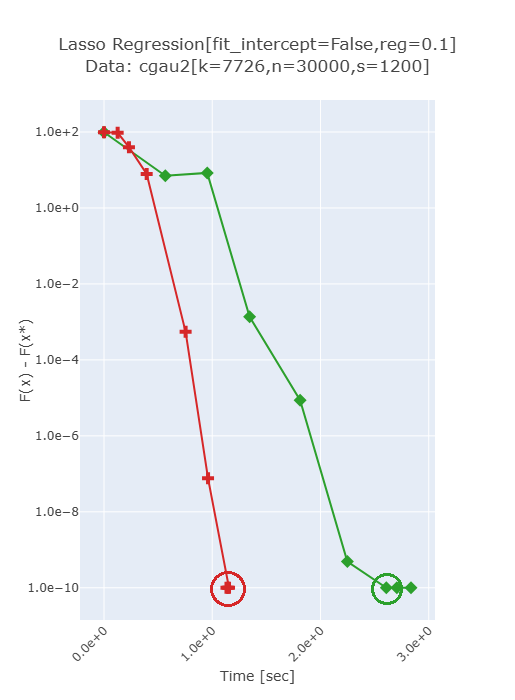}
	\includegraphics[width=.32\columnwidth]{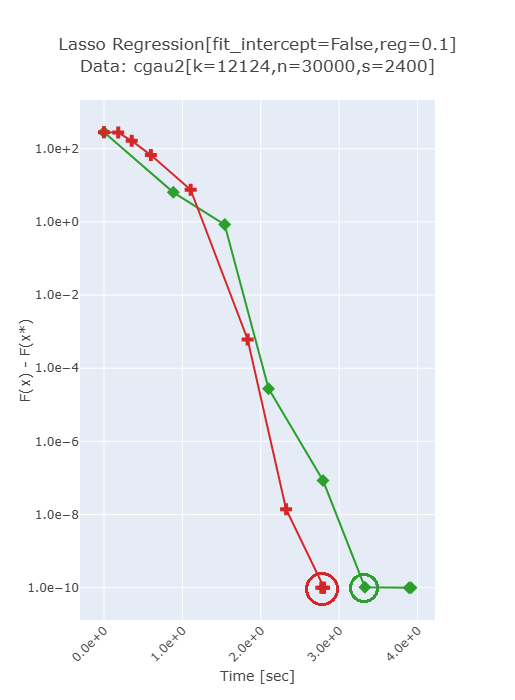}\\
	\includegraphics[width=.32\columnwidth]{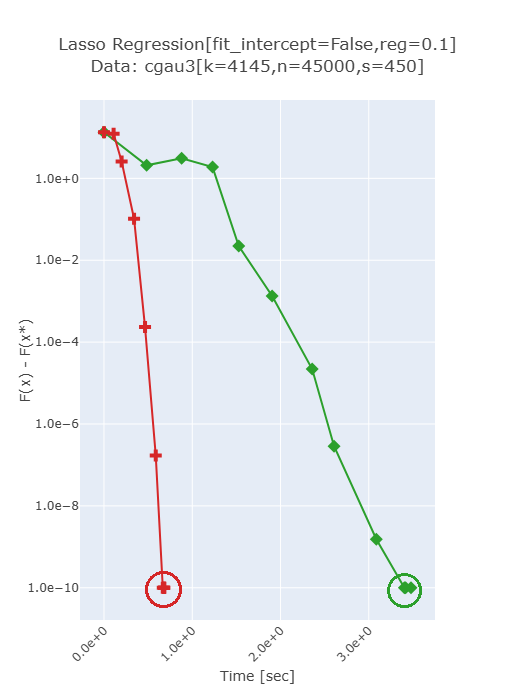}
	\includegraphics[width=.32\columnwidth]{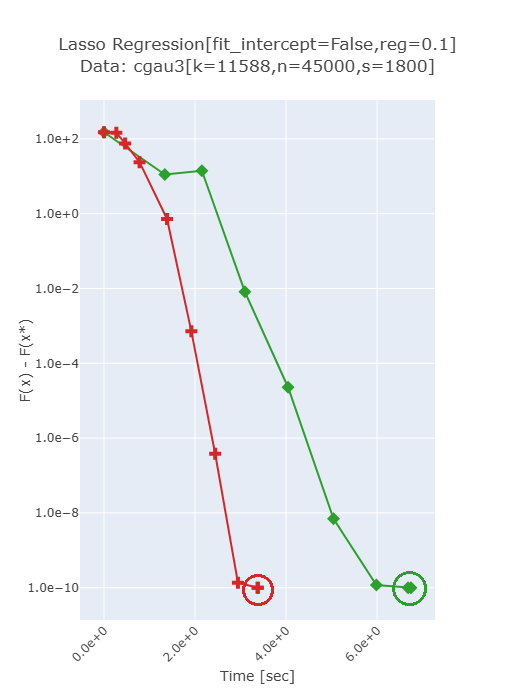}
	\includegraphics[width=.32\columnwidth]{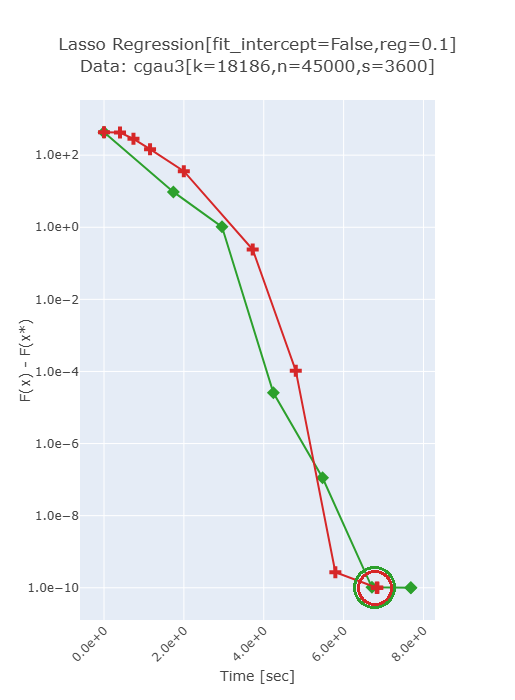}\\
	\includegraphics[width=.32\columnwidth]{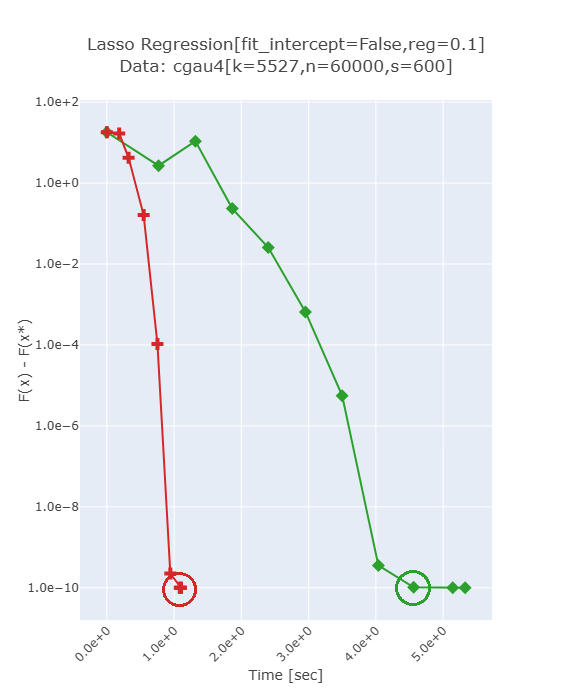}
	\includegraphics[width=.32\columnwidth]{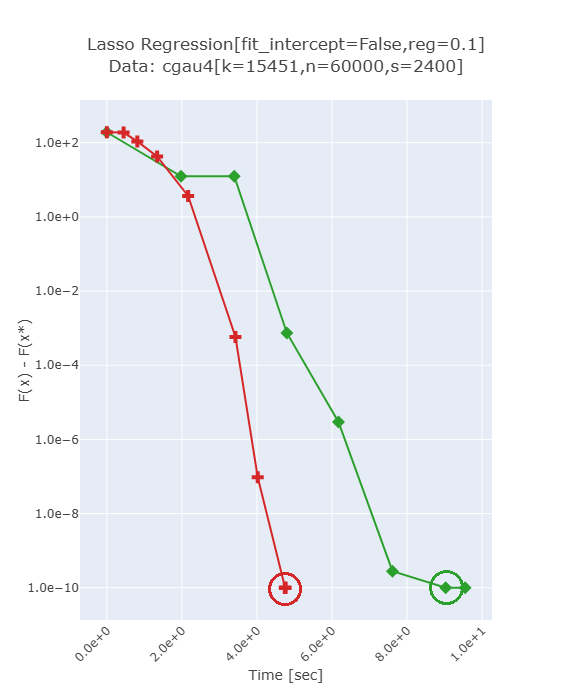}
	\includegraphics[width=.32\columnwidth]{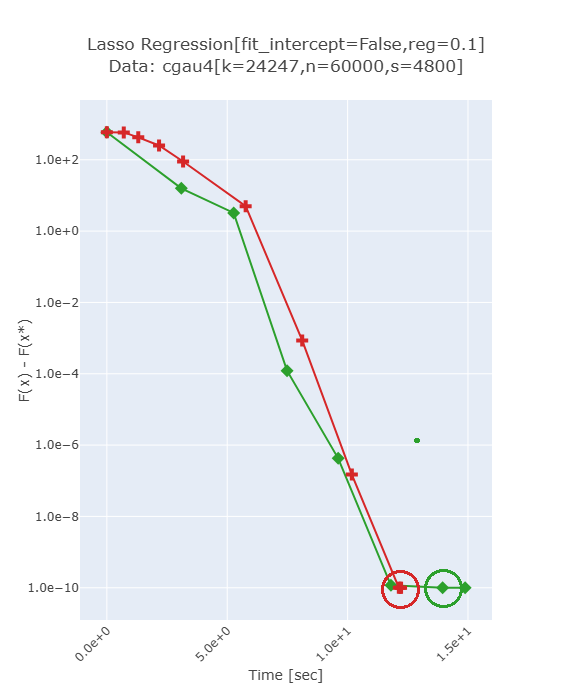}\\
	\caption{Plots of $F(\mat{x}_r) - F(\mat{x}_*)$ against running time for our method and GPSR.}
	\label{fig:rtaugpsr_34}
\end{figure}

Skglm and Celer update the working set using a doubling strategy~\cite{skglm,celer2} that sets the working set size for iteration $r+1$ to be $2 \cdot |\supp(\mat x_r)|$. The variables in the working set for iteration $r$ that are zero may be excluded from the working set for iteration $r+1$.  Celer also supports a non-pruning mode that sets the working set size for iteration $r+1$ to be twice the working set size for iteration $r$, and all variables in the working set for iteration $r$ are kept. In our setting, as shown in Figure~\ref{fig:celerNP},
Celer is not more efficient in the non-pruning mode as $s$ increases. Therefore, we will ignore the non-pruning mode of Celer.\footnote{For Skglm, there is a discrepancy between the doubling strategies in the publicly available code and the paper.  Our description follows the code.  The convergence of Skglm is proved for the version in the paper that grows a working set monotonically.  The convergence of Celer is proved for its non-pruning mode.} 

\begin{figure}[]
	\centering
	\includegraphics[width=0.40\columnwidth]{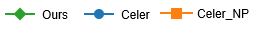}\\
	\includegraphics[width=.32\columnwidth]{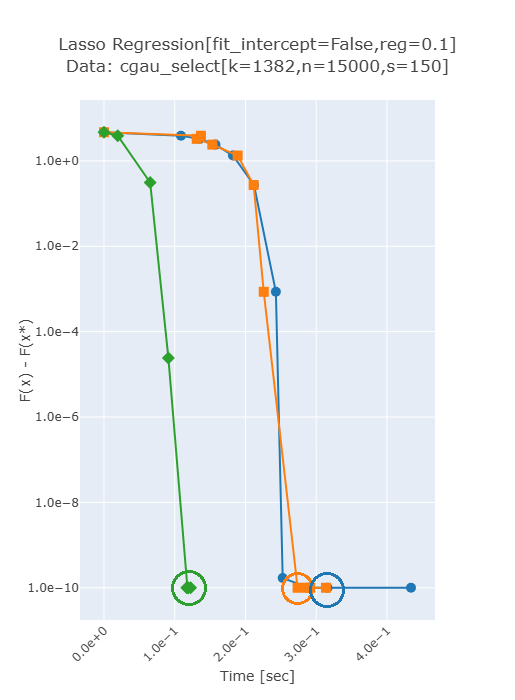}
	\includegraphics[width=.32\columnwidth]{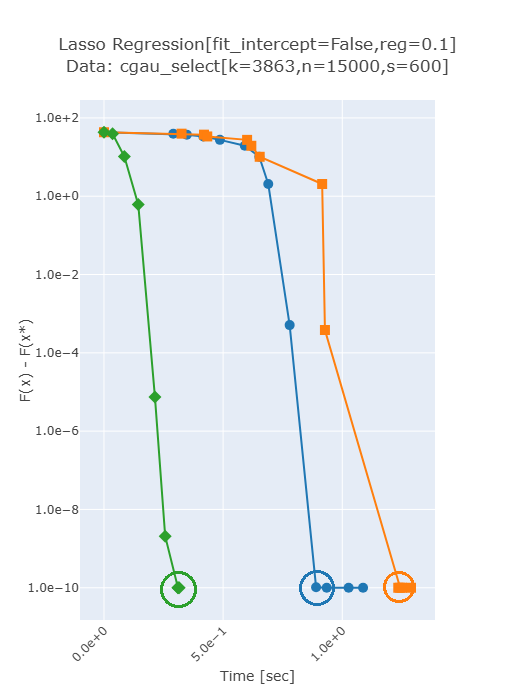}
	\includegraphics[width=.32\columnwidth]{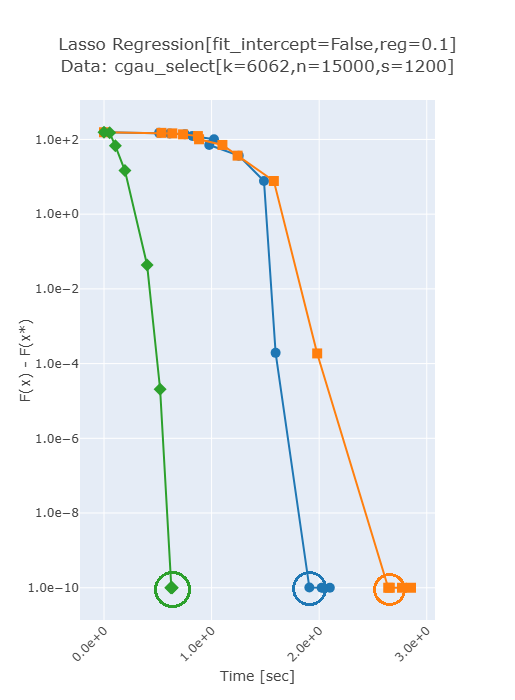}\\
	\includegraphics[width=.32\columnwidth]{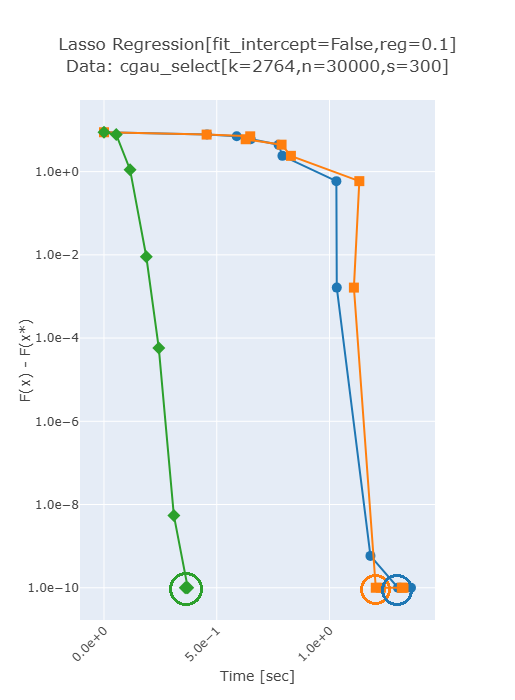}
	\includegraphics[width=.32\columnwidth]{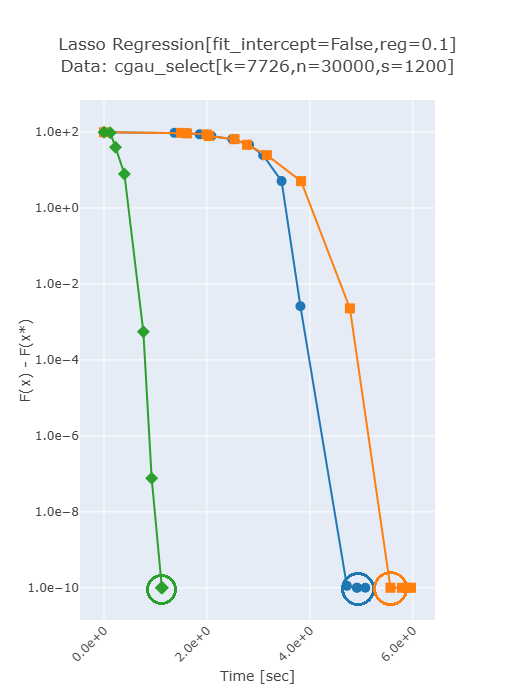}
	\includegraphics[width=.32\columnwidth]{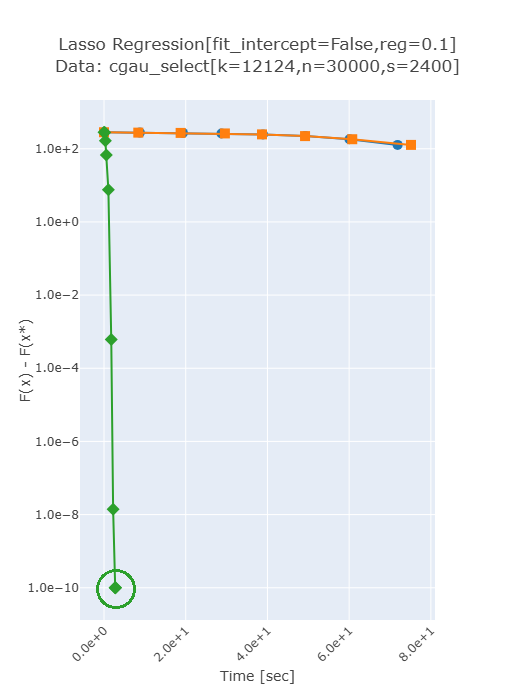}\\
	\caption{Plots of $F(\mat{x}_r) - F(\mat{x}_*)$ against running time, comparing the pruning mode versus the non-pruning mode of Celer for $n \in \{15000, 30000\}$. Our method is also included as a baseline for clearer visualization because Celer does not always converge within the time limit as $n$ and $s$ increase.}
	\label{fig:celerNP}
\end{figure}

\begin{figure}[t]
    \centering
    \includegraphics[width=0.35\columnwidth]{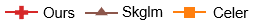}\\
    \includegraphics[width=.32\columnwidth]{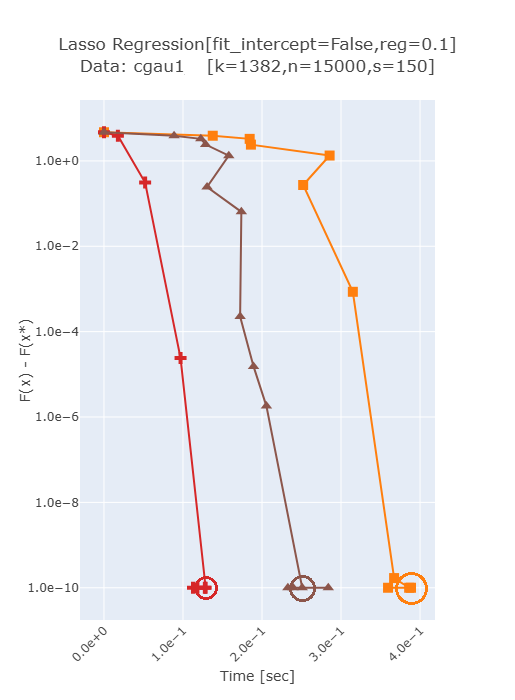}
    \includegraphics[width=.32\columnwidth]{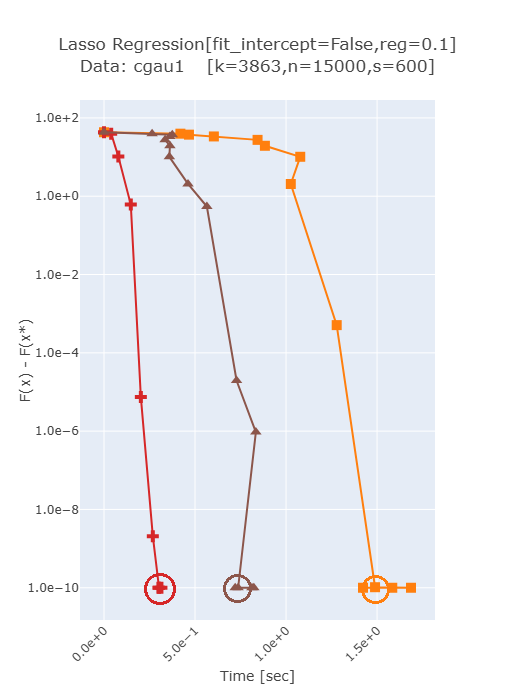}
    \includegraphics[width=.32\columnwidth]{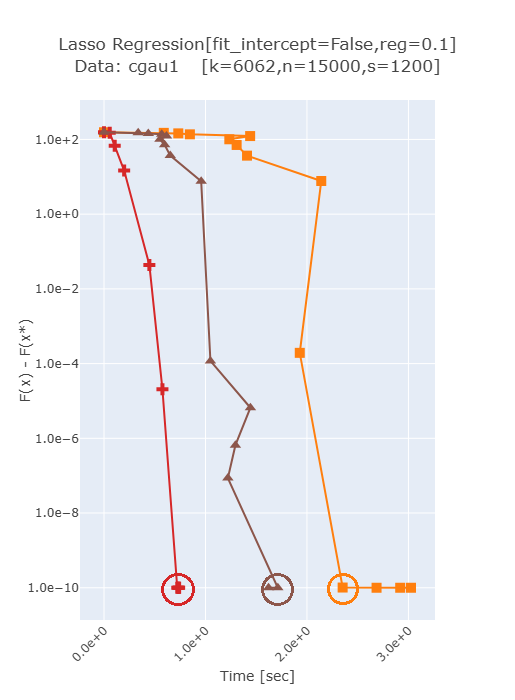}\\
    \includegraphics[width=.32\columnwidth]{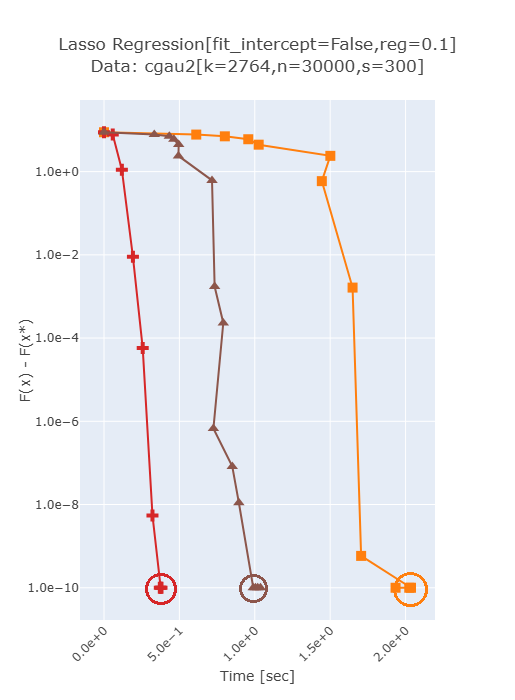}
    \includegraphics[width=.32\columnwidth]{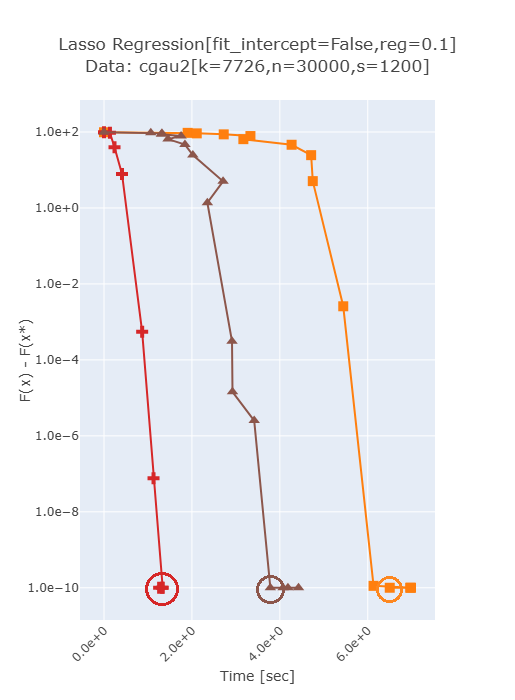}
    \includegraphics[width=.32\columnwidth]{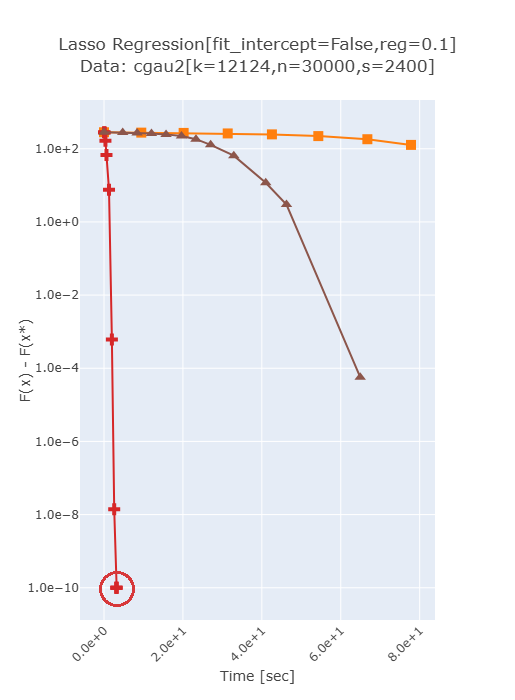}\\
    \caption{Plots of $F(\mat{x}_r) - F(\mat{x}_*)$ against running time.}
    \label{fig:sota_showcase_12}
\end{figure}

\begin{figure}
    \centering
    \includegraphics[width=0.35\columnwidth]{Styles/sota/Legend.png}\\
    \includegraphics[width=.32\columnwidth]{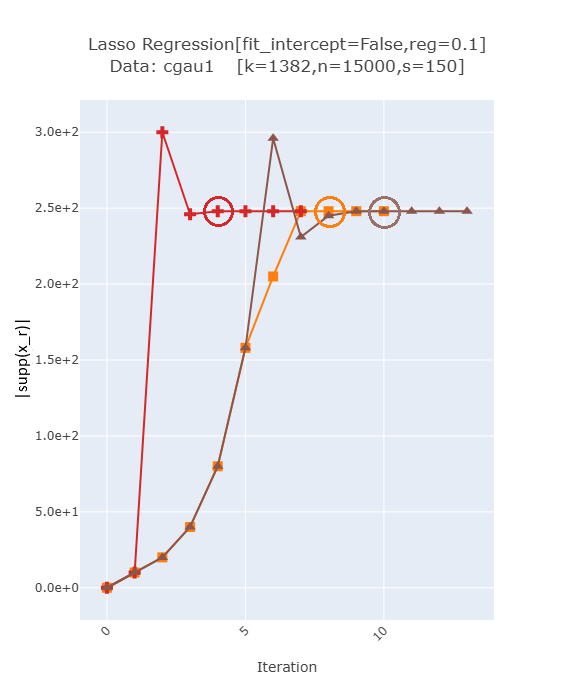}
    \includegraphics[width=.32\columnwidth]{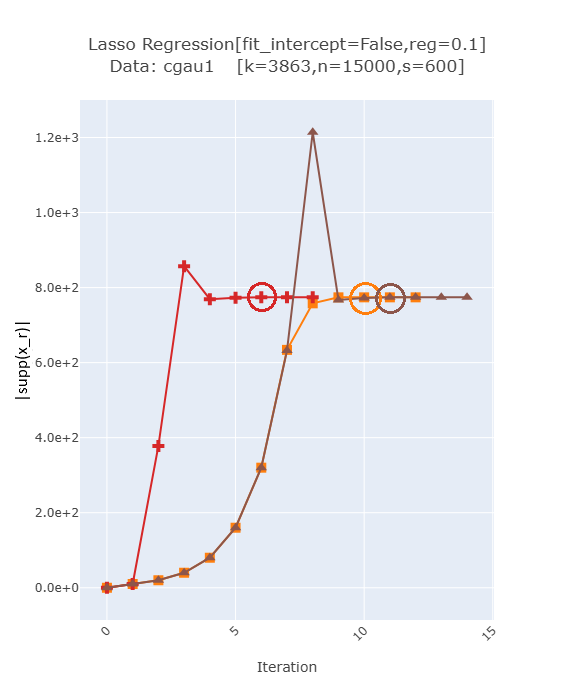}
    \includegraphics[width=.32\columnwidth]{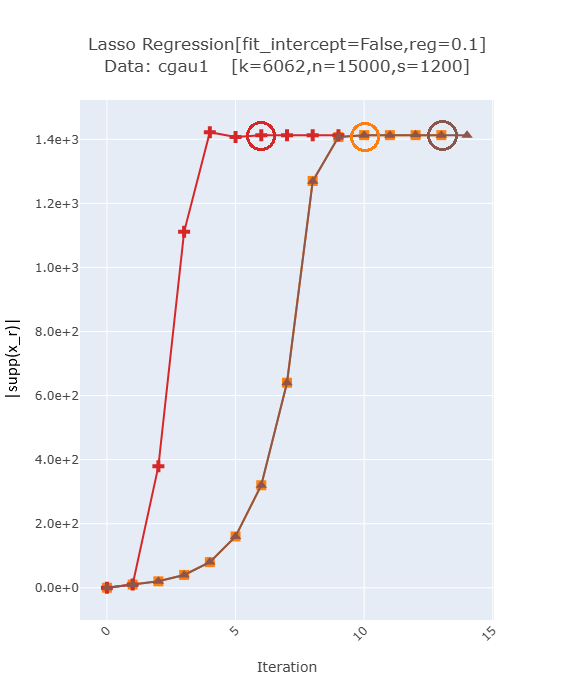}\\
    \includegraphics[width=.32\columnwidth]{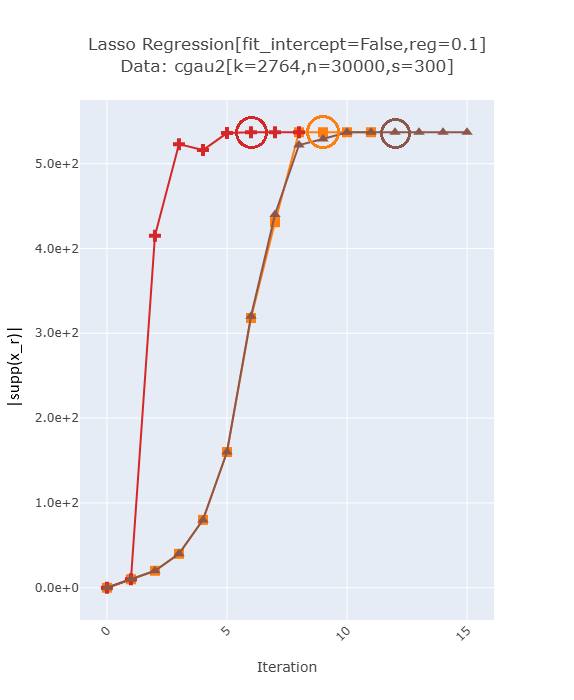}
    \includegraphics[width=.32\columnwidth]{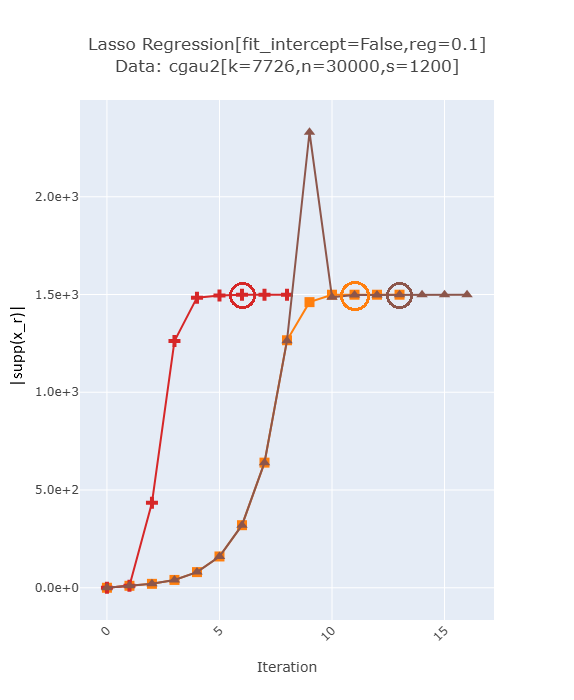}
    \includegraphics[width=.32\columnwidth]{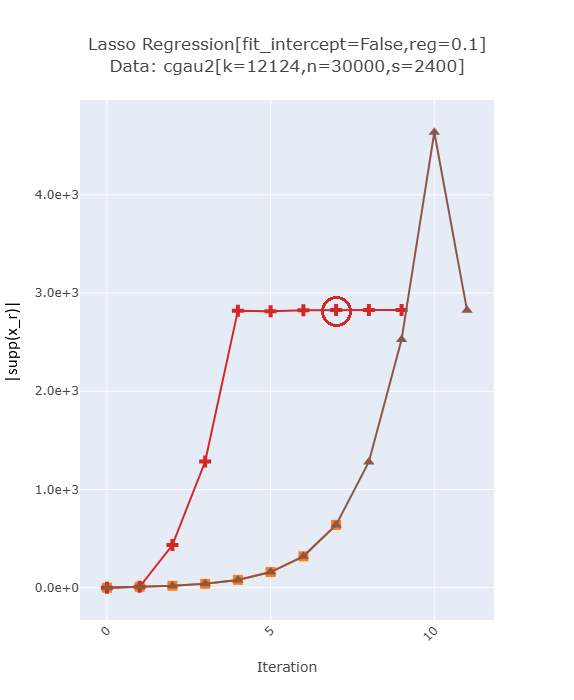}\\
    \includegraphics[width=.32\columnwidth]{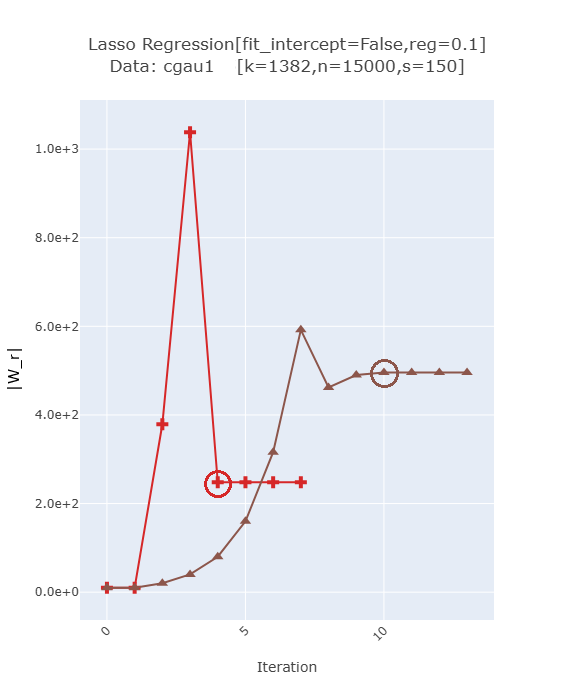}
    \includegraphics[width=.32\columnwidth]{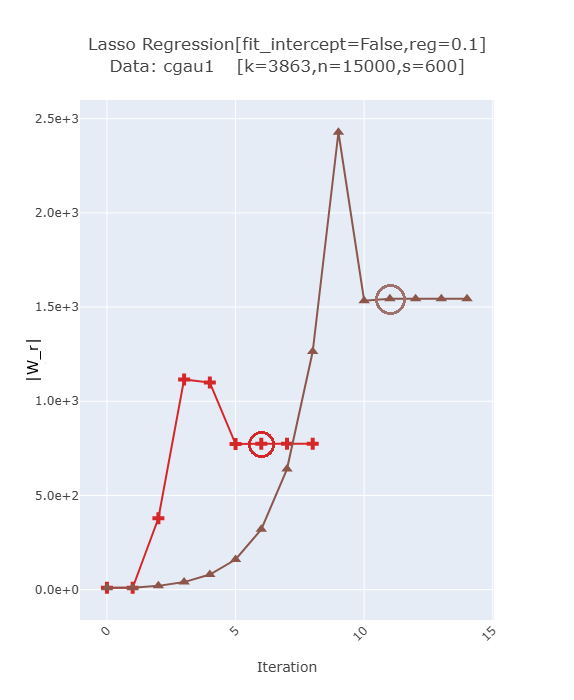}
    \includegraphics[width=.32\columnwidth]{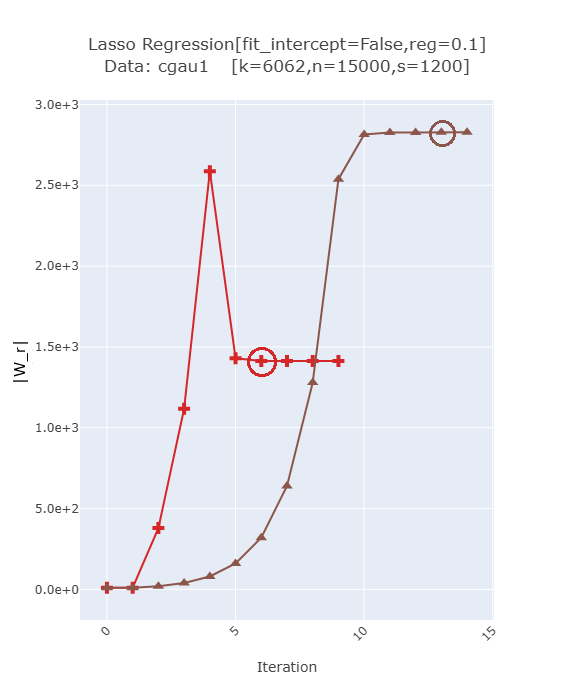}\\
    \includegraphics[width=.32\columnwidth]{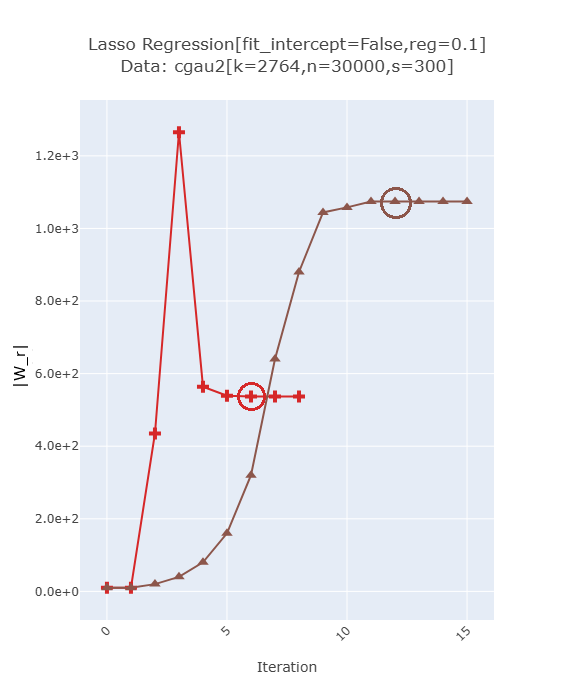}
    \includegraphics[width=.32\columnwidth]{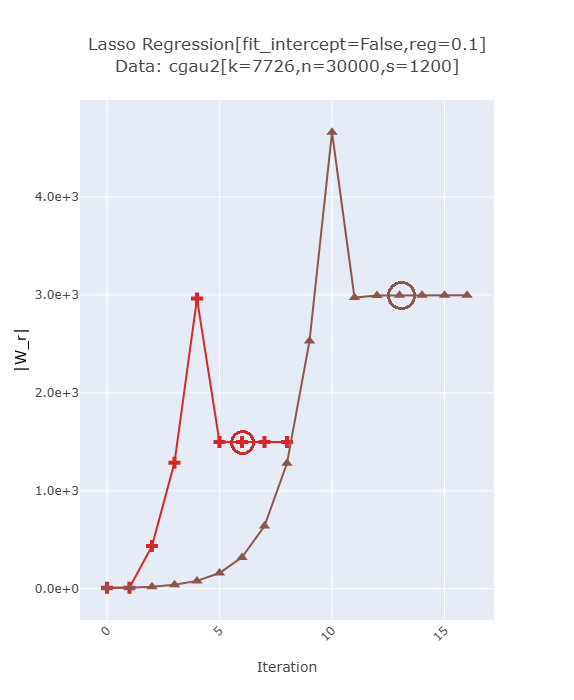}
    \includegraphics[width=.32\columnwidth]{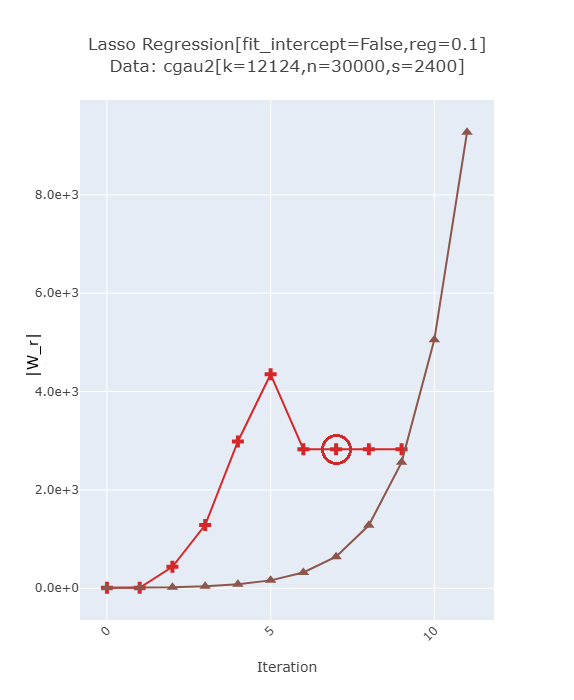}\\
    \caption{The top two rows show the plots of support set sizes against iteration.  The bottom two rows show the plots of working set sizes against iteration.}
    \label{fig:sota_supp_wss_12}
\end{figure}

We assume no knowledge of $s$.  As in Skglm and Celer~\cite{skglm,celer2}, DWS starts with a working set of size $p_0 = 10$ ($|E_0|$ is typically larger than 10).  Figure~\ref{fig:sota_showcase_12} shows the running times for some random inputs for $n \in \{15000,30000\}$. Skglm and Celer timed out in some runs; in those cases, no data point of their plots is circled (which indicates termination).  When Skglm and Celer did not time out, DWS is at least $1.91\times$ faster than Skglm and at least $3.0\times$ faster than Celer.  The top two rows in Figure~\ref{fig:sota_supp_wss_12} show the plots of the support set sizes.  The three methods give the same final support set size which is about 38\% larger than $s$ on average.
%

Since Skglm is more efficient than Celer, we will focus on comparing DWS with Skglm.  There are two main reasons for the speedup of DWS over Skglm.
Refer to the bottom two rows in Figure~\ref{fig:sota_supp_wss_12}. 
First, the working set size in DWS increases faster than in Skglm which yields a faster convergence.  Second, although the working set size in DWS may increase to much larger than the final support set size near the end of the computation, it is promptly reduced in the next iteration and kept smaller afterward.  In contrast, Skglm sustains a much larger working set (roughly twice as large) over multiple iterations near the end of the computation, which makes these iterations run significantly slower.  
Figures~\ref{fig:sota_showcase_34} and~\ref{fig:sota_supp_wss_34} 
show similar trends in the experimental results for some random inputs for $n \in \{45000,60000\}$.  

\begin{figure}[]
	\centering
	\includegraphics[width=0.35\columnwidth]{Styles/sota/Legend.png}\\
	\includegraphics[width=.32\columnwidth]{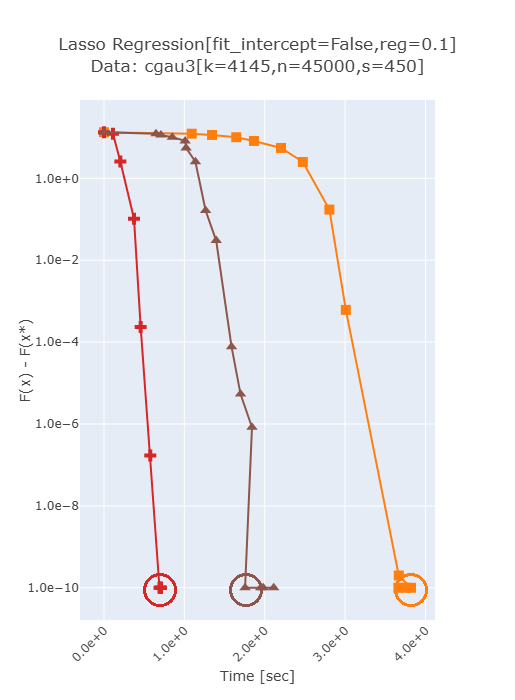}
	\includegraphics[width=.32\columnwidth]{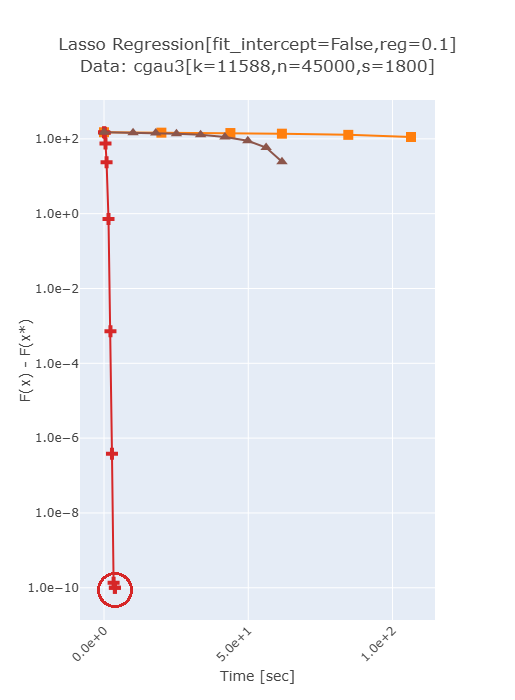}
	\includegraphics[width=.32\columnwidth]{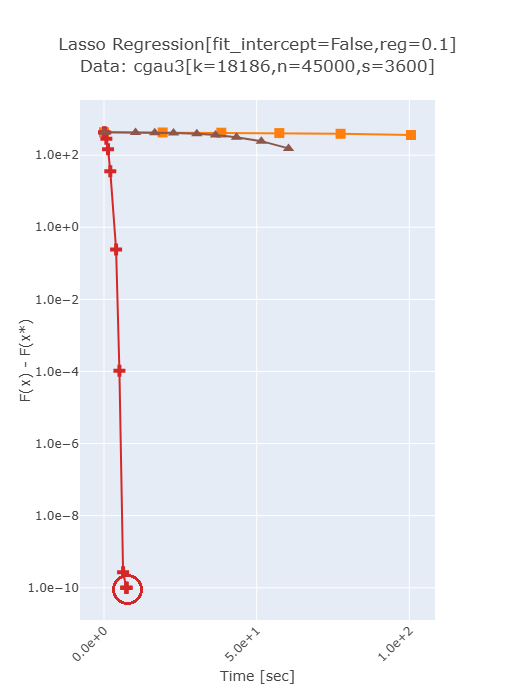}\\
	\includegraphics[width=.32\columnwidth]{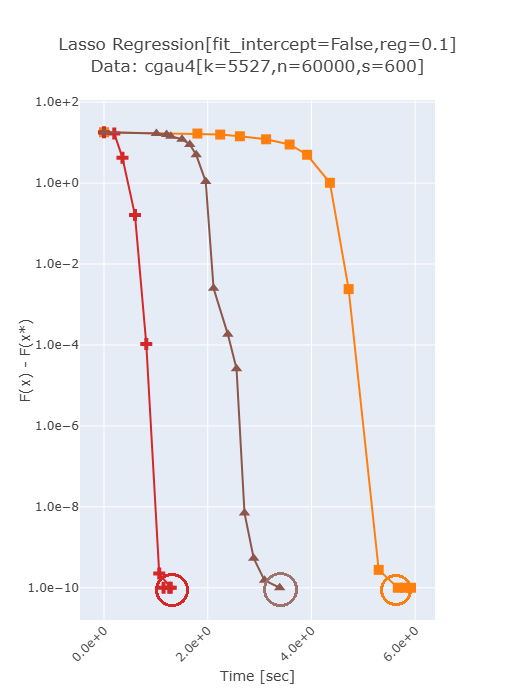}
	\includegraphics[width=.32\columnwidth]{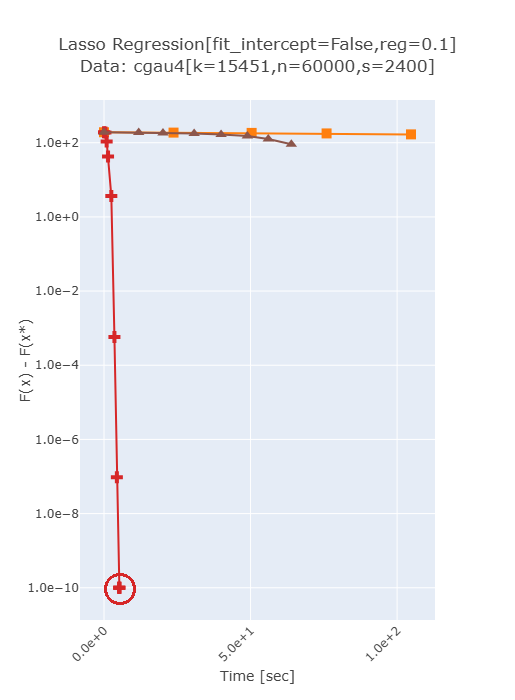}
	\includegraphics[width=.32\columnwidth]{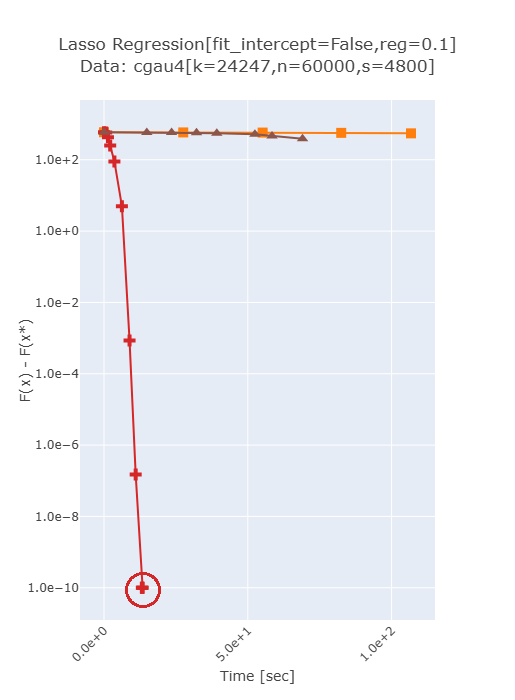}\\
	\caption{Plots of $F(\mat{x}_r) - F(\mat{x}_*)$ against running time for $n \in \{45000,60000\}$.}
	\label{fig:sota_showcase_34}
\end{figure}

\begin{figure}[]
	\centering
	\includegraphics[width=0.35\columnwidth]{Styles/sota/Legend.png}\\
	\includegraphics[width=.32\columnwidth]{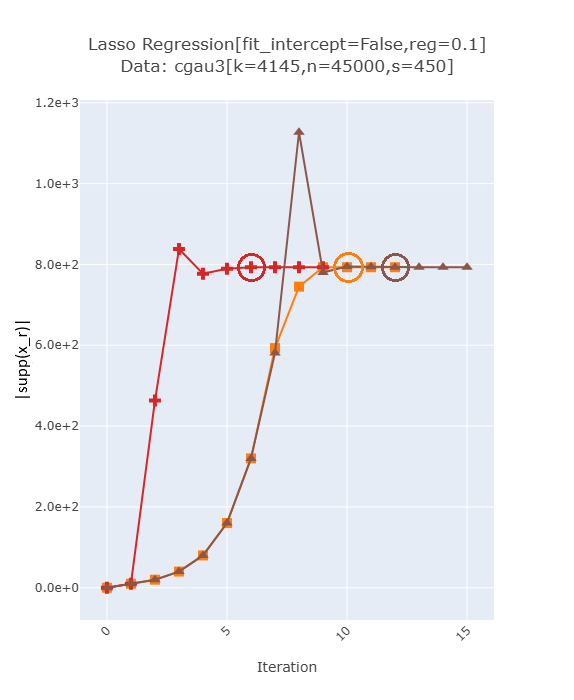}
	\includegraphics[width=.32\columnwidth]{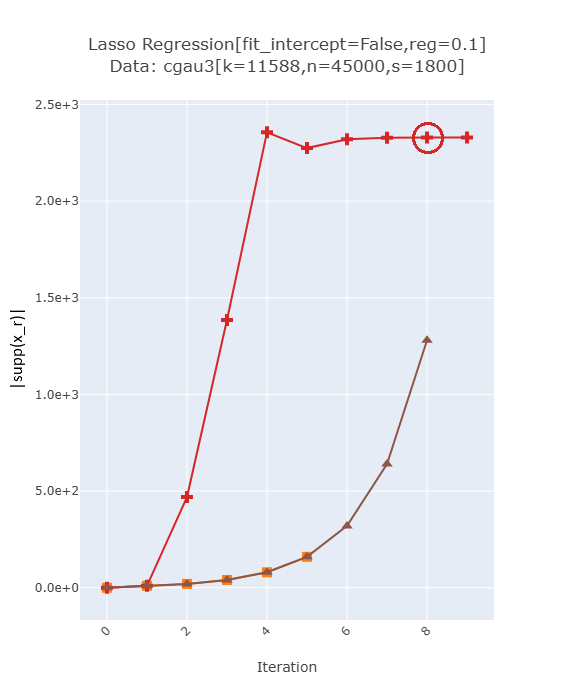}
	\includegraphics[width=.32\columnwidth]{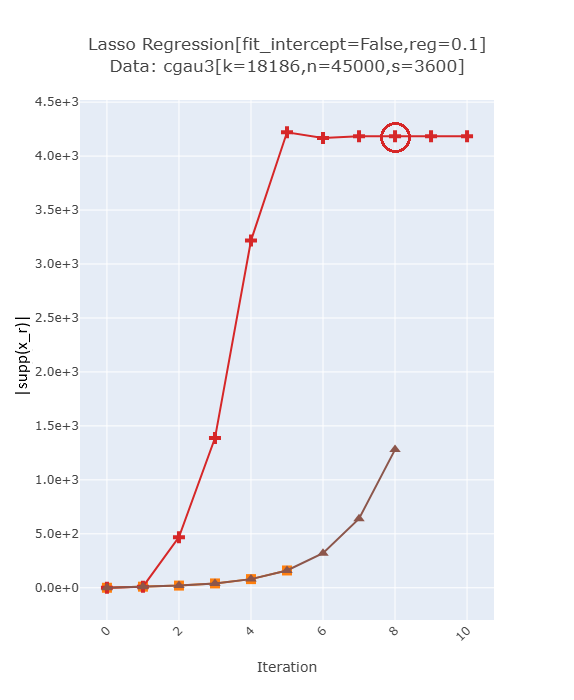}\\
	\includegraphics[width=.32\columnwidth]{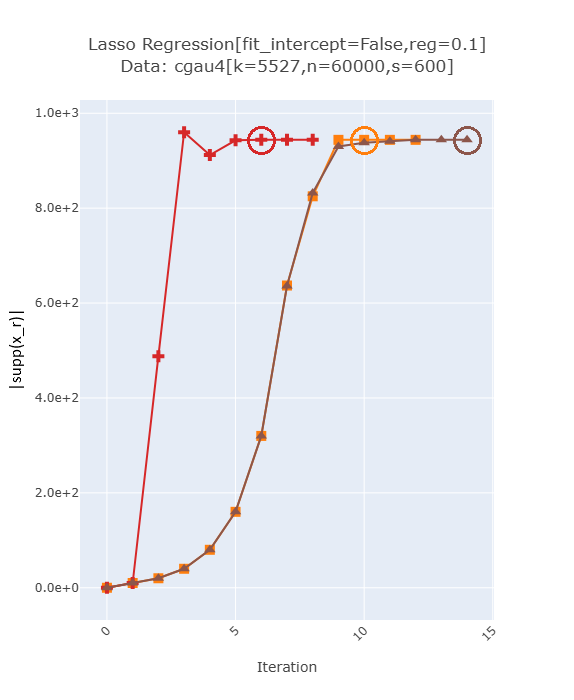}
	\includegraphics[width=.32\columnwidth]{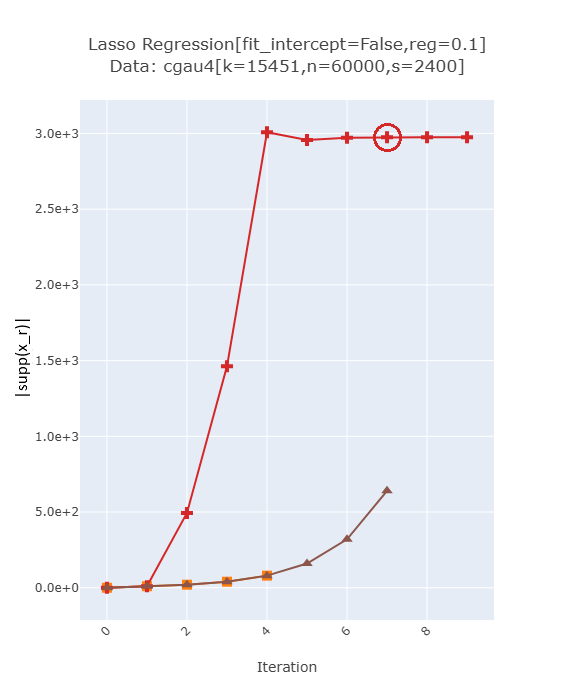}
	\includegraphics[width=.32\columnwidth]{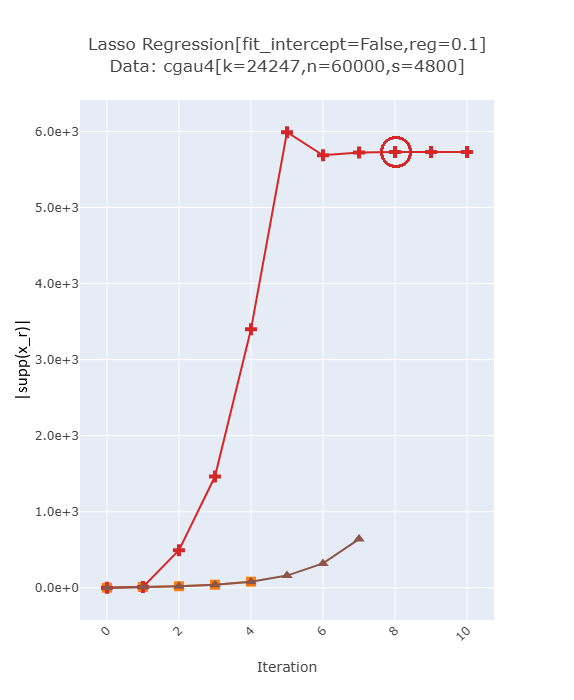}\\
	\includegraphics[width=.32\columnwidth]{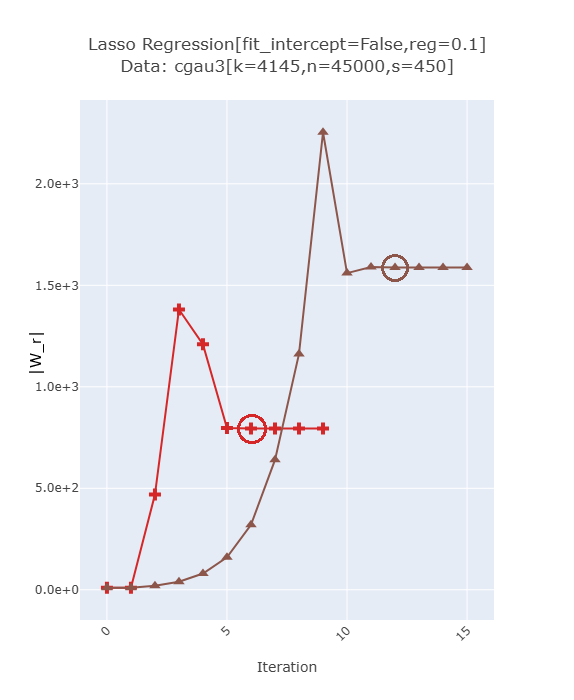}
	\includegraphics[width=.32\columnwidth]{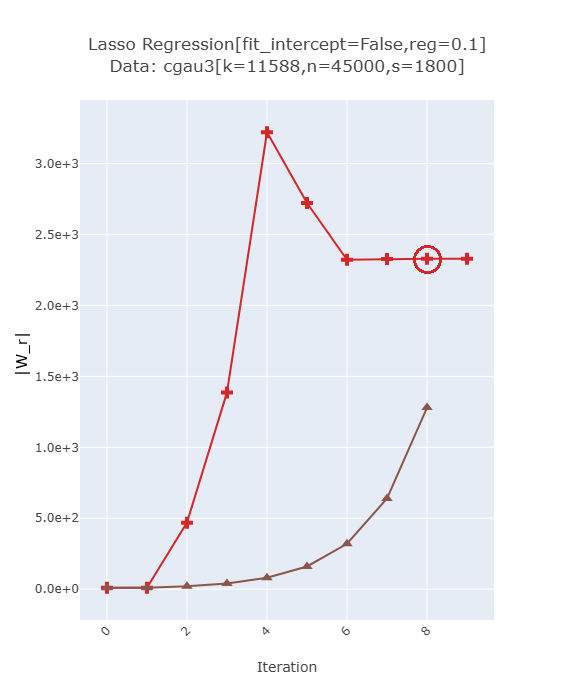}
	\includegraphics[width=.32\columnwidth]{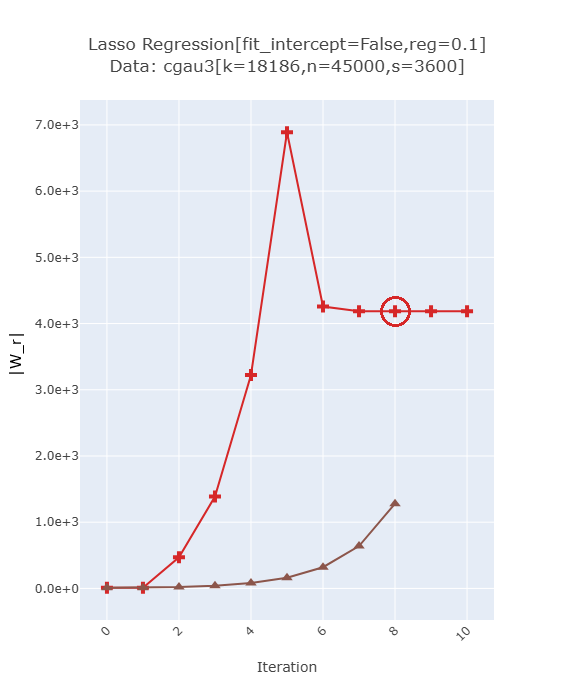}\\
	\includegraphics[width=.32\columnwidth]{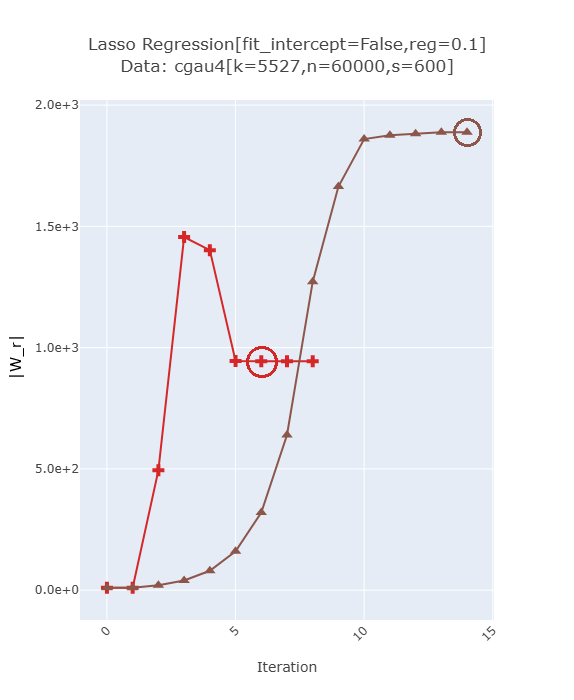}
	\includegraphics[width=.32\columnwidth]{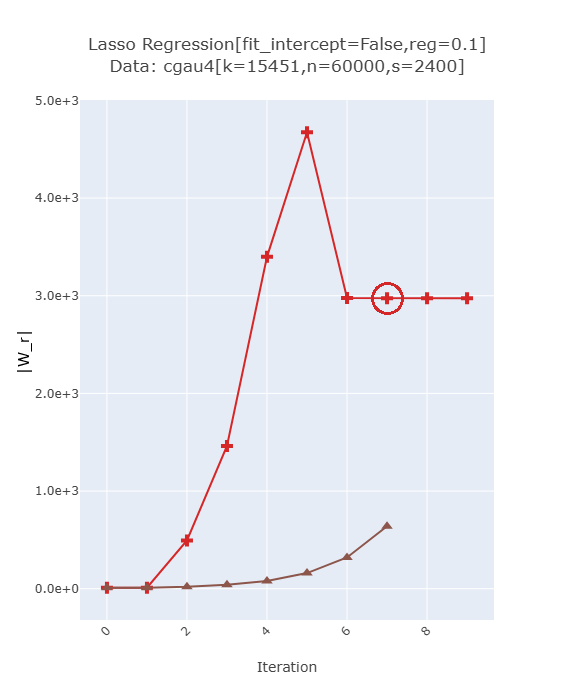}
	\includegraphics[width=.32\columnwidth]{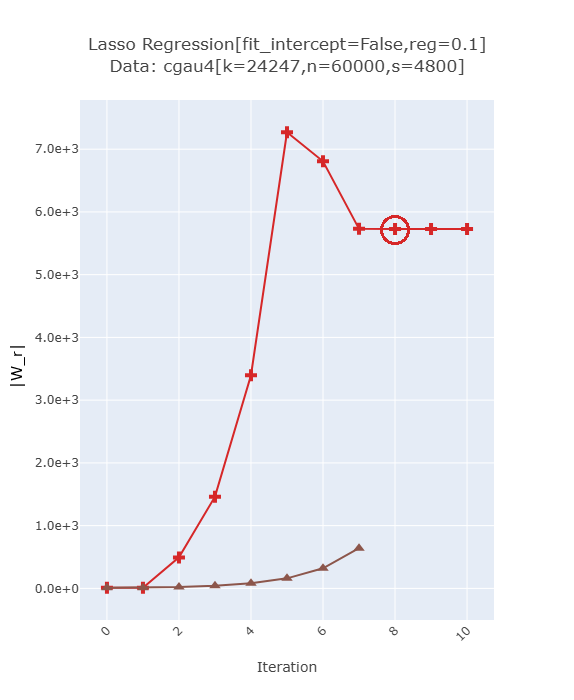}\\
	\caption{The top two rows show the plots of support set sizes against iteration for $n \in \{45000,60000\}$.  The bottom two rows show the plots of working set sizes against iteration for $n \in \{45000,60000\}$.}
	\label{fig:sota_supp_wss_34}
\end{figure}

\begin{figure}[]
	\centering
	\includegraphics[width=0.35\columnwidth]{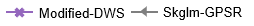}\\
	\includegraphics[width=.32\columnwidth]{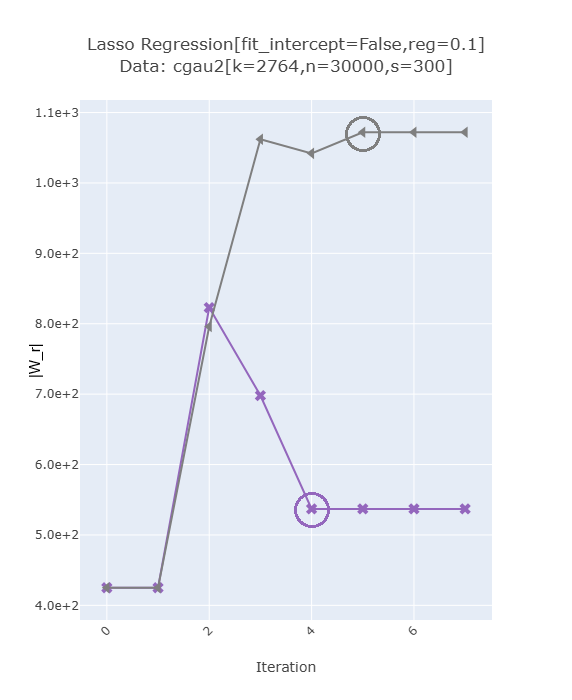}
	\includegraphics[width=.32\columnwidth]{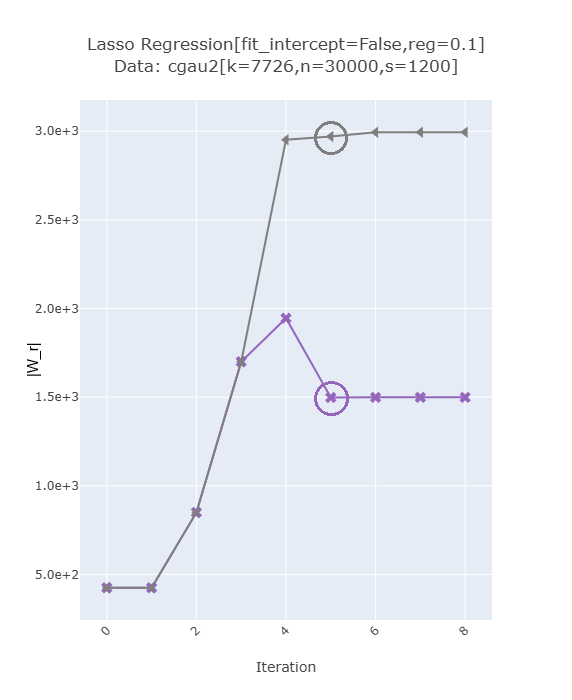}
	\includegraphics[width=.32\columnwidth]{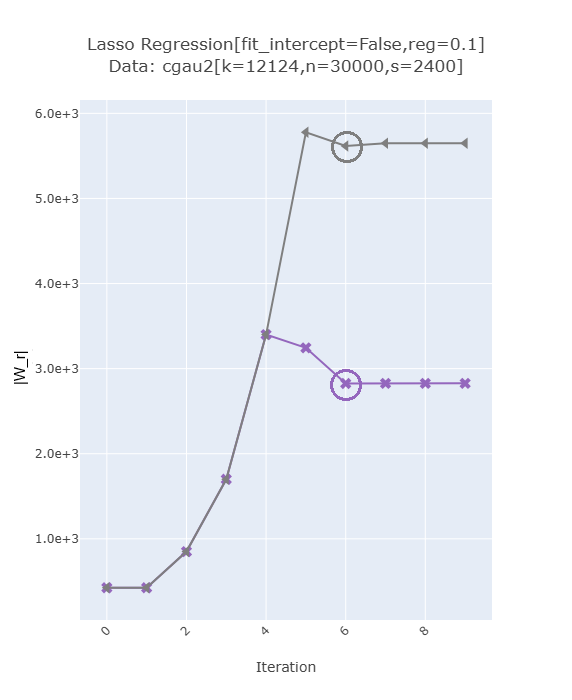}\\
	\includegraphics[width=.32\columnwidth]{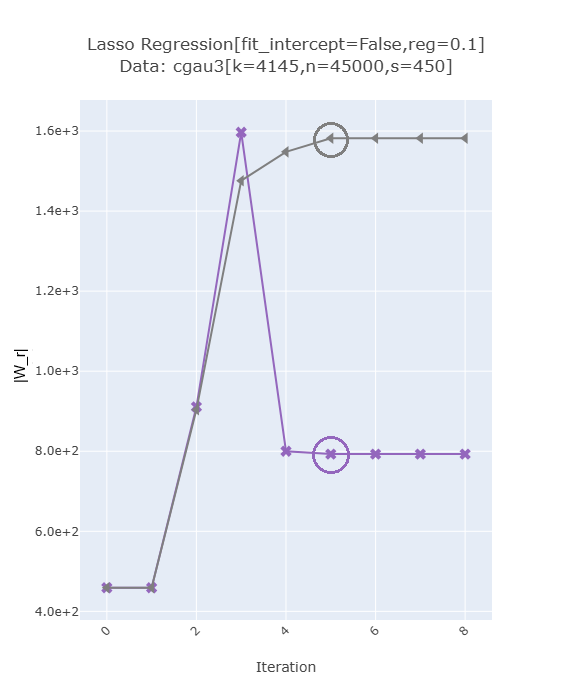}
	\includegraphics[width=.32\columnwidth]{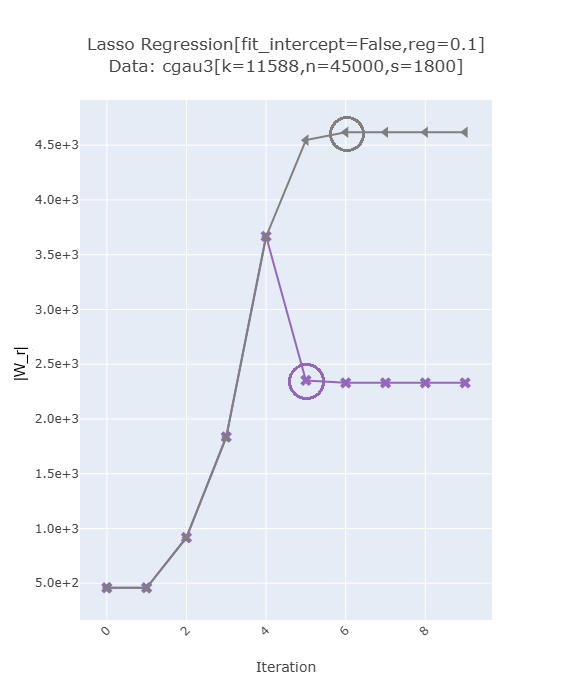}
	\includegraphics[width=.32\columnwidth]{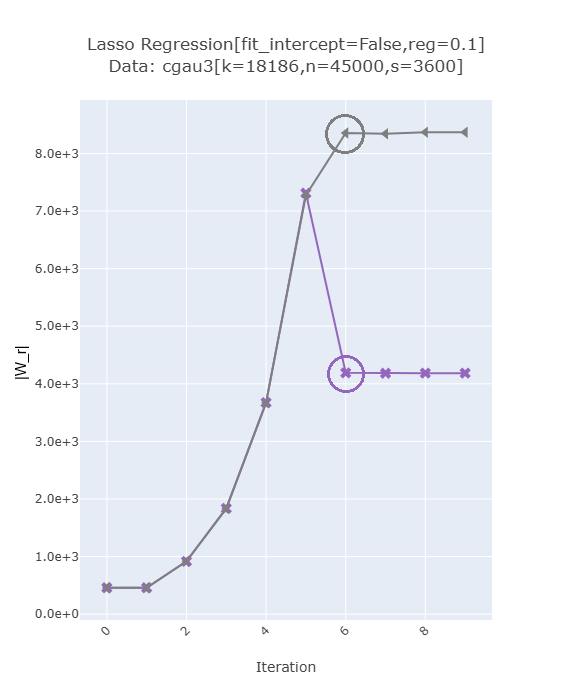}\\
	\includegraphics[width=.32\columnwidth]{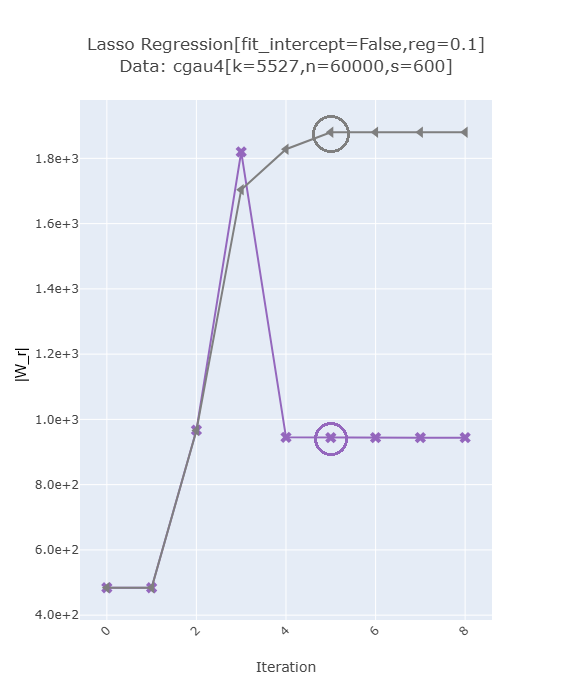}
	\includegraphics[width=.32\columnwidth]{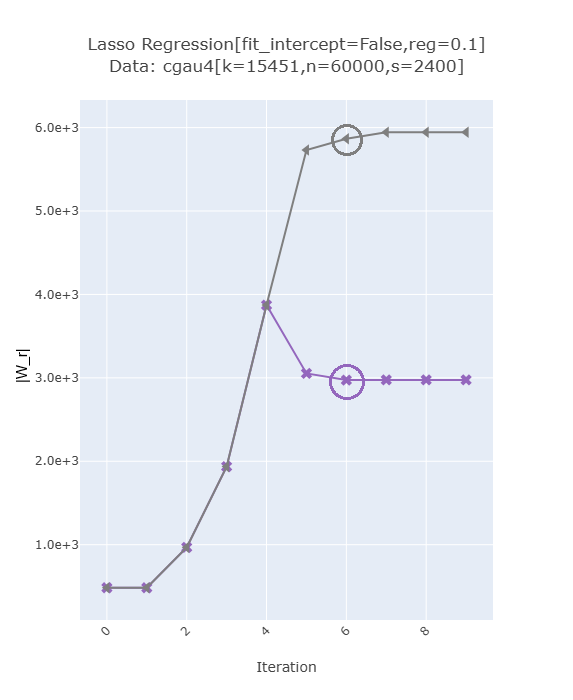}
	\includegraphics[width=.32\columnwidth]{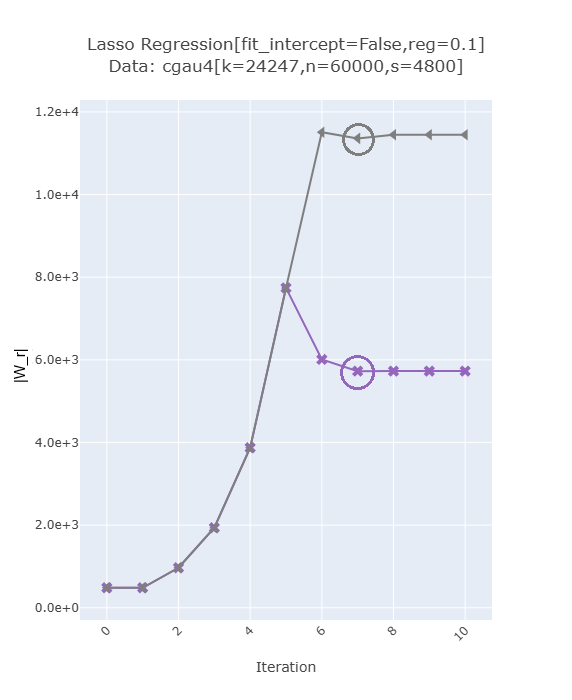}\\
	\caption{Plots of working set sizes against iteration. }
	\label{fig:nz0_wss_1234}
\end{figure}

\begin{figure}
	\centering
	\includegraphics[width=.32\columnwidth]{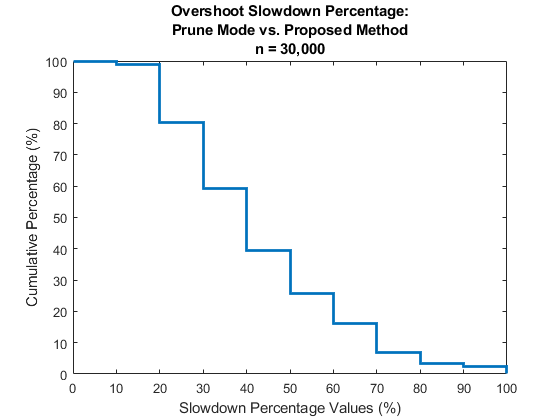}
	\includegraphics[width=.32\columnwidth]{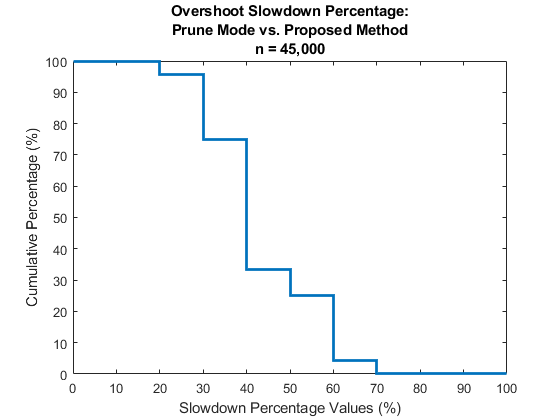}
	\includegraphics[width=.32\columnwidth]{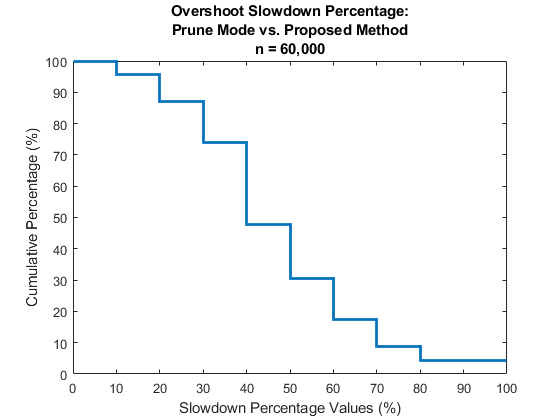}\\[4mm]
	\caption{
		The vertical axes show the cumulative percentages of test cases.  The horizontal axes show the slowdown percentage defined as
		$((\text{runtime of Skglm-GPSR} - \text{runtime of modified-method})/\text{runtime of modified-method}) \times 100\%$.  }
	\label{fig:hsovershoot}
\end{figure}

How important is the ability of DWS to scale back the working set size?  We study this question as follows.  First, we implemented the doubling method of Skglm with GPSR as the solver so that the comparison is on the same footing.  We refer to the resulting variant as Skglm-GPSR.  Second, we set $p_0 = \tau$ for both Skglm-GPSR and DWS, and we pretend that $|\supp(\mat{x}_0)| = \tau$ in DWS although $\mat{x}_0$ is still the zero vector.  We refer to the resulting variant as modified-DWS.  Figure~\ref{fig:nz0_wss_1234} shows that Skglm-GPSR and modified-DWS have a nearly common working set size (around $2^{r-1}\tau$ in the $r$-th iteration) until the computation is near the end.  
Therefore, there is no issue with the working set size increasing faster in modified-DWS or Skglm-GPSR.  Near the end of the computation, the working set size is scaled back in modified-DWS, whereas the working set size in Skglm-GPSR is roughly twice as large.  We tried 269 cases for $n=30000$, 74 cases for $n = 45000$, and 74 cases for $n = 60000$. 
Refer to Figure~\ref{fig:hsovershoot}.
Skglm-GPSR is slower by 20\% or more in at least 80\% of the cases, by 30\% or more in at least 59\% of the cases, and by 40\% or more in at  least 30\% of the cases.  The ability of DWS to scale back the working set size improves efficiency significantly.

\section{Theoretical analysis}
\label{sec:analysis}

 Given any function $\varphi: \real^n \rightarrow \real$, a vector $\xi \in \mathbb{R}^n$ such that $\varphi(\mat y) \ge \varphi(\mat x) + \langle \xi, \mat y - \mat x \rangle$ for all $\mat y \in \mathbb{R}^\nu$ is called a \emph{subgradient} of $\varphi$ at $\mat{x}$~\cite{nonl}.  For a smooth function, the subgradient at a point is unique and equal to the gradient, which is denoted by $\nabla \varphi$.  There are multiple subgradients at a non-smooth point $\mat{x}$; we use $\partial \varphi(\mat{x})$ to denote the set of all subgradients of $\varphi$ at $\mat{x}$.  

We have $\partial F(\mat{x}) = \{ \nabla f(\mat{x}) + \xi : \xi \in \partial g(\mat{x})\}$.  A vector $\mat{n}$ is a \emph{descent direction} from $\mat{x}$ if and only if $\sup\{ \langle \gamma, \mat{n} \rangle : \gamma \in \partial F(\mat{x}) \} < 0$.  
The function $F$ is minimized at $\mat{x}$ if and only if $\partial F(\mat{x})$ contains the zero vector~\cite{nonl}.

Lemma~\ref{lem:32} proves the termination condition of $E_r = \emptyset$.  Theorem~\ref{thm:tau-sum} analyzes the sizes of the working and support sets.  

\newpage

\begin{lemma}
\label{g_subgradient-t}
\hspace*{10pt}
\begin{enumerate}[{\em (i)}]
\item Take any $\mat{x} \in \real^n$.
\begin{enumerate}[{\em (a)}]
\item $\forall\, \xi \in \partial g(\mat x), \, \forall\, i \in [n]$, if $(\mat x)_i = 0$, then $(\xi)_i \in [-\eta,\eta]$; otherwise, $(\xi)_i = \mathrm{sign}((\mat{x})_i) \cdot \eta$.

\item Every vector that satisfies the conditions in {\em (i)(a)} is a subgradient in $\partial g(\mat x)$.

\end{enumerate}

\item $\forall\, i \in E_r, \, \forall\, \gamma \in \partial F(\mat{x}_r)$, $\mathrm{sign}((\gamma)_i) = \mathrm{sign}((\nabla f(\mat{x}_r))_i) \in \{-1,1\}$.

\item $E_r = \bigl\{i \in [n] : \forall\, \gamma \in \partial F(\mat{x}_r), (\gamma)_i \not= 0 \bigr\}$.

\end{enumerate}
\end{lemma}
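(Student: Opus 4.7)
The plan is to establish the three parts in order, using standard subdifferential calculus on the absolute value function together with the optimality of $\mat{x}_r$ for the sub-problem on $W_r$.

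For part (i), I would exploit the separability $g(\mat{x}) = \sum_{i \in [n]} \eta\,|(\mat{x})_i|$. To prove (i)(a), I would apply the subgradient inequality $g(\mat{y}) \geq g(\mat{x}) + \langle \xi, \mat{y} - \mat{x}\rangle$ with $\mat{y} = \mat{x} + t\mat{e}_i$ for $t \in \real$, which reduces to $\eta\bigl(|(\mat{x})_i + t| - |(\mat{x})_i|\bigr) \geq t(\xi)_i$; examining $t \to 0^+$ and $t \to 0^-$ yields the claimed coordinate-wise bounds in both the zero and nonzero cases. For (i)(b), I would verify the subgradient inequality directly, and since $|{\cdot}|$ is convex it suffices to check the inequality coordinate-wise. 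When $(\mat{x})_i = 0$, the inequality becomes $\eta|(\mat{y})_i| \geq (\xi)_i (\mat{y})_i$, which follows from $|(\xi)_i| \leq \eta$; when $(\mat{x})_i \neq 0$, the terms involving $(\mat{x})_i$ cancel and what remains, $\eta|(\mat{y})_i| \geq \eta\,\mathrm{sign}((\mat{x})_i)(\mat{y})_i$, is trivial.

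Part (ii) is a short consequence of (i)(a). Any $\gamma \in \partial F(\mat{x}_r)$ decomposes as $\gamma = \nabla f(\mat{x}_r) + \xi$ with $\xi \in \partial g(\mat{x}_r)$, so $|(\xi)_i| \leq \eta$ by (i)(a). Since $i \in E_r$ means $|(\nabla f(\mat{x}_r))_i| > \eta$, the sign of $(\gamma)_i = (\nabla f(\mat{x}_r))_i + (\xi)_i$ is forced to agree with the sign of $(\nabla f(\mat{x}_r))_i$ and is nonzero.

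For part (iii), one inclusion is immediate from (ii). For the reverse inclusion, I would argue by contrapositive: assume $i \notin E_r$, so $|(\nabla f(\mat{x}_r))_i| \leq \eta$, and construct some $\gamma \in \partial F(\mat{x}_r)$ with $(\gamma)_i = 0$. The easy case is $(\mat{x}_r)_i = 0$: then (i)(b) lets me pick $(\xi)_i = -(\nabla f(\mat{x}_r))_i \in [-\eta, \eta]$ independently of other coordinates, yielding $(\gamma)_i = 0$. The main obstacle is the case $(\mat{x}_r)_i \neq 0$, where $(\xi)_i$ is forced by (i)(a) to equal $\eta\,\mathrm{sign}((\mat{x}_r)_i)$, leaving no flexibility to cancel $(\nabla f(\mat{x}_r))_i$. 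Here I would invoke the optimality of $\mat{x}_r$: since $i \in \supp(\mat{x}_r) \subseteq W_r$ and $\mat{x}_r$ minimizes $F$ over vectors supported on $W_r$, the $i$-th coordinate subdifferential of $F$ at $\mat{x}_r$ must contain $0$; because $g$ is differentiable along direction $i$ at $\mat{x}_r$, that subdifferential is the singleton $\{(\nabla f(\mat{x}_r))_i + \eta\,\mathrm{sign}((\mat{x}_r)_i)\}$, forcing $(\nabla f(\mat{x}_r))_i = -\eta\,\mathrm{sign}((\mat{x}_r)_i)$. Hence $(\gamma)_i = 0$ for every $\gamma \in \partial F(\mat{x}_r)$, completing the argument.
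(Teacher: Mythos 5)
Your proposal is correct and follows essentially the same route as the paper: (i)(a) by restricting the subgradient inequality to single-coordinate perturbations, (i)(b) by coordinate-wise verification using separability, (ii) by combining the bound $|(\xi)_i|\le\eta$ with $|(\nabla f(\mat{x}_r))_i|>\eta$, and (iii) by using the flexibility of $(\xi)_i$ at zero coordinates together with the optimality of $\mat{x}_r$ over $W_r$ for the coordinates in $\supp(\mat{x}_r)$. The only cosmetic differences are that you take limits $t\to 0^{\pm}$ where the paper picks explicit comparison points, and you spell out the first-order optimality condition on $W_r$ that the paper invokes in one line.
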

\begin{proof}
	Consider (i)(a).  Take any $\xi \in \partial g(\mat{x})$.
	By the definition of a subgradient, for all $\mat y \in\real^n$, $g(\mat y)-g(\mat x) \ge   \langle \xi, \mat y - \mat x \rangle$, which is equivalent to 
	\begin{equation}
		\eta \sum_{i=1}^n |(\mat y)_i| - \eta \sum_{i=1}^n |(\mat x)_i| \ge \sum_{i=1}^n  (\xi)_i \cdot (\mat y - \mat x)_i.
		\label{eq:g-1}
	\end{equation}
	
	Take any index $i \in [n]$.  Let $Y_i = \bigl\{ \mat{y} \in \real^n : \forall j \not= i , (\mat y)_j=(\mat x)_j \bigr\}$.  Clearly, for every $\mat{y} \in Y_i$, applying \eqref{eq:g-1} to $\mat{y}$ gives 
	\[
	\eta |(\mat y)_i| - \eta |(\mat x)_i| \ge (\xi)_i \cdot (\mat y - \mat x)_i.
	\]
	
	\begin{itemize}
		
		\item Case 1:~Suppose that $(\mat x)_i \geq 0$.  Choose $\mat{y} \in Y_i$ such that $(\mat y)_i > (\mat x)_i$. Then, $\eta(\mat y - \mat x)_i = \eta|(\mat y)_i|-\eta|(\mat x)_i| \ge (\xi)_i \cdot  (\mat y - \mat x)_i$.  Dividing both sides by $(\mat y- \mat x)_i$ gives $(\xi)_i\le \eta$. 
		
		\item Case 2:~Suppose that $(\mat x)_i \leq 0$.  Choose $\mat{y} \in Y_i$ such that $(\mat y)_i < (\mat x)_i$.  Then, $-\eta(\mat y - \mat x)_i =\eta|(\mat y)_i|-\eta|(\mat x)_i| \ge (\xi)_i \cdot  (\mat y - \mat x)_i$.  Dividing both sides by $(\mat y- \mat x)_i$ gives $(\xi)_i\ge -\eta$. 
		
	\end{itemize}
	
	Combining cases~1 and~2 gives $(\xi)_i \in [-\eta,\eta]$ when $(\mat{x})_i = 0$.  Suppose that $(\mat{x})_i > 0$.  We already have $(\xi)_i \leq \eta$ by case~1. Choose $\mat{y} \in Y_i$ such that $0 < (\mat{y})_i < (\mat x)_i$.  Then, $\eta(\mat y - \mat x)_i = \eta|(\mat y)_i|-\eta|(\mat x)_i|\ge (\xi)_i \cdot  (\mat y - \mat x)_i$.  Dividing both sides by $(\mat{y}-\mat{x})_i$ gives $(\xi)_i \geq \eta$.  As a result, $(\xi)_i= \eta$.  Suppose that $(\mat x)_i < 0$.  We already have $(\xi)_i \geq -\eta$ by case~2.  Choose $\mat{y} \in Y_i$ such that $(\mat{x}_r)_i < (\mat{y})_i < 0$.  Then, $-\eta(\mat{y} - \mat{x}_r)_i = \eta|(\mat y)_i|-\eta|(\mat x)_i| \ge (\xi)_i \cdot  (\mat y - \mat x)_i$.  Dividing both sides by $(\mat y- \mat x)_i$  gives $(\xi)_i \leq -\eta$.  In all, $(\xi)_i= -\eta$.  This completes the proof of (i)(a).
	
	Consider (i)(b).  Take any vector $\xi \in \real^n$ that satisfies the conditions in (i)(a).  Under these conditions, it is easy to verify that for every $\mat{y} \in \real^n$ and every $i \in [n]$, $\eta |(\mat y)_i| - \eta |(\mat x)_i| \ge (\xi)_i \cdot (\mat y - \mat x)_i$.  Then, for all $\mat{y} \in \real^n$, 
	\begin{align*}
		g(\mat{y}) - g(\mat{x}) = \eta \sum_{i=1}^n |(\mat y)_i| - \eta \sum_{i=1}^n |(\mat x)_i| \ge \sum_{i=1}^n (\xi)_i \cdot (\mat y - \mat x)_i = \langle \xi, \mat y - \mat x \rangle,
	\end{align*}
	which implies that $\xi$ is a subgradient in $\partial g(\mat{x})$.
	
	Consider (ii).  Take any $i \in E_r$.  By definition, $|(\nabla f(\mat{x}_r))_i| > \eta$. So $(\nabla f(\mat{x}_r))_i \not= 0$. It also follows from (i) that for every $\xi \in \partial g(\mat{x}_r)$, if $(\nabla f(\mat{x}_r))_i > 0$, then $(\nabla f(\mat{x}_r))_i + (\xi)_i > 0$, and if $(\nabla f(\mat{x}_r))_i < 0$, then $(\nabla f(\mat{x}_r))_i + (\xi)_i < 0$.  Note that $\partial F(\mat{x}_r) = \{ \nabla f(\mat{x}_r) + \xi : \xi \in \partial g(\mat{x}_r)\}$.  In other words, for every $i \in E_r$ and every $\gamma \in \partial F(\mat{x}_r)$, $\mathrm{sgn}((\gamma)_i) = \mathrm{sgn}((\nabla f(\mat{x}))_i) \in \{-1,1\}$.  This proves (ii).
	
	Consider (iii).   For all $i \not\in [n] \setminus (W_r \cup E_r)$, we have $(\mat{x}_r)_i = 0$ and $|(\nabla f(\mat{x}_r))_i| \leq \eta$.  By (i)(b), for all $i \not\in [n] \setminus (W_r \cup E_r)$, every value in $[-\eta,\eta]$ is a legitimate $i$-th coordinate for a subgradient of $g$ at $\mat{x}_r$, which includes $-(\nabla f(\mat{x}_r))_i$.  Hence, there exists $\gamma \in \partial F(\mat{x}_r)$ such that $(\gamma)_i = 0$ for all $i \not\in [n] \setminus (W_r \cup E_r)$.  Since $\mat{x}_r$ is the optimal solution with respect to the working set $W_r$, there exists $\gamma \in \partial F(\mat{x}_r)$ such that $(\gamma)_i = 0$ for all $i \in W_r$.  We conclude that for every $i \in [n] \setminus E_r$, there exists $\gamma \in \partial F(\mat{x}_r)$ such that $(\gamma)_i = 0$.  By the result in (ii), for every $i \in E_r$ and every $\gamma \in \partial F(\mat{x}_r)$, $(\gamma)_i \not = 0$.  This proves the correctness of (iii), i.e., $i \in E_r$ if and only if $(\gamma)_i \not= 0$ for all $\gamma \in \partial F(\mat{x}_r)$.\hfill\qed
\end{proof}

\begin{lemma}
\label{lem:32}
Let $\mat{e}_i$ be the unit vector in the direction of the positive $i$-th axis.  Every unit conical combination of $\{-\mathrm{sign}((\nabla f(\mat x_r))_i) \cdot \mat{e}_i : i \in E_r\}$ is a descent direction from $\mat{x}_r$.  If $E_r = \emptyset$, then $\mat{x}_r$ is the global minimum.
\end{lemma}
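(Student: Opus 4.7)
I plan to handle the two claims separately; both reduce to careful bookkeeping against Lemma~\ref{g_subgradient-t}. For the descent-direction claim, I fix an arbitrary unit conical combination $\mat{n} = \sum_{i \in E_r} c_i \bigl(-\mathrm{sign}((\nabla f(\mat{x}_r))_i)\bigr) \mat{e}_i$ with $c_i \geq 0$ and at least one $c_i > 0$, and verify directly that $\sup_{\gamma \in \partial F(\mat{x}_r)} \langle \gamma, \mat{n}\rangle < 0$. For any $\gamma \in \partial F(\mat{x}_r)$, Lemma~\ref{g_subgradient-t}(ii) gives $\mathrm{sign}((\gamma)_i) = \mathrm{sign}((\nabla f(\mat{x}_r))_i) \in \{-1,1\}$ for every $i \in E_r$, so each term in the inner product collapses to $-c_i |(\gamma)_i|$. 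Writing $(\gamma)_i = (\nabla f(\mat{x}_r))_i + (\xi)_i$ with $\xi \in \partial g(\mat{x}_r)$, Lemma~\ref{g_subgradient-t}(i)(a) gives $(\xi)_i \in [-\eta,\eta]$, so $|(\gamma)_i| \geq |(\nabla f(\mat{x}_r))_i| - \eta$, which is strictly positive because $i \in E_r$. Summing yields
\[
\langle \gamma, \mat{n}\rangle \;\leq\; -\sum_{i \in E_r} c_i \bigl(|(\nabla f(\mat{x}_r))_i| - \eta\bigr) \;<\; 0,
\]
and since the right-hand side is independent of $\gamma$, the supremum over $\partial F(\mat{x}_r)$ is also strictly negative.

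For the case $E_r = \emptyset$, my plan is to exhibit the zero vector as an element of $\partial F(\mat{x}_r)$, which by the optimality characterization quoted before Lemma~\ref{lem:32} certifies $\mat{x}_r$ as a global minimum. I define $\xi \in \real^n$ coordinate-wise by $(\xi)_i = -(\nabla f(\mat{x}_r))_i$ and use Lemma~\ref{g_subgradient-t}(i)(b) to certify $\xi \in \partial g(\mat{x}_r)$, after which $\nabla f(\mat{x}_r) + \xi = \mat{0}$ finishes the proof. For $i \notin W_r$, I have $(\mat{x}_r)_i = 0$ and $|(\nabla f(\mat{x}_r))_i| \leq \eta$ (because $i \notin E_r$), so $(\xi)_i \in [-\eta,\eta]$ is admissible. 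For $i \in W_r$, the optimality of $\mat{x}_r$ on the restricted problem over $W_r$ supplies some $\gamma' \in \partial F(\mat{x}_r)$ with $(\gamma')_i = 0$ for all $i \in W_r$; the corresponding $\xi' \in \partial g(\mat{x}_r)$ therefore has $(\xi')_i = -(\nabla f(\mat{x}_r))_i = (\xi)_i$ on $W_r$, and Lemma~\ref{g_subgradient-t}(i)(a) applied to $\xi'$ automatically tells me that this value satisfies the coordinate-wise admissibility condition (whether $(\mat{x}_r)_i$ is zero or not). Hence $\xi$ satisfies the hypotheses of (i)(b) everywhere.

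The only delicate point is the global-minimum step, where the candidate subgradient $\xi$ must be pieced together from two different sources of information (optimality on $W_r$ inside the working set, and the defining bound of $E_r$ outside it). Lemma~\ref{g_subgradient-t}(i)(b) is essentially designed to let this coordinate-wise assembly succeed, so once both halves of $\xi$ are verified to lie in the correct coordinate intervals, no further work is needed. All remaining manipulations are direct substitutions into the definitions of subgradient, descent direction, and the set $E_r$.
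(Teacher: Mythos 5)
Your proposal is correct and follows essentially the same route as the paper: the descent-direction claim reduces via Lemma~\ref{g_subgradient-t}(i)(a) and (ii) to the uniform lower bound $|(\gamma)_i| \geq |(\nabla f(\mat{x}_r))_i| - \eta > 0$ on $E_r$ (the paper packages this as $\rho = \min_{i \in E_r}|(\nabla f(\mat{x}_r))_i| - \eta$ and notes some coefficient is at least $1/\sqrt{n}$, while you use that some $c_i > 0$; both yield a $\gamma$-independent negative bound), and the $E_r = \emptyset$ case is the same coordinate-wise assembly of a subgradient $\xi$ with $\nabla f(\mat{x}_r) + \xi = \mat{0}$ via Lemma~\ref{g_subgradient-t}(i)(b). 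The only cosmetic difference is that you re-derive the content of Lemma~\ref{g_subgradient-t}(iii) inline (splitting $[n]$ into $W_r$ and its complement) where the paper simply cites it.
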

\begin{proof}
	Let $\rho = \min_{i \in E_r} |\nabla f(\mat{x}_r)_i| - \eta$ which is positive.  Take any $i \in E_r$ and any $\gamma \in \partial F(\mat{x}_r)$.  By Lemma~\ref{g_subgradient-t}(i), the $i$-th coordinate of any subgradient in $\partial g(\mat{x}_r)$ is in the range $[-\eta,\eta]$, which implies that $|(\gamma)_i| \geq \rho$.  Let $\mat{s}_i = -\mathrm{sign}((\nabla f(\mat{x}_r))_i) \cdot \mat{e}_i$.  By Lemma~\ref{g_subgradient-t}(ii), $\langle \gamma,\mat{s}_i \rangle = -|(\gamma)_i| \leq -\rho < 0$.  For every unit conical combination $\sum_{i \in E_r} \alpha_i \mat{s}_i$, some coefficient $\alpha_i$ is at least $1/\sqrt{n}$.  Thus, $\sup_{\gamma \in \partial F(\mat{x}_r)} \langle \gamma,\sum_{i \in E_r} \alpha_i \mat{s}_i\rangle \leq -\rho/\sqrt{n} < 0$, proving that $\sum_{i \in E_r} \alpha_i\mat{s}_i$ is a descent direction.  If $E_r$ is empty, by Lemma~\ref{g_subgradient-t}(iii), for every $i \in [n]$, there exists $\xi \in \partial g(\mat{x}_r)$ such that $(\xi)_i = -(\nabla f(\mat{x}_r))_i$, that is, $-(\nabla f(\mat{x}_r))_i$ is a legitimate $i$-th coordinate of a subgradient in $\partial g(\mat{x}_r)$.  It follows from Lemma~\ref{g_subgradient-t}(i)(b) that the zero vector belongs to $\partial F(\mat{x}_r)$, which implies that $\mat{x}_r$ is the global minimum.\hfill\qed
\end{proof}

Define a vector $\zeta_{\mat{x}_r} \in \real^n$ such that for all $i \in [n]$, if $i \not\in E_r$, then $(\zeta_{\mat x_r})_i=-(\nabla f(\mat x_r))_i$, and if $i \in E_r$, then $(\zeta_{\mat x_r})_i= - \mathrm{sign}((\nabla f(\mat x_r))_i)\cdot \eta$. Define $\gamma_{\mat x_r} = \nabla f(\mat x_r) + \zeta_{\mat x_r}$.   
Let $\mat{x}_*$ denote the optimal solution that minimizes $F$.  Given a vector $\mat{v}$ and a subset $S \subseteq [n]$, $\mat{v}  \downarrow  S$ denotes the orthogonal projection of $\mat{v}$ in the subspace spanned by $\{\mat{e}_i : i \in S\}$.

\begin{lemma}
\label{lem:special}
$\zeta_{\mat{x}_r} \in \partial g(\mat{x}_r)$ and $\gamma_{\mat{x}_r} \in \partial F(\mat{x}_r)$.
\end{lemma}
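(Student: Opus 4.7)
The plan is to verify that $\zeta_{\mat{x}_r}$ satisfies the coordinate-wise conditions in Lemma~\ref{g_subgradient-t}(i)(a); by Lemma~\ref{g_subgradient-t}(i)(b) this will give $\zeta_{\mat{x}_r} \in \partial g(\mat{x}_r)$, and then $\gamma_{\mat{x}_r} \in \partial F(\mat{x}_r)$ follows immediately from $\partial F(\mat{x}) = \{\nabla f(\mat{x}) + \xi : \xi \in \partial g(\mat{x})\}$.

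Two preliminary facts organize the case analysis. First, I would argue that $E_r \cap W_r = \emptyset$: since $\mat{x}_r$ is the optimum over the variables in $W_r$, the KKT condition applied to the restricted problem forces $|(\nabla f(\mat{x}_r))_i| \le \eta$ for every $i \in W_r$, which excludes such $i$ from $E_r$. Second, coordinates outside $W_r$ are held at zero by the algorithm, so $(\mat{x}_r)_i = 0$ for every $i \not\in W_r$.

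I then split $[n]$ into three cases. If $i \in E_r$, the above gives $i \not\in W_r$ and hence $(\mat{x}_r)_i = 0$, while $(\zeta_{\mat{x}_r})_i = -\mathrm{sign}((\nabla f(\mat{x}_r))_i) \cdot \eta \in [-\eta,\eta]$, matching (i)(a). If $i \not\in W_r \cup E_r$, again $(\mat{x}_r)_i = 0$ and now $|(\nabla f(\mat{x}_r))_i| \le \eta$, so $(\zeta_{\mat{x}_r})_i = -(\nabla f(\mat{x}_r))_i \in [-\eta,\eta]$ again matches (i)(a). If $i \in W_r$, the KKT condition for the restricted subproblem supplies some $\xi \in \partial g(\mat{x}_r)$ whose $W_r$-coordinates cancel $\nabla f(\mat{x}_r)$; Lemma~\ref{g_subgradient-t}(i)(a) applied to this $\xi$ yields $-(\nabla f(\mat{x}_r))_i \in [-\eta,\eta]$ when $(\mat{x}_r)_i = 0$ and $-(\nabla f(\mat{x}_r))_i = \mathrm{sign}((\mat{x}_r)_i)\cdot \eta$ when $(\mat{x}_r)_i \neq 0$. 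In either sub-case, $(\zeta_{\mat{x}_r})_i = -(\nabla f(\mat{x}_r))_i$ satisfies (i)(a).

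The step I expect to require the most care is the last one: invoking the KKT optimality of $\mat{x}_r$ on the subproblem restricted to $W_r$ in order to rule out any sign mismatch on $\mathrm{supp}(\mat{x}_r)$ and to confirm that the cancellation of the $W_r$-coordinates is consistent with the characterization of subgradients of $g$. Once that is in place, stitching the three cases together shows that $\zeta_{\mat{x}_r}$ satisfies all conditions of Lemma~\ref{g_subgradient-t}(i)(a), so Lemma~\ref{g_subgradient-t}(i)(b) gives $\zeta_{\mat{x}_r} \in \partial g(\mat{x}_r)$, and the subgradient sum formula then yields $\gamma_{\mat{x}_r} \in \partial F(\mat{x}_r)$.
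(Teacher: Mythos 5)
Your proposal is correct and follows essentially the same route as the paper: verify coordinate-by-coordinate that $\zeta_{\mat{x}_r}$ meets the conditions of Lemma~\ref{g_subgradient-t}(i)(a), invoke (i)(b) to conclude $\zeta_{\mat{x}_r} \in \partial g(\mat{x}_r)$, and read off $\gamma_{\mat{x}_r} \in \partial F(\mat{x}_r)$ from the sum formula. The only cosmetic difference is that you inline the KKT argument for the coordinates in $W_r$ and the observation that $E_r \cap W_r = \emptyset$, whereas the paper delegates the case $i \notin E_r$ to Lemma~\ref{g_subgradient-t}(iii), whose proof contains exactly the case analysis you spell out.
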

\begin{proof}
	By Lemma~\ref{g_subgradient-t}(ii), for all $i \in E_r$, $(\zeta_{\mat x_r})_i \in \{-\eta,\eta\}$.   For all $i \in E_r$, $(\mat{x}_r)_i = 0$ as $E_r \cap W_r = \emptyset$. By Lemma~\ref{g_subgradient-t}(i)(b), for all $i \in E_r$, $(\zeta_{\mat{x}_r})_i$ is a legitimate $i$-th coordinate for a subgradient in $\partial g(\mat{x}_r)$.  By Lemma~\ref{g_subgradient-t}(iii), for all $i \not\in E_r$, there exists $\gamma \in \partial F(\mat{x}_r)$ such that $(\gamma)_i = 0$. It means that for all $i \not\in E_r$, $(\zeta_{\mat{x}_r})_i = -(\nabla f(\mat x_r))_i$ must a legitimate $i$-th coordinate for a subgradient in $\partial g(\mat x_r)$ so that it cancels $(\nabla f(\mat x_r))_i$.   Thus, $\zeta_{\mat{x}_r} \in \partial g(\mat{x}_r)$.  Then, $\gamma_{\mat{x}_r} \in \partial F(\mat{x}_r)$ by definition.\hfill\qed
\end{proof}

\begin{lemma}
\label{gfy_to_gamma_y-t}
Let $\mat{n}_r$ be any unit conical combination of $\{-\mathrm{sign}((\nabla f(\mat x_r))_i) \cdot \mat{e}_i : i \in E_r\}$.  Let $\mat{y}_r$ be the point in direction $\mat{n}_r$ from $\mat{x}_r$ that minimizes $F$. 
\begin{enumerate}[{\em (i)}]

\item There exists $\xi \in \partial g(\mat y_r)$ such that  $ \langle  \mat \nabla f(\mat y_r) + \xi, \mat y_r - \mat x_r\rangle  = 0$.

\item For every $\xi \in \partial g(\mat{y}_r)$, both $\langle \zeta_{\mat{x}_r} - \xi, \mat{x}_r \rangle$ and $\langle \zeta_{\mat{x}_r} - \xi, \mat{y}_r\rangle$ are zero.

\item For every $\mat z \in \real^n$, $F(\mat x_r)-F(\mat z) \le - \langle  \gamma_{\mat x_r}, \mat  z - \mat x_r\rangle$.

\item $F(\mat x_r)-F(\mat y_r) = - \frac{1}{2}\langle \gamma_{\mat x_r}, \mat y_r - \mat x_r\rangle$.
\end{enumerate}
\end{lemma}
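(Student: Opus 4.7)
The plan is to prove (i) via a one-dimensional first-order optimality condition along the ray, establish (ii) by a careful coordinate-wise case analysis, obtain (iii) directly from the subgradient inequality applied to $\gamma_{\mat{x}_r} \in \partial F(\mat{x}_r)$, and deduce (iv) by combining (i) and (ii) with a standard quadratic identity for $f$. For (i), I would first note that $\mat{n}_r$ is a descent direction at $\mat{x}_r$ by Lemma~\ref{lem:32}, so $\mat{y}_r \neq \mat{x}_r$ and we can write $\mat{y}_r - \mat{x}_r = t\mat{n}_r$ for some $t > 0$. Since $\mat{y}_r$ minimizes the convex one-variable function $\lambda \mapsto F(\mat{x}_r + \lambda\mat{n}_r)$ at the interior point $\lambda = t$, the one-dimensional optimality condition yields a subgradient $\gamma \in \partial F(\mat{y}_r)$ with $\langle \gamma, \mat{n}_r\rangle = 0$; writing $\gamma = \nabla f(\mat{y}_r) + \xi$ with $\xi \in \partial g(\mat{y}_r)$ and multiplying through by $t$ gives (i).

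Part (ii) is the main obstacle: we need the subgradient $\xi \in \partial g(\mat{y}_r)$ to agree with the vector $\zeta_{\mat{x}_r}$ (built from data at $\mat{x}_r$) on the supports of $\mat{x}_r$ and $\mat{y}_r$. I would argue coordinate-by-coordinate. For $i \in E_r$ with $(\mat{y}_r)_i \neq 0$, the fact that $\mat{y}_r - \mat{x}_r$ is a positive multiple of $\mat{n}_r$ forces $\mathrm{sign}((\mat{y}_r)_i) = -\mathrm{sign}((\nabla f(\mat{x}_r))_i)$, and Lemma~\ref{g_subgradient-t}(i)(a) then pins $(\xi)_i = \mathrm{sign}((\mat{y}_r)_i)\cdot \eta = (\zeta_{\mat{x}_r})_i$. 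For $i \in E_r$ with $(\mat{y}_r)_i = 0$, we also have $(\mat{x}_r)_i = 0$ because $E_r \cap W_r = \emptyset$. For $i \not\in E_r$, the direction $\mat{n}_r$ has zero $i$-th component, so $(\mat{y}_r)_i = (\mat{x}_r)_i$; if this common value is nonzero, then $i \in W_r$ and the optimality of $\mat{x}_r$ over $W_r$ combined with Lemma~\ref{g_subgradient-t}(i)(a) forces $(\nabla f(\mat{x}_r))_i = -\mathrm{sign}((\mat{x}_r)_i)\cdot \eta$, so $(\zeta_{\mat{x}_r})_i = \mathrm{sign}((\mat{x}_r)_i)\cdot \eta$, which equals $(\xi)_i$ by applying Lemma~\ref{g_subgradient-t}(i)(a) to $\xi \in \partial g(\mat{y}_r)$. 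Thus in every case either $(\zeta_{\mat{x}_r} - \xi)_i = 0$ or $(\mat{x}_r)_i = (\mat{y}_r)_i = 0$, so both inner products in (ii) vanish termwise.

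Part (iii) is immediate from Lemma~\ref{lem:special} and the defining inequality for subgradients. For (iv), I would combine the quadratic identity $f(\mat{y}_r) - f(\mat{x}_r) = \tfrac{1}{2}\langle \nabla f(\mat{x}_r) + \nabla f(\mat{y}_r), \mat{y}_r - \mat{x}_r\rangle$ (obtained by averaging the two exact second-order Taylor expansions of the quadratic $f$ about $\mat{x}_r$ and $\mat{y}_r$) with the elementary identity $g(\mat{y}_r) - g(\mat{x}_r) = \langle \zeta_{\mat{x}_r}, \mat{y}_r - \mat{x}_r\rangle$ obtained by the same coordinate split as in (ii): on $i \not\in E_r$ both sides vanish, while on $i \in E_r$ the sign of $(\mat{y}_r)_i$ makes $\eta|(\mat{y}_r)_i| = (\zeta_{\mat{x}_r})_i\cdot(\mat{y}_r)_i$. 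Part~(i) supplies $\langle \nabla f(\mat{y}_r), \mat{y}_r - \mat{x}_r\rangle = -\langle \xi, \mat{y}_r - \mat{x}_r\rangle$ for some $\xi \in \partial g(\mat{y}_r)$, and part~(ii) rewrites this as $-\langle \zeta_{\mat{x}_r}, \mat{y}_r - \mat{x}_r\rangle$. Summing the $f$- and $g$-contributions and simplifying yields $F(\mat{y}_r) - F(\mat{x}_r) = \tfrac{1}{2}\langle \nabla f(\mat{x}_r) + \zeta_{\mat{x}_r}, \mat{y}_r - \mat{x}_r\rangle = \tfrac{1}{2}\langle \gamma_{\mat{x}_r}, \mat{y}_r - \mat{x}_r\rangle$, which is (iv).
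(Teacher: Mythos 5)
Your proof is correct. Parts (i) and (ii) follow essentially the paper's own route: the paper proves (i) via the variational inequality for the minimizer of $F$ on the line through $\mat{x}_r$ parallel to $\mat{n}_r$ (using a reflected point to upgrade the inequality to an equality), which is the same one-dimensional optimality condition you invoke, and its proof of (ii) is the same coordinate-wise split over $\supp(\mat{n}_r)$, $E_r\setminus\supp(\mat{n}_r)$, and $[n]\setminus E_r$; your extra justification for the case $i\notin E_r$, $(\mat{x}_r)_i\neq 0$ via optimality over $W_r$ is sound and matches what the paper gets from Lemma~\ref{lem:special} plus Lemma~\ref{g_subgradient-t}(i)(a). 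Where you genuinely diverge is (iii) and (iv). The paper first derives the general identity $F(\mat x)-F(\mat z) = -\frac{1}{2}\|\mat A (\mat z - \mat x)\|^2 - \langle \nabla f(\mat x) + \zeta_1, \mat z - \mat x\rangle + \langle \zeta_1 - \zeta_2, \mat z \rangle$ for arbitrary $\zeta_1\in\partial g(\mat{x})$, $\zeta_2\in\partial g(\mat{z})$; it obtains (iii) by showing $\langle \zeta_{\mat{x}_r}-\zeta_2,\mat{z}\rangle \le 0$ and dropping the quadratic term, and (iv) by applying the identity in both directions and summing so that the quadratic terms cancel and the cross terms vanish by (i) and (ii). You instead note that (iii) is an immediate instance of the subgradient inequality for $\gamma_{\mat{x}_r}\in\partial F(\mat{x}_r)$ (Lemma~\ref{lem:special}), which is a real simplification, and you get (iv) from the trapezoid identity $f(\mat{y}_r)-f(\mat{x}_r)=\frac{1}{2}\langle \nabla f(\mat{x}_r)+\nabla f(\mat{y}_r),\mat{y}_r-\mat{x}_r\rangle$ together with $g(\mat{y}_r)-g(\mat{x}_r)=\langle\zeta_{\mat{x}_r},\mat{y}_r-\mat{x}_r\rangle$; this is arithmetically the same cancellation as the paper's summation of the two instances of its identity, just packaged without the intermediate general formula. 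The trade-off is that you replace one reusable identity with two small ad hoc computations, but in exchange (iii) collapses to a one-line consequence of convexity.
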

\begin{proof}
	Since $\mat{n}_r$ is a descent direction by Lemma~\ref{lem:32}, the point $\mat{y}_r$ is well defined. Consider (i).  Let $L$ denote the line through $\mat{x}_r$ parallel to $\mat{n}_r$.  Since the minimum of $F$ in $L$ is achieved at $\mat{y}_r$, it is known that there exists $\gamma \in \partial F(\mat y_r)$ such that $\langle \gamma, \mat{z} - \mat{y}_r \rangle\ge 0$ for all $\mat z \in L$.  Note that $\mat{x}_r \in L$.  Choose the point $\mat{x} \in L$ such that $\mat{x}_r - \mat{y}_r = \mat{y}_r - \mat{x}$.  Then, we have $\langle \gamma, \mat{x}_r - \mat{y}_r \rangle \geq 0$ and $\langle \gamma, \mat{x} - \mat{y}_r \rangle \geq 0$.  The second inequality also implies that $\langle \gamma, \mat{x}_r - \mat{y}_r \rangle = -\langle \gamma, \mat{x}-\mat{y}_r\rangle \leq 0$.  It follows that $\langle \gamma, \mat{x}_r - \mat{y}_r \rangle = 0$, which implies that there exists $\xi \in \partial g(\mat{y}_r)$ such that $\langle \nabla f(\mat{y}_r) + \xi, \mat{x}_r - \mat{y}_r \rangle = 0$.  This proves (i).
	
	Consider (ii).  Take any $i \in \supp(\mat{n}_r)$.  As $i \in E_r$ by definition, we have $(\mat{x}_r)_i = 0$ and $(\zeta_{\mat{x}_r})_i = -\mathrm{sign}((\nabla f(\mat{x}_r))_i) \cdot \eta$.  Also, $\mathrm{sign}((\mat{y}_r)_i) = -\mathrm{sign}((\nabla f(\mat{x}_r)_i)$ because we descend from $\mat{x}_r$ in direction $\mat{n}_r$ to reach $\mat{y}_r$.  By Lemma~\ref{g_subgradient-t}(i)(a), $(\xi)_i = \mathrm{sign}((\mat{y}_r)_i) \cdot \eta = -\mathrm{sign}((\nabla f(\mat{x}_r)_i) \cdot \eta$.  Therefore, $(\zeta_{\mat{x}_r})_i = (\xi)_i$.  For any $i \not\in \supp(\mat{n}_r)$, we have $(\mat{x}_r)_i = (\mat{y}_r)_i$.  If they are not zero, then $(\zeta_{\mat{x}_r})_i$ and $(\xi)_i$ are identical by Lemma~\ref{g_subgradient-t}(i)(a).  We conclude that both $((\zeta_{\mat{x}_r})_i - (\xi)_i) \cdot (\mat{x}_r)_i$ and  $((\zeta_{\mat{x}_r})_i - (\xi)_i) \cdot (\mat{y}_r)_i$ are zero for all $i \in [n]$.  This proves (ii).
	
	Before proving (iii) and (iv), we first prove the following equation: $\forall \, \mat x, \mat z \in \mathbb{R}^n, \forall\, \zeta_1 \in \partial g(\mat x), \forall \, \zeta_2 \in \partial g(\mat z)$,
	\begin{equation}
		F(\mat x)-F(\mat z) = -\frac{1}{2}\|\mat A (\mat z - \mat x)\|^2- \langle  \mat \nabla f(\mat x) + \zeta_1, \mat z - \mat x\rangle +\langle  \zeta_1 - \zeta_2, \mat z \rangle.
		\label{eq:-1}
	\end{equation}
	Take any $i\in [n]$ and any $\zeta_1 \in \partial g(\mat{x})$. By Lemma~\ref{g_subgradient-t}(i)(a), $(\zeta_1)_i \cdot (\mat x)_i= \eta|(\mat x)_i|$.  Therefore, $\langle {\zeta}_1  , \mat x \rangle = \sum_{i=1}^n (\zeta_1)_i \cdot (\mat x)_i = \eta \|\mat x\|_1 = g(\mat x)$, which implies that
	\begin{equation}
		g(\mat x)-g(\mat z)= \langle {\zeta}_1  , \mat x \rangle -\langle \zeta_2  , \mat z \rangle.
		\label{eq:0}
	\end{equation}
	It has been proved in our unpublished manuscript~\cite{chengwongold} that $f(\mat x)-f(\mat z) = -\frac{1}{2}\|\mat A (\mat z - \mat x)\|^2 - \langle  \mat \nabla f(\mat x), \mat z - \mat x\rangle$.  We give the proof below for completeness.  For all $s \in [0,1]$, define $\mat{z}_s = \mat{x} + s(\mat{z}-\mat{x})$.  By the chain rule, we have $\frac{\partial f}{\partial s} = \bigl\langle \frac{\partial f}{\partial \mat{z}_s},\frac{\partial \mat{z}_s}{\partial s} \bigr\rangle
	= \bigl\langle \nabla f(\mat{z}_s), \, \mat{z} -\mat{x} \bigr\rangle$.
	We integrate along a linear movement from $\mat{x}$ to $\mat{z}$.  Using the fact that $\nabla f(\mat{z}_s) = \mat{A}^t\mat{A}(\mat{x} + s(\mat{z}-\mat{x})) - \mat{A}^t \mat b = \nabla f(\mat{x}) + s\mat{A}^t\mat{A}(\mat{z}-\mat{x})$, we obtain $f(\mat{z}) = f(\mat{x}) + \int_{0}^1 \langle \nabla f(\mat{z}_s),  \mat{z} -\mat{x} \rangle \, \mathtt{d}s = f(\mat{x}) + \int_0^1 \langle \nabla f(\mat{x}), \, \mat{z}-\mat{x} \rangle \, \mathtt{d}s + 
	\int_0^1 s\iprod{\mat{A}^t\mat{A}(\mat{z}-\mat{x}), \, \mat{z}-\mat{x}} \, \mathtt{d}s \nonumber = f(\mat{x}) + \bigl[\langle \nabla f(\mat{x}), \, \mat{z}-\mat{x} \rangle \cdot s\bigr]^1_0 + 
	\bigl[\frac{1}{2}\norm{\mat{A}(\mat{z}-\mat{x})}^2 \cdot s^2 \bigr]^1_0 = f(\mat{x}) + \langle \nabla f(\mat{x}), \, \mat{z}-\mat{x} \rangle + \frac{1}{2}\norm{\mat{A}(\mat{z}-\mat{x})}^2$.	It follows immediately that $f(\mat{x}) - f(\mat{z}) =  -\langle \nabla f(\mat{x}), \, \mat{z}-\mat{x} \rangle - \frac{1}{2}\norm{\mat{A}(\mat{z}-\mat{x})}^2$.  By~\eqref{eq:0}, we can add $g(\mat{x}) - g(\mat{z})$ to the left side of this equation and $\langle \zeta_1  , \mat x \rangle -\langle \zeta_2  , \mat z \rangle$ to the right side.  We get $F(\mat x)-F(\mat z) = -\frac{1}{2}\|\mat A (\mat z - \mat x)\|^2 - \langle  \mat \nabla f(\mat x), \mat z - \mat x\rangle +\langle \zeta_1  , \mat x \rangle -\langle \zeta_2  , \mat z \rangle  = -\frac{1}{2}\|\mat A (\mat z - \mat x)\|^2- \langle  \mat \nabla f(\mat x), \mat z - \mat x\rangle +\langle \zeta_1  , \mat x \rangle -\langle \zeta_2 , \mat z \rangle  -\langle \zeta_1  , \mat z \rangle +\langle \zeta_1  , \mat z \rangle = -\frac{1}{2}\|\mat A (\mat z - \mat x)\|^2- \langle  \mat \nabla f(\mat x) + \zeta_1, \mat z - \mat x\rangle +\langle  \mat \zeta_1 - \zeta_2  , \mat z \rangle$.  This completes the proof of \eqref{eq:-1}. 
	
	Consider (iii).  By \eqref{eq:-1} with $\mat{x} = \mat{x}_r$ and $\zeta_1 = \zeta_{\mat{x}_r}$, we get 
	\begin{align}
		F(\mat x_r)-F(\mat z) 
		&=~-\frac{1}{2}\|\mat A (\mat  z - \mat x_r)\|^2- \langle  \mat \nabla f(\mat x_r) + \zeta_{\mat{x}_r}, \mat  z - \mat x_r\rangle +\langle  \mat \zeta_{\mat x_r}- \zeta_2, \mat  z \rangle \nonumber \\
		&\le~-\langle  \mat \nabla f(\mat x_r) + \zeta_{\mat{x}_r}, \mat  z - \mat x_r\rangle +\langle  \mat \zeta_{\mat x_r}- \zeta_2  , \mat  z \rangle. \label{eq:1}
	\end{align}
	Take any $i\in [n]$. By Lemma~\ref{g_subgradient-t}(i)(a), if $(\mat{z})_i > 0$, then $(\zeta_2)_i = \eta \geq (\zeta_{\mat{x}_r})_i$, and if $(\mat{z})_i < 0$, then $(\zeta_2)_i = -\eta \leq (\zeta_{\mat{x}_r})_i$.  As a result, $(\zeta_{\mat{x}_r} - \zeta_2)_i \cdot (\mat{z})_i \leq 0$ for all~$i$, proving that $\langle \zeta_{\mat{x}_r} -\zeta_2, \mat{z} \rangle \leq 0$.  Substituting $\langle \zeta_{\mat{x}_r} -\zeta_2, \mat{z} \rangle \leq 0$ into \eqref{eq:1} gives $F(\mat x_r)-F(\mat z) \le - \langle  \mat \nabla f(\mat x_r) + \zeta_{\mat{x}_r}, \mat  z - \mat x_r\rangle = - \langle  \gamma_{\mat x_r}, \mat  z - \mat x_r\rangle$.  This proves (iii).
	
	Consider (iv). Let $\xi$ be any subgradient in $\partial g(\mat{y}_r)$ that satisfies Lemma~\ref{gfy_to_gamma_y-t}(i).  By \eqref{eq:-1} with $\mat x=\mat y_r$, $\zeta_1 = \xi$, $\mat z=\mat x_r$, and $\zeta_2 = \zeta_{\mat{x}_r}$, we have $F(\mat{x}_r) - F(\mat{y}_r) = \frac{1}{2}\norm{\mat{A}(\mat{x}_r - \mat{y}_r)}^2 + \langle \nabla f(\mat{y}_r) + \xi, \mat{x}_r - \mat{y}_r \rangle  - \langle \xi - \zeta_{\mat{x}_r}, \mat{x}_r\rangle$.
	The middle term vanishes by Lemma~\ref{gfy_to_gamma_y-t}(i).  Therefore,
	\begin{equation}
		F(\mat x_r)-F(\mat y_r) = \frac{1}{2}\|\mat A (\mat x_r - \mat y_r)\|^2 +\langle \mat \zeta_{\mat x_r}-\xi  , \mat x_r \rangle.  \label{eq:2}
	\end{equation}
	By \eqref{eq:-1} again with
	$\mat x=\mat x_r$, $\zeta_1 = \zeta_{\mat{x}_r}$, $\mat z=\mat y_r$, and $\zeta_2 = \xi$.  It gives
	$F(\mat x_r)-F(\mat y_r)  = -\frac{1}{2}\|\mat A (\mat x_r - \mat y_r)\|^2- \langle  \mat \nabla f(\mat x_r) + \zeta_{\mat{x}_r}, \mat y_r - \mat x_r\rangle +\langle  \mat \zeta_{\mat x_r}- \mat \xi  , \mat y_r \rangle$.
	Summing the above equation and (\ref{eq:2}) gives:
	\begin{align*}
		2F(\mat x_r)-2F(\mat y_r) 
		&= -\langle  \mat \nabla f(\mat x_r) + \zeta_{\mat{x}_r}, \mat y_r - \mat x_r\rangle +\langle  \mat \zeta_{\mat x_r}- \mat \xi  , \mat y_r \rangle  +\langle \mat \zeta_{\mat x_r}-\xi  , \mat x_r \rangle \\
		&= -\langle  \mat \nabla f(\mat x_r) + \zeta_{\mat{x}_r}, \mat y_r - \mat x_r\rangle \tag*{\text{($\because$ Lemma~\ref{gfy_to_gamma_y-t}(ii))}} \\
		&= -\langle \gamma_{\mat{x}_r},\mat{y}_r-\mat{x}_r \rangle.
	\end{align*}
	This completes the proof of (iv).\hfill\qed
\end{proof}

\begin{lemma}
\label{lem:full-t}
Let $\mat{n}_r$ be any unit conical combination of $\{-\mathrm{sign}((\nabla f(\mat x_r))_i) \cdot \mat{e}_i : i \in E_r\}$.  Let $\mat{y}_r$ be the point in direction $\mat{n}_r$ from $\mat{x}_r$ that minimizes $F$. 
Let $\mat{n}_* = (\mat{x}_* - \mat{x}_r)/\norm{\mat{x}_* - \mat{x}_r}$.  If $F(\mat{x}_r) > F(\mat{x}_*) + \eps\norm{\mat{x}_*-\mat{x}_r}^2$~~for some $\eps \in (0,1)$, then $\langle \gamma_{\mat x_r} , \mat{y}_r - \mat{x}_r \rangle / \langle \gamma_{\mat x_r} , \mat{x}_* - \mat{x}_r \rangle \geq   \eps \cdot \langle \gamma_{\mat x_r} , \mat{n}_r \rangle^2 / \langle \gamma_{\mat x_r} , \mat{n}_* \rangle^2$.
\end{lemma}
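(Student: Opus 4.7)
The plan is to set $\mat y_r = \mat x_r + t \mat n_r$ with $t = \|\mat y_r - \mat x_r\| > 0$ and $\mat x_* - \mat x_r = D \mat n_*$ with $D = \|\mat x_* - \mat x_r\|$, so the target inequality reads $\frac{t\langle \gamma_{\mat x_r}, \mat n_r\rangle}{D\langle \gamma_{\mat x_r}, \mat n_*\rangle} \ge \eps \cdot \frac{\langle \gamma_{\mat x_r}, \mat n_r\rangle^2}{\langle \gamma_{\mat x_r}, \mat n_*\rangle^2}$. Both inner products will turn out negative---$\langle \gamma_{\mat x_r}, \mat n_r\rangle$ because $\mat n_r$ is a descent direction by Lemma~\ref{lem:32}, and $\langle \gamma_{\mat x_r}, \mat n_*\rangle$ as a byproduct of Bound~(B) below---so after cancelling common positive factors the target is equivalent to $t\bigl(-\langle \gamma_{\mat x_r}, \mat n_*\rangle\bigr) \ge \eps D\bigl(-\langle \gamma_{\mat x_r}, \mat n_r\rangle\bigr)$. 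It therefore suffices to establish the two bounds (A)~$t \ge -\langle \gamma_{\mat x_r}, \mat n_r\rangle$ and (B)~$-\langle \gamma_{\mat x_r}, \mat n_*\rangle \ge \eps D$, and multiply them.

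Bound~(B) is immediate from Lemma~\ref{gfy_to_gamma_y-t}(iii) applied with $\mat z = \mat x_*$: it gives $F(\mat x_r) - F(\mat x_*) \le -D\langle \gamma_{\mat x_r}, \mat n_*\rangle$, so the hypothesis $F(\mat x_r) - F(\mat x_*) > \eps D^2$ forces $-\langle \gamma_{\mat x_r}, \mat n_*\rangle > \eps D$. Bound~(A) is the substantive step. I would analyze $\psi(s) = F(\mat x_r + s\mat n_r)$ for $s \ge 0$. The key observation is that $\mat n_r$ is supported on $E_r$ while $\supp(\mat x_r) \subseteq W_r$, and optimality of $\mat x_r$ over $W_r$ forces $|(\nabla f(\mat x_r))_j| \le \eta$ for every $j \in W_r$; hence $W_r \cap E_r = \emptyset$, and no coordinate of $\mat x_r + s\mat n_r$ changes sign along $s \ge 0$ (for $j \in E_r$ one has $(\mat x_r + s\mat n_r)_j = s(\mat n_r)_j$, which keeps the sign of $(\mat n_r)_j$; for $j \notin E_r$ the coordinate is fixed at $(\mat x_r)_j$). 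A direct check then yields $g(\mat x_r + s\mat n_r) - g(\mat x_r) = s\eta\sum_{j\in E_r}|(\mat n_r)_j| = s\langle \zeta_{\mat x_r}, \mat n_r\rangle$, using that for $j \in E_r$ the coordinates $(\zeta_{\mat x_r})_j$ and $(\mat n_r)_j$ share the common sign $-\mathrm{sign}((\nabla f(\mat x_r))_j)$ with $|(\zeta_{\mat x_r})_j| = \eta$. Combined with the quadratic expansion $f(\mat x_r + s\mat n_r) = f(\mat x_r) + s\langle \nabla f(\mat x_r), \mat n_r\rangle + \tfrac{s^2}{2}\|\mat A\mat n_r\|^2$, this gives $\psi(s) = F(\mat x_r) + s\langle \gamma_{\mat x_r}, \mat n_r\rangle + \tfrac{s^2}{2}\|\mat A\mat n_r\|^2$, whose minimizer is $t = -\langle \gamma_{\mat x_r}, \mat n_r\rangle/\|\mat A\mat n_r\|^2$. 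Since $\|\mat A\mat n_r\| \le \|\mat A\| \cdot \|\mat n_r\| \le 1$ by the standing hypothesis on $\mat A$, bound~(A) follows.

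The main obstacle is obtaining the exact quadratic expansion of $\psi$. This hinges on two facts: the disjointness $\supp(\mat x_r) \cap E_r = \emptyset$ (ruling out kinks of the $L_1$ term along the ray), and the matching of the slope of $g$ along the ray with $\langle \zeta_{\mat x_r}, \mat n_r\rangle$ via the precise construction of $\zeta_{\mat x_r}$. Once $\psi$ is pinned down, everything else---the minimizer formula, the factor of $1$ from $\|\mat A\| \le 1$, Lemma~\ref{gfy_to_gamma_y-t}(iii), and the final combination of (A) and (B)---is short and elementary.
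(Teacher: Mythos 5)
Your proposal is correct and follows essentially the same route as the paper: the paper's proof is exactly the combination of your bounds (A) and (B), obtaining $\norm{\mat{y}_r-\mat{x}_r}\cdot\norm{\mat{A}\mat{n}_r}^2=\langle-\gamma_{\mat{x}_r},\mat{n}_r\rangle$ and hence (A) from $\norm{\mat{A}}\le 1$, and (B) from Lemma~\ref{gfy_to_gamma_y-t}(iii) exactly as you do. The only cosmetic difference is that you derive the exact-line-search identity by expanding $F$ along the ray from first principles (using $E_r\cap W_r=\emptyset$), whereas the paper gets the same identity by invoking the already-proved Lemma~\ref{gfy_to_gamma_y-t}(i)--(ii).
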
	 
\begin{proof}
By Lemma~\ref{gfy_to_gamma_y-t}(i), there exists $\xi \in \partial g(\mat{y}_r)$ such that $\langle  \mat \nabla f(\mat y_r) + \xi, \mat{n}_r\rangle = 0$.  We have $\langle \zeta_{\mat{x}_r},\mat{y}_r - \mat{x}_r \rangle - \langle \xi,\mat{y}_r - \mat{x}_r \rangle = \langle \zeta_{\mat{x}_r} - \xi, \mat{y}_r \rangle - \langle \zeta_{\mat{x}_r} - \xi, \mat{x}_r \rangle$ which is zero by Lemma~\ref{gfy_to_gamma_y-t}(ii).  It implies that $\langle \zeta_{\mat{x}_r},\mat{n}_r \rangle = \langle \xi,\mat{n}_r\rangle$, and hence $\langle  \mat \nabla f(\mat y_r) + \zeta_{\mat{x}_r}, \mat{n}_r\rangle = 0$.  Substituting $\nabla f(\mat y_r)$ by $\mat A^t \mat A \mat y_r -  \mat A^t \mat b$, we obtain $\langle \mat{A}^t\mat{A} \mat{y}_r -\mat{A}^t\mat{b} + \zeta_{\mat{x}_r}, \mat{n}_r\rangle  = 0$. Rearranging terms gives $\langle \mat{A}^t\mat{A}(\mat{y}_r - \mat{x}_r), \mat{n}_r \rangle = \langle - \mat{A}^t\mat{A}\mat{x}_r + \mat{A}^t\mat{b} - \zeta_{\mat{x}_r},\mat{n}_r\rangle 
= \langle  -\nabla f(\mat x_r)- \zeta_{\mat x_r}, \mat{n}_r \rangle = \langle -\gamma_{\mat x_r} ,\mat{n}_r \rangle$.
Therefore, $\norm{\mat{y}_r-\mat{x}_r} \cdot \norm{\mat{A}\mat{n}_r}^2 = \langle \mat{A}^t\mat{A}(\mat{y}_r - \mat{x}_r), \mat{n}_r \rangle = \langle -\gamma_{\mat x_r} ,\mat{n}_r \rangle$.  Hence, $\norm{\mat{y}_r - \mat{x}_r} \geq \langle -\gamma_{\mat x_r} ,\mat{n}_r \rangle$ as $\norm{\mat{A}\mat{n}_r}^2 \leq \norm{\mat{A}}^2 \leq 1$.

By Lemma~\ref{gfy_to_gamma_y-t}(iii), $F(\mat{x}_r) - F(\mat{x}_*) \leq \norm{\mat{x}_*-\mat{x}_r} \cdot \langle -\gamma_{\mat{x}_r}, \mat{n}_*\rangle$.  If $\langle -\gamma_{\mat{x}_r}, \mat{n}_*\rangle < \eps\norm{\mat{x}_*-\mat{x}_r}$, then $F(\mat{x}_r) - F(\mat{x}_*) \leq \eps\norm{\mat{x}_*-\mat{x}_r}^2$, contradicting the assumption of the lemma.  Hence, $\langle -\gamma_{\mat{x}_r}, \mat{n}_*\rangle \geq \eps\norm{\mat{x}_*-\mat{x}_r}$.  Combining this inequality with $\norm{\mat{y}_r - \mat{x}_r} \geq \langle -\gamma_{\mat x_r} ,\mat{n}_r \rangle$ gives
$\frac{\langle \gamma_{\mat{x}_r}, \mat{y}_r - \mat{x}_r \rangle}{\langle \gamma_{\mat{x}_r}, \mat{x}_* - \mat{x}_r \rangle} = \frac{\norm{\mat{y}_r-\mat{x}_r}}{\norm{\mat{x}_*-\mat{x}_r}} \cdot \frac{\langle \gamma_{\mat{x}_r},\mat{n}_r\rangle}{\langle \gamma_{\mat{x}_r},\mat{n}_*\rangle} \geq \eps \cdot \frac{\langle \gamma_{\mat{x}_r},\mat{n}_r\rangle^2}{\langle \gamma_{\mat{x}_r},\mat{n}_*\rangle^2}$.\hfill\qed
\end{proof}

\begin{lemma}
\label{lem:zero}
Let $G_r$ be the set of the $\tau_{r+1}$ heaviest elements in $E_r$.  Let $H_r = G_r \cup (\supp(\mat{x}_*) \cap E_r)$.
Then, $\langle \gamma_{\mat{x}_r}, \mat{x}_* - \mat{x}_r \rangle = \langle \gamma_{\mat{x}_r} \! \downarrow \! H_r, \mat{x}_* - \mat{x}_r \rangle$.
\end{lemma}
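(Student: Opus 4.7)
The plan is to observe that $\gamma_{\mat{x}_r}$ is in fact supported entirely on $E_r$, and that the coordinates of $\mat{x}_* - \mat{x}_r$ that lie in $E_r$ but outside $H_r$ must vanish. Combining these two facts will immediately collapse the inner product to a sum over $H_r$.

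First I would analyze the support of $\gamma_{\mat{x}_r} = \nabla f(\mat{x}_r) + \zeta_{\mat{x}_r}$ coordinate by coordinate. By the very definition of $\zeta_{\mat{x}_r}$, for every $i \notin E_r$ we have $(\zeta_{\mat{x}_r})_i = -(\nabla f(\mat{x}_r))_i$, so $(\gamma_{\mat{x}_r})_i = 0$. Therefore $\supp(\gamma_{\mat{x}_r}) \subseteq E_r$, and hence
\[
\langle \gamma_{\mat{x}_r}, \mat{x}_* - \mat{x}_r \rangle \;=\; \sum_{i \in E_r} (\gamma_{\mat{x}_r})_i \cdot (\mat{x}_* - \mat{x}_r)_i.
\]

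Next I would split this sum into the part over $H_r$ and the part over $E_r \setminus H_r$, and show the latter is zero. Take any $i \in E_r \setminus H_r$. Since $H_r = G_r \cup (\supp(\mat{x}_*) \cap E_r)$, such an $i$ satisfies $i \notin \supp(\mat{x}_*)$, so $(\mat{x}_*)_i = 0$. On the other hand, as noted in the proof of Lemma~\ref{lem:special}, $E_r \cap W_r = \emptyset$: by the optimality of $\mat{x}_r$ on the working set $W_r$, the subgradient condition forces $|(\nabla f(\mat{x}_r))_j| \leq \eta$ for every $j \in W_r$, so no such $j$ can lie in $E_r$. Consequently $(\mat{x}_r)_i = 0$ for every $i \in E_r$, and in particular for our $i \in E_r \setminus H_r$, giving $(\mat{x}_* - \mat{x}_r)_i = 0$.

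Putting the two observations together, every index outside $H_r$ contributes zero to $\langle \gamma_{\mat{x}_r}, \mat{x}_* - \mat{x}_r \rangle$: the indices outside $E_r$ are killed by $(\gamma_{\mat{x}_r})_i = 0$, and the indices in $E_r \setminus H_r$ are killed by $(\mat{x}_* - \mat{x}_r)_i = 0$. Therefore the sum equals $\sum_{i \in H_r} (\gamma_{\mat{x}_r})_i \cdot (\mat{x}_* - \mat{x}_r)_i$, which by the definition of the projection $\gamma_{\mat{x}_r} \!\downarrow\! H_r$ is exactly $\langle \gamma_{\mat{x}_r} \!\downarrow\! H_r, \mat{x}_* - \mat{x}_r \rangle$. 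There is no real obstacle here; the only subtlety is invoking $E_r \cap W_r = \emptyset$ (and hence $(\mat{x}_r)_i = 0$ on $E_r$) to handle indices in $E_r \setminus \supp(\mat{x}_*)$, which is the reason the definition of $H_r$ only needs to retain $\supp(\mat{x}_*) \cap E_r$ rather than all of $\supp(\mat{x}_*)$.
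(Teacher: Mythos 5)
Your proposal is correct and follows essentially the same route as the paper's proof: indices outside $E_r$ contribute nothing because $\zeta_{\mat{x}_r}$ cancels $\nabla f(\mat{x}_r)$ there, and indices in $E_r \setminus H_r$ contribute nothing because both $(\mat{x}_*)_i$ and $(\mat{x}_r)_i$ vanish. Your added justification that $(\mat{x}_r)_i = 0$ on $E_r$ via $E_r \cap W_r = \emptyset$ is a correct elaboration of a step the paper states without comment.
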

\begin{proof}
	For any $i \in H_r$, $(\gamma_{\mat{x}_r})_i = (\gamma_{\mat{x}_r} \! \downarrow \! H_r)_i$ by definition.  Therefore, $(\gamma_{\mat{x}_r})_i \cdot (\mat{x}_* - \mat{x}_r)_i = (\gamma_{\mat{x}_r} \! \downarrow \! H_r)_i \cdot (\mat{x}_* - \mat{x}_r)_i$.
	
	Take any $i \not\in E_r$.  We have $(\gamma_{\mat{x}_r})_i = 0$ because $\gamma_{\mat{x}_r} = \zeta_{\mat{x}_r} + \nabla f(\mat{x}_r)$ and $(\zeta_{\mat{x}_r})_i = -(\nabla f(\mat{x}_r))_i$  by definition. Therefore, both $(\gamma_{\mat{x}_r})_i \cdot (\mat{x}_* - \mat{x}_r)_i$ and $(\gamma_{\mat{x}_r} \! \downarrow \! H_r)_i \cdot (\mat{x}_* - \mat{x}_r)_i$ are zero.
	
	For any $i \in E_r \setminus H_r$, $i \not\in \supp(\mat{x}_*)$ as $\supp(\mat{x}_*) \cap E_r \subseteq H_r$.  So $(\mat{x}_*)_i = 0$.  Also, $(\mat{x}_r)_i = 0$ as $i \in E_r$.  So $(\gamma_{\mat{x}_r})_i \cdot (\mat{x}_* - \mat{x}_r)_i$ and $(\gamma_{\mat{x}_r} \! \downarrow \! H_r)_i \cdot (\mat{x}_* - \mat{x}_r)_i$ are zero.\hfill\qed
\end{proof}

\begin{lemma}
\label{lem:full_cr-t}
If $F(\mat{x}_r) > F(\mat{x}_*) + \eps \norm{\mat{x}_*-\mat{x}_r}^2$, then 
\[
\frac{F(\mat{x}_{r+1}) - F(\mat{x}_*)}{F(\mat{x}_r) - F(\mat{x}_*)} \leq 1 - \frac{\eps\tau_{r+1}}{8(s+\tau_{r+1})\ln\tau_{r+1}}.
\]
\end{lemma}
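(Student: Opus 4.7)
The plan is to upper-bound $F(\mat x_{r+1})$ by $F(\mat y_r)$ for a carefully constructed one-dimensional minimizer $\mat y_r$ whose support lies inside $W_{r+1}$, and then chain Lemmas~\ref{gfy_to_gamma_y-t} and~\ref{lem:full-t} to turn the per-step improvement into a ratio against $F(\mat x_*)$.

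For the construction, I would take $\mat n_r$ to be a unit conical combination of $\{-\mathrm{sign}((\nabla f(\mat x_r))_i)\cdot \mat e_i : i \in G_r\}$, so that by Lemma~\ref{lem:32} it is a descent direction supported on $G_r$. Because the subgradient optimality of $\mat x_r$ with respect to $W_r$ forces $|(\nabla f(\mat x_r))_i|=\eta$ for every $i \in \supp(\mat x_r)$, we get $E_r \cap \supp(\mat x_r)=\emptyset$ and hence $\supp(\mat y_r) \subseteq \supp(\mat x_r)\cup G_r = W_{r+1}$. Since the solver in line~\ref{alg:qp} returns the exact optimum over vectors supported on $W_{r+1}$, this gives $F(\mat x_{r+1}) \le F(\mat y_r)$.

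Combining Lemma~\ref{gfy_to_gamma_y-t}(iv), which states $F(\mat x_r)-F(\mat y_r) = -\tfrac12\langle \gamma_{\mat x_r}, \mat y_r - \mat x_r\rangle$, with Lemma~\ref{gfy_to_gamma_y-t}(iii) at $\mat z=\mat x_*$, which gives $F(\mat x_r)-F(\mat x_*)\le -\langle \gamma_{\mat x_r}, \mat x_* - \mat x_r\rangle$, and then applying Lemma~\ref{lem:full-t} (whose hypothesis is precisely the hypothesis of the present lemma), I would obtain
\[
\frac{F(\mat x_{r+1})-F(\mat x_*)}{F(\mat x_r)-F(\mat x_*)} \;\le\; 1 - \frac{\eps}{2}\cdot \frac{\langle \gamma_{\mat x_r},\mat n_r\rangle^2}{\langle \gamma_{\mat x_r},\mat n_*\rangle^2}.
\]
The remaining task is a lower bound on this ratio. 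Lemma~\ref{lem:zero} collapses $\langle \gamma_{\mat x_r},\mat n_*\rangle$ to $\langle \gamma_{\mat x_r}\!\downarrow\! H_r, \mat n_*\rangle$, so Cauchy--Schwarz yields $\langle \gamma_{\mat x_r},\mat n_*\rangle^2 \le \norm{\gamma_{\mat x_r}\!\downarrow\! H_r}^2$. Choosing $\mat n_r$ with coefficients $\alpha_i \propto |(\gamma_{\mat x_r})_i|$ on $G_r$ makes $\langle \gamma_{\mat x_r},\mat n_r\rangle^2 = \norm{\gamma_{\mat x_r}\!\downarrow\! G_r}^2$; and because $G_r$ contains the $\tau_{r+1}$ heaviest weights in $E_r$ while $|H_r\setminus G_r|\le s$, every coordinate of $\gamma_{\mat x_r}\!\downarrow\!(H_r\setminus G_r)$ is bounded by $w_{\min}:=\min_{i\in G_r}|(\gamma_{\mat x_r})_i|$ and $\norm{\gamma_{\mat x_r}\!\downarrow\! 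G_r}^2 \ge \tau_{r+1}w_{\min}^2$, which together force the ratio to be at least $\tau_{r+1}/(\tau_{r+1}+s)$.

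The main obstacle I foresee is producing the specific $8(s+\tau_{r+1})\ln\tau_{r+1}$ denominator in the claim: the straightforward Cauchy--Schwarz argument just sketched already delivers $\tau_{r+1}/(\tau_{r+1}+s)$, which is \emph{stronger} than the stated rate modulo absolute constants, so the $\ln\tau_{r+1}$ factor must be slack introduced either by choosing $\mat n_r$ non-optimally (for instance via a dyadic partition of $G_r$ into $O(\log\tau_{r+1})$ weight layers and extracting a single dominant layer) or as a conservative form tailored to the sum-telescoping argument that feeds into Theorem~\ref{thm:tau-sum}. Either way, I expect the substantive work to lie in the step where Lemma~\ref{lem:full-t} is invoked and the ratio is bounded; the rest is a routine chaining of the three preceding lemmas.
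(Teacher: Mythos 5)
Your proof is correct, and its skeleton (bounding $F(\mat{x}_{r+1})$ by $F(\mat{y}_r)$ since $\supp(\mat{y}_r)\subseteq \supp(\mat{x}_r)\cup G_r = W_{r+1}$, then chaining Lemma~\ref{gfy_to_gamma_y-t}(iii)--(iv), Lemma~\ref{lem:full-t}, and Lemma~\ref{lem:zero}) matches the paper exactly; where you diverge is the key technical step of lower-bounding $\langle\gamma_{\mat{x}_r},\mat{n}_r\rangle^2/\langle\gamma_{\mat{x}_r},\mat{n}_*\rangle^2$. The paper invokes Lemma~\ref{lem:tech-nice} (Appendix~\ref{app:nice}), which builds $\mat{n}_r$ by a histogram argument (Lemma~\ref{lem:tech-1}) showing each single axis direction in $G_r$ has correlation at least $\norm{\gamma_{\mat{x}_r}\!\downarrow\!H_r}/\sqrt{2(s+\tau_{r+1})\ln\tau_{r+1}}$ with $\gamma_{\mat{x}_r}\!\downarrow\!H_r$, and then boosts this by pairwise angle-halving of orthogonal unit vectors (Lemma~\ref{lem:nice}); this is where the $\ln\tau_{r+1}$ and the constant $4$ enter. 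You instead take $\mat{n}_r \propto -\gamma_{\mat{x}_r}\!\downarrow\!G_r$ directly, which is a legitimate unit conical combination of $\{\mat{s}_i: i\in G_r\}$ because $\mathrm{sign}((\gamma_{\mat{x}_r})_i)=\mathrm{sign}((\nabla f(\mat{x}_r))_i)$ on $E_r$ (Lemma~\ref{g_subgradient-t}(ii)), achieving $\langle-\gamma_{\mat{x}_r},\mat{n}_r\rangle=\norm{\gamma_{\mat{x}_r}\!\downarrow\!G_r}$, and your counting bound $\norm{\gamma_{\mat{x}_r}\!\downarrow\!H_r}^2\le(1+s/\tau_{r+1})\norm{\gamma_{\mat{x}_r}\!\downarrow\!G_r}^2$ is valid (it needs the observation, made explicitly in the paper's Lemma~\ref{lem:tech-1} proof, that $|(\gamma_{\mat{x}_r})_i|=|(\nabla f(\mat{x}_r))_i|-\eta$, so the weight ordering and the $|(\gamma_{\mat{x}_r})_i|$ ordering coincide on $E_r$; you should state this). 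The upshot is that your route is both simpler and quantitatively stronger: it yields the contraction factor $1-\frac{\eps\tau_{r+1}}{2(s+\tau_{r+1})}$, which implies the stated bound whenever $\tau_{r+1}\ge 2$ (for $\tau_{r+1}=1$ the stated bound is vacuous since $\ln\tau_{r+1}=0$) and would in fact shave the $\log(s+\tau)$ factor off Theorem~\ref{thm:tau-sum}. Your closing speculation is accurate: the $\ln\tau_{r+1}$ in the paper is slack coming from the equal-coefficient direction produced by the angle-halving construction, not something forced by the downstream telescoping.
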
	
\begin{proof}
Let $G_r$ be the set of the $\tau_{r+1}$ heaviest elements in $E_r$.  Let $H_r = G_r \cup (\supp(\mat{x}_*) \cap E_r)$.
We prove in Lemma~\ref{lem:tech-nice} in Appendix~\ref{app:nice} 
that there exists a unit descent direction ${\mat{n}}_r$ from $\mat{x}_r$ such that $\mat{n}_r$ is a conical combination of $\{-\mathrm{sign}((\nabla f(\mat x_r))_i) \cdot \mat{e}_i : i \in G_r\}$ and $\langle -\gamma_{\mat x_r} ,{\mat{n}}_r \rangle \geq \norm{\gamma_{\mat x_r} \! \downarrow \! H_r} \cdot \sqrt{\frac{\tau_{r+1}}{4(s+\tau_{r+1})\ln\tau_{r+1}}}$.  Let $\mat{n}_* = (\mat{x}_* - \mat{x}_r)/\norm{\mat{x}_* - \mat{x}_r}$.   By Lemma~\ref{lem:zero}, $\langle -\gamma_{\mat x_r} ,\mat{n}_* \rangle =
\langle -\gamma_{\mat x_r} \! \downarrow \! H_r,\mat{n}_* \rangle$.  Then, $\langle -\gamma_{\mat x_r},\mat{n}_* \rangle \leq \norm{\gamma_{\mat x_r} \! \downarrow \! H_r}$.  Hence, 
$\frac{\langle \gamma_{\mat{x}_r},\mat{n}_r\rangle}{\langle \gamma_{\mat{x}_r},\mat{n}_*\rangle} \geq \sqrt{\frac{\tau_{r+1}}{4(s+\tau_{r+1})\ln\tau_{r+1}}}$.

Let $\mat{y}_r$ be the point in the direction $\mat{n}_r$ from $\mat{x}_r$ that minimizes $F$. We have
$\frac{F(\mat{x}_{r+1}) - F(\mat{x}_*)}{F(\mat{x}_r) - F(\mat{x}_*)} = 1 - \frac{F(\mat{x}_r) - F(\mat{x}_{r+1})}{F(\mat{x}_r) - F(\mat{x}_*)} 
\leq 1 - \frac{F(\mat{x}_r) - F(\mat{y}_r)}{F(\mat{x}_r) - F(\mat{x}_*)}$.   By Lemma~\ref{gfy_to_gamma_y-t}(iii) and (iv), $F(\mat{x}_r) - F(\mat{y}_r) = -\frac{1}{2}\langle \gamma_{\mat{x}_r}, \mat{y}_r - \mat{x}_r \rangle$ and $F(\mat{x}_r) - F(\mat{x}_*) \leq -\langle \gamma_{\mat{x}_r}, \mat{x}_* - \mat{x}_r \rangle$. Therefore, $\frac{F(\mat{x}_{r+1}) - F(\mat{x}_*)}{F(\mat{x}_r) - F(\mat{x}_*)} \leq 1 - \frac{1}{2} \cdot \frac{\langle \gamma_{\mat{x}_r},\mat{y}_r-\mat{x}_r\rangle}{\langle \gamma_{\mat{x}_r},\mat{x}_*-\mat{x}_r\rangle} \leq 1- \frac{\eps}{2} \cdot \frac{\langle \gamma_{\mat{x}_r},\mat{n}_r\rangle^2}{\langle \gamma_{\mat{x}_r},\mat{n}_*\rangle^2}$ by Lemma~\ref{lem:full-t}, which is at most $1 - \frac{\eps\tau_{r+1}}{8(s+\tau_{r+1})\ln\tau_{r+1}}$ by the lower bound on $\frac{\langle \gamma_{\mat{x}_r},\mat{n}_r\rangle}{\langle \gamma_{\mat{x}_r},\mat{n}_*\rangle}$.\hfill\qed
\end{proof}

\cancel{
\begin{lemma}
\label{lem:gap-ratio-t}
If $\mat{x}_* \not= 0$, then $F(\mat{x}_*) \geq \eta\norm{\mat{b}}/4$.
\end{lemma}
\begin{proof}
	If $\norm{\mat{Ax}_*} \leq \norm{\mat{b}}/4$, then $F(\mat{x}_*) \geq \frac{1}{2}\norm{\mat{Ax} - \mat{b}}^2 \geq \frac{9}{32}\norm{\mat{b}}^2$.  If $\norm{\mat{Ax}_*} > \norm{\mat{b}}/4$, then $F(\mat{x}_*) \geq \eta\norm{\mat{x}_*}_1 \geq \eta\norm{\mat{x}_*} \geq \eta\norm{\mat{Ax}_*} > \eta\norm{\mat{b}}/4$.   Recall that $\eta < \norm{\mat{A}^t\mat{b}}_\infty \leq \norm{\mat{b}}$.  Therefore, $\eta\norm{\mat{b}}/4 < \frac{9}{32}\norm{\mat{b}}^2$.\hfill\qed
\end{proof}
}

Recall the parameter $h \in (1,2]$ in Algorithm~\ref{alg:1}.  The next result bounds the working set sizes up to the first solution with an additive error at most $\eps/\eta^2$.

\begin{theorem}
\label{thm:tau-sum}
Suppose that $\norm{\mat{A}} \leq 1$ and $\eta = \alpha\norm{\mat{A}^t\mat{b}}_\infty$ for a fixed $\alpha \in (0,1)$.  Scale space such that $\norm{b} = 1$.
Let $\kappa+1$ be the minimum index such that $F(\mat{x}_{\kappa+1}) - F(\mat{x}_*) \leq \eps/\eta^2$.
\begin{itemize}
\item If $h \leq 2^{O(\eps/(\ln n \ln (\eta/\eps)))}$, then $\sum_{i=1}^\kappa \tau_i = O\left(\frac{1}{\eps}(s+\tau)\log (s+\tau)\log\frac{\eta}{\eps}\right)$.
\item Otherwise, $\sum_{i=1}^{\kappa} \tau_i = O\left(\frac{1}{\eps}k\log k\log\frac{\eta}{\eps}\right)$.
\end{itemize}
\end{theorem}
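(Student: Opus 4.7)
My plan is to telescope the per-iteration contraction from Lemma~\ref{lem:full_cr-t} into a single geometric-convergence statement, extract a bound on $\sum_{r} c_r$ where $c_r = \frac{\eps\tau_r}{8(s+\tau_r)\ln\tau_r}$, and convert this to a bound on $\sum_{r=1}^{\kappa} \tau_r$ via a case split on $\max_{r\le\kappa} \tau_r$.

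I would first establish starting bounds. Since $\mat{x}_0 = \mat{0}$ and $\norm{\mat{b}}=1$, $F(\mat{x}_0) = 1/2$. Because $F$ is non-increasing along DWS's iterates (each $W_{r+1} \supseteq \supp(\mat{x}_r)$, so $\mat{x}_r$ is feasible for the $(r{+}1)$-st restricted problem), the inequality $\eta\norm{\mat{x}_r}_1 \leq F(\mat{x}_r) \leq 1/2$ yields $\norm{\mat{x}_r} \leq 1/(2\eta)$ and similarly $\norm{\mat{x}_*} \leq 1/(2\eta)$, so $\norm{\mat{x}_* - \mat{x}_r}^2 \leq 1/\eta^2$ throughout. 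For every $r \leq \kappa$, the definition of $\kappa$ gives $F(\mat{x}_r) - F(\mat{x}_*) > \eps/\eta^2 \geq \eps\norm{\mat{x}_* - \mat{x}_r}^2$, so Lemma~\ref{lem:full_cr-t} applies with the theorem's $\eps$ and yields $F(\mat{x}_{r+1}) - F(\mat{x}_*) \leq (1-c_{r+1})(F(\mat{x}_r) - F(\mat{x}_*))$. Using $1-c \leq e^{-c}$, telescoping from $r=0$ to $\kappa$, together with the crude tail bound $F(\mat{x}_{\kappa+1}) - F(\mat{x}_*) \geq (1 - c_{\kappa+1})\,\eps/\eta^2 = \Omega(\eps/\eta^2)$ (valid because $c_{\kappa+1} \leq \eps/(8\ln 2) < 1$), I obtain $\sum_{r=1}^{\kappa+1} c_r = O(\log(\eta/\eps))$.

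Next, I would convert this to the $\tau$-sum via the identity $\tau_r = \frac{8}{\eps}\, c_r(s+\tau_r)\ln\tau_r$. Using $\tau_r \leq k$ and $s \leq k$ gives $(s+\tau_r)\ln\tau_r \leq 2k\ln k$ uniformly, so
\[
\sum_{r=1}^{\kappa} \tau_r \leq \frac{16\, k\ln k}{\eps}\sum_{r=1}^{\kappa+1} c_r = O\!\left(\frac{1}{\eps}\, k\log k\, \log\frac{\eta}{\eps}\right),
\]
which handles the second case. For the first case, the same conversion shows it suffices to prove the structural claim that $\tau_r = O(s+\tau)$ for every $r \leq \kappa$; substituting into the identity then yields the required $O\!\left(\frac{1}{\eps}(s+\tau)\log(s+\tau)\log\frac{\eta}{\eps}\right)$ bound.

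The remaining and main obstacle is this structural claim. Since $\tau_{r+1} = h^{a_r}\tau$ with $a_r \leq a_{r-1}+1$ and $a_0 = 0$, we have $a_r \leq \kappa$, so the hypothesis $h \leq 2^{O(\eps/(\ln n \ln(\eta/\eps)))}$ forces $h^{a_r} = O(1)$ as soon as $\kappa = O\!\left(\frac{\ln n \log(\eta/\eps)}{\eps}\right)$. I would therefore try to bound $\kappa$ using the DWS update rule $a_r = m+1$, which ties the growth of $\tau_r$ to the support-set growth $g_r = |\supp(\mat{x}_r)| - |\supp(\mat{x}_{r-1})|$: during ``growth'' iterations ($a_r \geq 1$) one has $\tau_{r+1} < h^2 g_r$, whereas in ``stable'' iterations ($a_r = 0$) one has $\tau_{r+1} = \tau$. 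Stable iterations can be charged against the $\tau_r = \tau$ contribution to $\sum c_r$, and growth iterations against the total positive support growth; the KKT condition $|(\nabla f(\mat{x}_r))_i| \leq \eta$ for $i \in W_r \setminus \supp(\mat{x}_r)$ implies that a variable dropped from the working set cannot re-enter until its gradient magnitude again exceeds $\eta$, which is key to controlling the total add-drop cycles. The hardest step I anticipate is reconciling the possibly large oscillations in $|\supp(\mat{x}_r)|$ against the limited budget $\sum c_r = O(\log(\eta/\eps))$ to push the counting all the way to the claimed bound on $\kappa$.
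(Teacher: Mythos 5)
Your setup is sound and matches the paper's: the bound $\norm{\mat{x}_*-\mat{x}_r}^2\le 1/\eta^2$ via $\eta\norm{\mat{x}_r}\le F(\mat{x}_r)\le F(\mat{x}_0)=\tfrac12$ makes Lemma~\ref{lem:full_cr-t} applicable for all $r\le\kappa$, and telescoping gives the budget $\sum_r c_r = O(\log(\eta/\eps))$ (the paper extracts the same information via an AM--GM step). Your handling of the second bullet is correct and, if anything, cleaner than the paper's. The reduction of the first bullet to the structural claim $\tau_r=O(s+\tau)$ for all $r\le\kappa$ is also valid. The problem is that you do not prove that claim, and the route you sketch for it would fail. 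Bounding $a_r$ by $\kappa$ and then trying to show $\kappa=O\bigl(\frac{\ln n\log(\eta/\eps)}{\eps}\bigr)$ cannot work: iterations with $\tau_r=\tau$ small relative to $s$ each consume only $c_r=\Theta\bigl(\frac{\eps\tau}{s\ln\tau}\bigr)$ of the budget, so there can be $\Theta\bigl(\frac{1}{\eps}\frac{s}{\tau}\log\tau\log\frac{\eta}{\eps}\bigr)$ of them; $\kappa$ can thus far exceed $\frac{\ln n\log(\eta/\eps)}{\eps}$, and $h^{\kappa}$ is then unbounded under the stated hypothesis on $h$. The subsequent support-set/KKT oscillation accounting you propose is not needed and does not obviously close this.

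The missing idea, which is how the paper proves the first bullet, is to count only the \emph{large} steps. Fix a constant $c\ge 1$ and let $J=\{i\le\kappa:\tau_i>cs\}$. Each $i\in J$ has $c_i\ge\frac{\eps}{8}\cdot\frac{c}{c+1}\cdot\frac{1}{\ln\tau_{\max}}$ where $\tau_{\max}=\max_{i\in J}\tau_i$, so the budget forces $|J|=O\bigl(\frac{1}{\eps}\ln\tau_{\max}\log\frac{\eta}{\eps}\bigr)$ — a bound on the number of large iterations only, not on $\kappa$. Now use the growth constraint $\tau_{i+1}\le h\tau_i$ (which holds for consecutive iterations since $a_r\le a_{r-1}+1$): every iteration on the multiplicative climb from $\max\{hcs,\tau\}$ up to $\tau_{\max}$ has $\tau_i>cs$ and hence lies in $J$, so $\tau_{\max}\le h^{|J|}\cdot\max\{hcs,\tau\}$. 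Combining, $\ln h\cdot|J| = O\bigl(\frac{\eps}{\ln n\ln(\eta/\eps)}\bigr)\cdot O\bigl(\frac{1}{\eps}\ln\tau_{\max}\log\frac{\eta}{\eps}\bigr)=O\bigl(\frac{\ln\tau_{\max}}{\ln n}\bigr)=O(1)$ since $\tau_{\max}\le k\le n$, whence $\tau_{\max}=2^{O(1)}(s+\tau)$. This is exactly the structural claim you need, and it feeds directly into your conversion identity to finish the first case. In short: your skeleton is right, but the self-referential bound $\tau_{\max}\le h^{|J|}\max\{hcs,\tau\}$ with $|J|$ (not $\kappa$) controlled by the convergence budget is the key step you are missing.
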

\begin{proof}
Take any constant $c \geq 1$.  Divide $[\kappa]$ into two disjoint subsets $I$ and $J$ such that for all $i \in I$, $\tau_i \leq cs$, and for all $i \in J$, $\tau_i > cs$.

For any $\mat{x}_r$, $\eta\norm{\mat{x}_r} \leq \eta\norm{\mat{x}_r}_1 \leq F(\mat{x}_r) \leq F(\mat{x}_0) = \frac{1}{2}\norm{\mat{b}}^2 = \frac{1}{2}$.   The same argument works for $\mat{x}_*$.  It means that $\eps\norm{\mat{x}_* - \mat{x}_r}^2 \leq \eps/\eta^2$.  Therefore, for any $r \in [\kappa]$, $F(\mat{x}_r) - F(\mat{x}_*) > \eps/\eta^2 \geq \eps\norm{\mat{x}_*-\mat{x}_r}^2$ by assumption, which makes Lemma~\ref{lem:full_cr-t} applicable for all $r \in [\kappa]$.

View $I$ as a chronological sequence.  Let $i$ be the largest index in $I$.   Note that $F(\mat{x}_0) - F(\mat{x}_*) \leq F(\mat{x}_0) \leq \frac{1}{2}\norm{\mat{b}}^2 = \frac{1}{2}$.  Then, by Lemma~\ref{lem:full_cr-t}, $F(\mat{x}_{i}) - F(\mat{x}_*) \leq \frac{1}{2}\prod_{i \in I} \bigl(1-\frac{\eps\tau_i}{8(s+\tau_i)\ln (cs)}\bigr)$, which is at most $\frac{1}{2}\prod_{i \in I} \bigl(1-\frac{\eps\tau_i}{8(c+1)s\ln(cs)}\bigr)$.  Let $\tau_{\text{avg}} = \sum_{i\in I} \tau_i/|I|$.  It is well known that the geometric mean is at most the arithmetic mean.  Therefore, $\frac{1}{2}\prod_{i \in I} \bigl(1-\frac{\eps\tau_i}{8(c+1)s\ln(cs)}\bigr) \leq 
\frac{1}{2}\bigl(1 - \frac{\eps\tau_{\text{avg}}}{8(c+1)s\ln(cs)}\bigr)^{|I|} \leq \frac{1}{2}e^{-\eps\tau_{\text{avg}}|I|/(8(c+1)s\ln(cs))}$. This upper bound is at least $\eps/\eta^2$ so that $\mat{x}_{\kappa+1}$ is the first solution that satisfies $F(\mat{x}_{\kappa+1}) - F(\mat{x}_*) \leq \eps/\eta^2$.  Hence, $\frac{\eps\tau_{\text{avg}}|I|}{8(c+1)s\ln(cs)} \leq \ln \frac{\eta^2}{2\eps}$, which implies that $\sum_{i\in I} \tau_i = \tau_{\text{avg}}|I| = O\left(\frac{1}{\eps}s\log s \log\frac{\eta}{\eps}\right)$.

View $J$ as a chronological sequence.  Let $\tau_{\max} = \max_{i \in J} \tau_i$.  Take a contiguous subsequence of $J$ of length $(16\ln \tau_{\max})/\eps$.  Let $i$ and $j$ be the minimum and maximum indices in this subsequence, respectively.  By Lemma~\ref{lem:full_cr-t}, $F(\mat{x}_j) - F(\mat{x}_*) \leq e^{-1} \cdot (F(\mat{x}_i) - F(\mat{x}_*))$.  Since $F(\mat{x}_0) - F(\mat{x}_*) \leq F(\mat{x}_0) = \frac{1}{2}\norm{\mat{b}}^2 = \frac{1}{2}$, we can divide $J$ into no more than $\ln \frac{\eta^2}{2\eps}$ contiguous subsequences of length $(16\ln \tau_{\max})/\eps$.  It follows that $|J| \leq \frac{16}{\eps}\ln \tau_{\max} \cdot \ln \frac{\eta^2}{2\eps}$.  
The algorithm ensures that $\tau_{i+1} \leq h\tau_i$.  Extract the longest subsequence of $J$ (not necessarily contiguous) in which $\tau_i$ strictly increases.  Every consecutive $\tau_i$'s in this subsequence differ by a factor $h$.  This subsequence starts with $\min_{i \in J} \tau_i \leq \max\{hcs,\tau\}$.  If $h\leq 2^{O(\eps/(\ln n \ln (\eta/\eps)))}$, then $\tau_{\max} \leq h^{|J|} \cdot \max\{hcs,\tau\} \leq 2^{O(1)} \cdot (s+\tau)$.  So $\sum_{i \in J} \tau_i \leq O(s+\tau) \cdot |J| = O\left(\frac{1}{\eps}(s+\tau)\log (s+\tau)\log\frac{\eta}{\eps}\right)$.
If $h > 2^{O(\eps/(\ln n \ln (\eta/\eps)))}$, we still have $\tau_{\max} \leq k$ and hence $\sum_{i \in J} \tau_i \leq k|J| = O\left(\frac{1}{\eps}k\log k \log\frac{\eta}{\eps}\right)$.\hfill\qed
%
\end{proof}



\noindent {\em Remark 1.}  Clearly $\kappa \leq \sum_{i=1}^\kappa \tau_i$.  One can work out the exact upper bound for $\sum_{i=1}^\kappa \tau$ and hence $\kappa$.  DWS can be stopped after $\kappa$ iterations to obtain an error at most $\eps/\eta^2$, although DWS has probably terminated earlier in practice. 

\vspace{6pt}

\noindent {\em Remark 2.} Starting from $p_0$, we add at most $\sum_{i=1}^\kappa \tau_i$ free variables to any working set before reaching $\mat{x}_{\kappa+1}$.  Suppose that we set $p_0$ and $\tau$ to be $O(1)$.  If $\eps$ is given beforehand, we can ensure that every working set has $O(\frac{1}{\eps} s\log s \log \frac{\eta}{\eps})$ variables before reaching $\mat{x}_{\kappa}$.  So each call of the solver runs provably faster than using all $n$ variables.  When $\eps$ is not given, if $k = \Theta(s \log (n/s))$ (sufficient for the true signal to be recovered with high probability), every working set still has only $O(\frac{1}{\eps} s \cdot \mathrm{polylog}(n))$ variables.
%
%
Clearly, $|\supp(\mat{x}_r)| \leq |W_r|$.  It follows that $|\supp(\mat{x}_{r})| \leq p_0 + \sum_{i=1}^\kappa \tau_i$.  
We conclude that all solutions $\mat{x}_r$, $r \in [\kappa+1]$, are provably sparse if $\eps$ is given beforehand or $k = \Theta(s \log (n/s))$.

\vspace{6pt}

\noindent {\em Remark 3.}  Figure~\ref{fig:sota_supp_wss_12} shows that the working set sizes are at most $cs$ for some small constant $c$ in the experiments.  That is, $\max_{i \in [\kappa]} \tau_i = O(s)$.  Under this assumption, the proof of Theorem~\ref{thm:tau-sum} reveals that $\sum_{i=1}^\kappa \tau_i = O(\frac{1}{\eps} s \log s \log \frac{\eta}{\eps})$ even if $\eps$ is not given beforehand.  Then, the working set sizes and support set sizes can be bounded by $O(\frac{1}{\eps} s \log s \log \frac{\eta}{\eps})$ even if $\eps$ is not given beforehand. 

\vspace{6pt}

\noindent {\em Remark 4.} 
To prepare for the next iteration, we need to compute $\nabla f(\mat{x}_r) = \mat{A}^t\mat{A}\mat{x}_r - \mat{A}^t\mat{b}$. We precompute $\mat{A}^t\mat{b}$ in $O(kn)$ time. Let $w = |W_r| \leq \sum_{i=1}^\kappa \tau_i$.  Note that $|\supp(\mat{x}_r)| \leq w$.  To obtain $\mat{A}^t\mat{A}\mat{x}_r$, we use $\mat{A}_r \in \real^{k \times w}$ and $\supp(\mat{x}_r)$ to obtain $\mat{A}\mat{x}_r$ in $O(kw)$ time, and then we compute $\mat{A}^t (\mat{A}\mat{x}_r)$ in $O(kn)$ time.  We extract $\mat{A}_{r+1}$ from $\mat{A}$ corresponding to $W_{r+1}$ is $O(k|W_{r+1}|) = O(k \cdot \sum_{i=1}^\kappa \tau_i)$ time.

\vspace{6pt}

\noindent {\em Remark 5.}
		If $\norm{\mat{Ax}_*} \leq \norm{\mat{b}}/4$, then $F(\mat{x}_*) \geq \frac{1}{2}\norm{\mat{Ax} - \mat{b}}^2 \geq \frac{9}{32}\norm{\mat{b}}^2$.  If $\norm{\mat{Ax}_*} > \norm{\mat{b}}/4$, then $F(\mat{x}_*) \geq \eta\norm{\mat{x}_*}_1 \geq \eta\norm{\mat{x}_*} \geq \eta\norm{\mat{Ax}_*} > \eta\norm{\mat{b}}/4$.   Recall that $\eta < \norm{\mat{A}^t\mat{b}}_\infty \leq \norm{\mat{b}}$.   We conclude that $F(\mat{x}_*) \geq \eta\norm{b}/4$ which is at least $\eta/4$ after $\norm{\mat{b}}$ is scaled to 1.  Therefore, the additive error of $\eps/\eta^2$ in Theorem~\ref{thm:tau-sum} is at most $\frac{4\eps}{\eta^3} F(\mat{x}_*)$.

\bibliographystyle{splncs04}
\bibliography{mybibliography}

\begin{thebibliography}{10}
\providecommand{\url}[1]{\texttt{#1}}
\providecommand{\urlprefix}{URL }
\providecommand{\doi}[1]{https://doi.org/#1}

\bibitem{bar08}
Baraniuk, R., Davenport, M., DeVore, R., Wakin, M.: A simple proof of the
  restricted isometry property for random matrices. Constructive Approximation
  \textbf{28},  253--263 (2008)

\bibitem{skglm}
Bertrand, Q., Klopfenstein, Q., Bannier, P.A., Gidel, G., Massias, M.: Beyond
  {L}1: Faster and better sparse models with skglm. In: NeurIPS. pp.
  38950--38965 (2022)

\bibitem{fsso}
Bruckstein, A.M., Donoho, D.L., Elad, M.: From sparse solutions of systems of
  equations to sparse modeling of signals and images. SIAM Review
  \textbf{51}(1),  34--81 (2009)

\bibitem{rupe}
Candes, E., Romberg, J., Tao, T.: Robust uncertainty principles: exact signal
  reconstruction from highly incomplete frequency information. IEEE
  Transactions on Information Theory  \textbf{52}(2),  489--509 (2006)

\bibitem{l1magic}
Candes, E., Romberg, J.: {L}1-magic: Recovery of sparse signals via convex
  programming (2005), \url{https://candes.su.domains/software/l1magic/}

\bibitem{nosr}
Candes, E.J., Tao, T.: Near-optimal signal recovery from random projections:
  Universal encoding strategies? IEEE Transactions on Information Theory
  \textbf{52}(12),  5406--5425 (2006)

\bibitem{chengwongold}
Cheng, S.W., Wong, M.: On non-negative quadratic programming in geometric
  optimization. arXiv preprint arXiv:2207.07839  (2022)

\bibitem{dd04}
Daubechies, I., Defrise, D., Mol, C.D.: An iterative thresholding algorithm for
  linear inverse problems with a sparsity constraint. Communications on Pure
  and Applied Mathematics  \textbf{57}(11),  1413--1457 (2004)

\bibitem{cs}
Donoho, D.: Compressed sensing. IEEE Transactions on Information Theory
  \textbf{52},  1289--1306 (2006)

\bibitem{ddy08}
Donoho, D.L., Tsaig, Y.: Fast solution of $\ell _{1}$-norm minimization
  problems when the solution may be sparse. IEEE Transactions on Information
  Theory  \textbf{54}(11),  4789--4812 (2008)

\bibitem{spiv}
Duarte, M.F., Davenport, M.A., Takhar, D., Laska, J.N., Sun, T., Kelly, K.F.,
  Baraniuk, R.G.: Single-pixel imaging via compressive sampling. IEEE Signal
  Processing Magazine  \textbf{25}(2),  83--91 (2008)

\bibitem{gpsr}
Figueiredo, M.A.T., Nowak, R.D., Wright, S.J.: Gradient projection for sparse
  reconstruction: Application to compressed sensing and other inverse problems.
  IEEE Journal of Selected Topics in Signal Processing  \textbf{1}(4),
  586--597 (2007)

\bibitem{glmlassopaper}
Friedman, J.H., Hastie, T., Tibshirani, R.: Regularization paths for
  generalized linear models via coordinate descent. Journal of Statistical
  Software  \textbf{33}(1),  1--22 (2010),
  \url{https://doi.org/10.18637/jss.v033.i01}

\bibitem{fuchs}
Fuchs, J.: More on sparse representatios in arbitrary bases. IEEE Transactions
  on Information Theory  \textbf{50},  1341--1344 (2004)

\bibitem{picasso}
Ge, J., Li, X., Jiang, H., Liu, H., Zhang, T., Wang, M., Zhao, T.: Picasso: A
  sparse learning library for high dimensional data analysis in {R} and
  {P}ython. Journal of Machine Learning Research  \textbf{20}(44), ~1--5 (2019)

\bibitem{Blitz}
Johnson, T., Guestrin, C.: Blitz: A principled meta-algorithm for scaling
  sparse optimization. In: Proceedings of ICML. Proceedings of Machine Learning
  Research, vol.~37, pp. 1171--1179. PMLR, Lille, France (07--09 Jul 2015),
  \url{https://proceedings.mlr.press/v37/johnson15.html}

\bibitem{kse07}
Kim, S.J., Koh, K., Lustig, M., Boyd, S., Gorinevsky, D.: An interior-point
  method for large-scale $\ell_1$-regularized least squares. IEEE Journal of
  Selected Topics in Signal Processing  \textbf{1}(4),  606--617 (2007)

\bibitem{ssec}
Lin, X., Liu, Y., Wu, J., Dai, Q.: Spatial-spectral encoded compressive
  hyperspectral imaging. ACM Transactions on Graphics.  \textbf{33}(6),
  233:1--233:11 (2014)

\bibitem{mri}
Lustig, M., Donoho, D., Pauly, J.M.: Sparse {MRI}: The application of
  compressed sensing for rapid {MR} imaging. Magnetic resonance in medicine
  \textbf{58}(6),  1182—1195 (December 2007). \doi{10.1002/mrm.21391},
  \url{https://onlinelibrary.wiley.com/doi/pdfdirect/10.1002/mrm.21391}

\bibitem{celer2}
Massias, M., Vaiter, S., Gramfort, A., Salmon, J.: Dual extrapolation for
  sparse glms. Journal of Machine Learning Research  \textbf{21}(234),  1--33
  (2020)

\bibitem{monteiro89}
Monteiro, R.D.C., Adler, I.: Interior path following primal-dual algorithms.
  {P}art {II}: convex quadratic programming. Mathematical Programming
  \textbf{44},  43--66 (1989)

\bibitem{benchopt}
Moreau, T., Massias, M., Gramfort, A., Ablin, P., Bannier, P.A., Charlier, B.,
  Dagréou, M., Dupré~la Tour, T., Durif, G., F.~Dantas, C., Klopfenstein, Q.,
  Larsson, J., Lai, E., Lefort, T., Malézieux, B., Moufad, B., T.~Nguyen, B.,
  Rakotomamonjy, A., Ramzi, Z., Salmon, J., Vaiter, S.: Benchopt: Reproducible,
  efficient and collaborative optimization benchmarks. In: Proceedings of
  NeurIPS. pp. 25404--25421 (2022)

\bibitem{nesterov2012efficiency}
Nesterov, Y.: Efficiency of coordinate descent methods on huge-scale
  optimization problems. SIAM Journal on Optimization  \textbf{22}(2),
  341--362 (2012)

\bibitem{scikit}
Pedregosa, F., Varoquaux, G., Gramfort, A., Michel, V., Thirion, B., Grisel,
  O., Blondel, M., Prettenhofer, P., Weiss, R., Dubourg, V., Vanderplas, J.,
  Passos, A., Cournapeau, D., Brucher, M., Perrot, M., Duchesnay, E.:
  Scikit-learn: Machine learning in {P}ython. Journal of Machine Learing
  Research  \textbf{12},  2825–--2830 (2011)

\bibitem{fireworks}
Rakotomamonjy, A., Flamary, R., Salmon, J., Gasso, G.: Convergent working set
  algorithm for {L}asso with non-convex sparse regularizers. In: Proceedings of
  AISTAT. pp. 5196--5211 (2022)

\bibitem{nonl}
Ruszczynski, A.: Nonlinear Optimization. Princeton University Press (2006)

\bibitem{lasso}
Tibshirani, R.: Regression shrinkage and selection via the {L}asso. Journal of
  the Royal Statistical Society. Series B (Methodological)  \textbf{58}(1),
  267--288 (1996)

\bibitem{srfr}
Tropp, J.A., Gilbert, A.C.: Signal recovery from random measurements via
  orthogonal matching pursuit. IEEE Transactions on Information Theory
  \textbf{53}(12),  4655--4666 (2007)

\bibitem{tseng2009coordinate}
Tseng, P., Yun, S.: A coordinate gradient descent method for nonsmooth
  separable minimization. Mathematical Programming  \textbf{117},  387--423
  (2009)

\end{thebibliography}


\appendix




\newpage

\section{URLs to the Solver Pages}\label{urls}
Here is a list of URLs to the solver pages:
\begin{itemize}
    \item GPSR: \href{http://www.lx.it.pt/~mtf/GPSR/}{http://www.lx.it.pt/$\sim$mtf/GPSRR/}
    \item Celer: \href{https://github.com/mathurinm/celer}{https://github.com/mathurinm/celer}
    \item Skglm: \href{https://github.com/scikit-learn-contrib/skglm}{https://github.com/scikit-learn-contrib/skglm}
    \item Benchopt: \href{https://github.com/benchopt/benchopt}{https://github.com/benchopt/benchopt}
\end{itemize}
Note that these links are provided only for courtesy purposes. The authors do not have any direct or indirect control over the public pages and are unaffiliated.

\cancel{
	
\section{Missing proofs}
\label{app:th}

\subsection{Proof of Lemma~\ref{g_subgradient-t}}

We restate Lemma~\ref{g_subgradient-t} below for convenience. 
The correctness of Lemma~\ref{g_subgradient-t}(i) follows from standard materials in the literature~\cite{nonl}.
We give a direct proof for completeness.

\vspace{10pt}

\begingroup\itshape
\noindent {\bf Restate Lemma~1.}
\begin{enumerate}[{\em (i)}]
\item Take any $\mat{x} \in \real^n$.
\begin{enumerate}[{\em (a)}]
\item $\forall\, \xi \in \partial g(\mat x), \, \forall\, i \in [n]$, if $(\mat x)_i = 0$, then $(\xi)_i \in [-\eta,\eta]$; otherwise, $(\xi)_i = \mathrm{sign}((\mat{x})_i) \cdot \eta$.

\item Every vector that satisfies the conditions in {\em (i)} is a subgradient in $\partial g(\mat x)$.

\end{enumerate}

\item $\forall\, i \in E_r, \, \forall\, \gamma \in \partial F(\mat{x}_r)$, $\mathrm{sign}((\gamma)_i) = \mathrm{sign}((\nabla f(\mat{x}_r))_i) \in \{-1,1\}$.

\item $E_r = \bigl\{i \in [n] : \forall\, \gamma \in \partial F(\mat{x}_r), (\gamma)_i \not= 0 \bigr\}$.

\end{enumerate}
\endgroup
\begin{proof}
Consider (i)(a).  Take any $\xi \in \partial g(\mat{x})$.
By the definition of a subgradient, for all $\mat y \in\real^n$, $g(\mat y)-g(\mat x) \ge   \langle \xi, \mat y - \mat x \rangle$, which is equivalent to 
\begin{equation}
\eta \sum_{i=1}^n |(\mat y)_i| - \eta \sum_{i=1}^n |(\mat x)_i| \ge \sum_{i=1}^n  (\xi)_i \cdot (\mat y - \mat x)_i.
\label{eq:g-1}
\end{equation}

Take any index $i \in [n]$.  Let $Y_i = \bigl\{ \mat{y} \in \real^n : \forall j \not= i , (\mat y)_j=(\mat x)_j \bigr\}$.  Clearly, for every $\mat{y} \in Y_i$, applying \eqref{eq:g-1} to $\mat{y}$ gives 
\[
\eta |(\mat y)_i| - \eta |(\mat x)_i| \ge (\xi)_i \cdot (\mat y - \mat x)_i.
\]

\begin{itemize}

\item Case 1:~Suppose that $(\mat x)_i \geq 0$.  Choose $\mat{y} \in Y_i$ such that $(\mat y)_i > (\mat x)_i$. Then, $\eta(\mat y - \mat x)_i = \eta|(\mat y)_i|-\eta|(\mat x)_i| \ge (\xi)_i \cdot  (\mat y - \mat x)_i$.  Dividing both sides by $(\mat y- \mat x)_i$ gives $(\xi)_i\le \eta$. 

\item Case 2:~Suppose that $(\mat x)_i \leq 0$.  Choose $\mat{y} \in Y_i$ such that $(\mat y)_i < (\mat x)_i$.  Then, $-\eta(\mat y - \mat x)_i =\eta|(\mat y)_i|-\eta|(\mat x)_i| \ge (\xi)_i \cdot  (\mat y - \mat x)_i$.  Dividing both sides by $(\mat y- \mat x)_i$ gives $(\xi)_i\ge -\eta$. 

\end{itemize}

Combining cases~1 and~2 gives $(\xi)_i \in [-\eta,\eta]$ when $(\mat{x})_i = 0$.  Suppose that $(\mat{x})_i > 0$.  We already have $(\xi)_i \leq \eta$ by case~1. Choose $\mat{y} \in Y_i$ such that $0 < (\mat{y})_i < (\mat x)_i$.  Then, $\eta(\mat y - \mat x)_i = \eta|(\mat y)_i|-\eta|(\mat x)_i|\ge (\xi)_i \cdot  (\mat y - \mat x)_i$.  Dividing both sides by $(\mat{y}-\mat{x})_i$ gives $(\xi)_i \geq \eta$.  As a result, $(\xi)_i= \eta$.  Suppose that $(\mat x)_i < 0$.  We already have $(\xi)_i \geq -\eta$ by case~2.  Choose $\mat{y} \in Y_i$ such that $(\mat{x}_r)_i < (\mat{y})_i < 0$.  Then, $-\eta(\mat{y} - \mat{x}_r)_i = \eta|(\mat y)_i|-\eta|(\mat x)_i| \ge (\xi)_i \cdot  (\mat y - \mat x)_i$.  Dividing both sides by $(\mat y- \mat x)_i$  gives $(\xi)_i \leq -\eta$.  In all, $(\xi)_i= -\eta$.  This completes the proof of (i)(a).

Consider (i)(b).  Take any vector $\xi \in \real^n$ that satisfies the conditions in (i)(a).  Under these conditions, it is easy to verify that for every $\mat{y} \in \real^n$ and every $i \in [n]$, $\eta |(\mat y)_i| - \eta |(\mat x)_i| \ge (\xi)_i \cdot (\mat y - \mat x)_i$.  Then, for all $\mat{y} \in \real^n$, 
\begin{align*}
g(\mat{y}) - g(\mat{x}) = \eta \sum_{i=1}^n |(\mat y)_i| - \eta \sum_{i=1}^n |(\mat x)_i| \ge \sum_{i=1}^n (\xi)_i \cdot (\mat y - \mat x)_i = \langle \xi, \mat y - \mat x \rangle,
\end{align*}
which implies that $\xi$ is a subgradient in $\partial g(\mat{x})$.

Consider (ii).  Take any $i \in E_r$.  By definition, $|(\nabla f(\mat{x}_r))_i| > \eta$. So $(\nabla f(\mat{x}_r))_i \not= 0$. It also follows from (i) that for every $\xi \in \partial g(\mat{x}_r)$, if $(\nabla f(\mat{x}_r))_i > 0$, then $(\nabla f(\mat{x}_r))_i + (\xi)_i > 0$, and if $(\nabla f(\mat{x}_r))_i < 0$, then $(\nabla f(\mat{x}_r))_i + (\xi)_i < 0$.  Note that $\partial F(\mat{x}_r) = \{ \nabla f(\mat{x}_r) + \xi : \xi \in \partial g(\mat{x}_r)\}$.  In other words, for every $i \in E_r$ and every $\gamma \in \partial F(\mat{x}_r)$, $\mathrm{sgn}((\gamma)_i) = \mathrm{sgn}((\nabla f(\mat{x}))_i) \in \{-1,1\}$.  This proves (ii).

Consider (iii).   For all $i \not\in [n] \setminus (W_r \cup E_r)$, we have $(\mat{x}_r)_i = 0$ and $|(\nabla f(\mat{x}_r))_i| \leq \eta$.  By (i)(b), for all $i \not\in [n] \setminus (W_r \cup E_r)$, every value in $[-\eta,\eta]$ is a legitimate $i$-th coordinate for a subgradient of $g$ at $\mat{x}_r$, which includes $-(\nabla f(\mat{x}_r))_i$.  Hence, there exists $\gamma \in \partial F(\mat{x}_r)$ such that $(\gamma)_i = 0$ for all $i \not\in [n] \setminus (W_r \cup E_r)$.  Since $\mat{x}_r$ is the optimal solution with respect to the working set $W_r$, there exists $\gamma \in \partial F(\mat{x}_r)$ such that $(\gamma)_i = 0$ for all $i \in W_r$.  We conclude that for every $i \in [n] \setminus E_r$, there exists $\gamma \in \partial F(\mat{x}_r)$ such that $(\gamma)_i = 0$.  By the result in (ii), for every $i \in E_r$ and every $\gamma \in \partial F(\mat{x}_r)$, $(\gamma)_i \not = 0$.  This proves the correctness of (iii), i.e., $i \in E_r$ if and only if $(\gamma)_i \not= 0$ for all $\gamma \in \partial F(\mat{x}_r)$.\hfill\qed
\end{proof}

\subsection{Proof of Lemma~\ref{gfy_to_gamma_y-t}}

\begingroup\itshape
\noindent{\bf Restate Lemma~4.}
Let $\mat{n}_r$ be any unit conical combination of $\{-\mathrm{sign}((\nabla f(\mat x_r))_i) \cdot \mat{e}_i : i \in E_r\}$.  Let $\mat{y}_r$ be the point in direction $\mat{n}_r$ from $\mat{x}_r$ that minimizes $F$. 
\begin{enumerate}[{\em (i)}]

\item There exists $\xi \in \partial g(\mat y_r)$ such that  $ \langle  \mat \nabla f(\mat y_r) + \xi, \mat y_r - \mat x_r\rangle  = 0$.

\item For every $\xi \in \partial g(\mat{y}_r)$, both $\langle \zeta_{\mat{x}_r} - \xi, \mat{x}_r \rangle$ and $\langle \zeta_{\mat{x}_r} - \xi, \mat{y}_r\rangle$ are zero.

\item For every $\mat z \in \real^n$, $F(\mat x_r)-F(\mat z) \le - \langle  \gamma_{\mat x_r}, \mat  z - \mat x_r\rangle$.

\item $F(\mat x_r)-F(\mat y_r) = - \frac{1}{2}\langle \gamma_{\mat x_r}, \mat y_r - \mat x_r\rangle$.

\end{enumerate}
\endgroup
\begin{proof}
Since $\mat{n}_r$ is a descent direction by Lemma~\ref{lem:32}, the point $\mat{y}_r$ is well defined. Consider (i).  Let $L$ denote the line through $\mat{x}_r$ parallel to $\mat{n}_r$.  Since the minimum of $F$ in $L$ is achieved at $\mat{y}_r$, it is known that there exists $\gamma \in \partial F(\mat y_r)$ such that $\langle \gamma, \mat{z} - \mat{y}_r \rangle\ge 0$ for all $\mat z \in L$.  Note that $\mat{x}_r \in L$.  Choose the point $\mat{x} \in L$ such that $\mat{x}_r - \mat{y}_r = \mat{y}_r - \mat{x}$.  Then, we have $\langle \gamma, \mat{x}_r - \mat{y}_r \rangle \geq 0$ and $\langle \gamma, \mat{x} - \mat{y}_r \rangle \geq 0$.  The second inequality also implies that $\langle \gamma, \mat{x}_r - \mat{y}_r \rangle = -\langle \gamma, \mat{x}-\mat{y}_r\rangle \leq 0$.  It follows that $\langle \gamma, \mat{x}_r - \mat{y}_r \rangle = 0$, which implies that there exists $\xi \in \partial g(\mat{y}_r)$ such that $\langle \nabla f(\mat{y}_r) + \xi, \mat{x}_r - \mat{y}_r \rangle = 0$.  This proves (i).

Consider (ii).  Take any $i \in \supp(\mat{n}_r)$.  As $i \in E_r$ by definition, we have $(\mat{x}_r)_i = 0$ and $(\zeta_{\mat{x}_r})_i = -\mathrm{sign}((\nabla f(\mat{x}_r))_i) \cdot \eta$.  Also, $\mathrm{sign}((\mat{y}_r)_i) = -\mathrm{sign}((\nabla f(\mat{x}_r)_i)$ because we descend from $\mat{x}_r$ in direction $\mat{n}_r$ to reach $\mat{y}_r$.  By Lemma~\ref{g_subgradient-t}(i)(a), $(\xi)_i = \mathrm{sign}((\mat{y}_r)_i) \cdot \eta = -\mathrm{sign}((\nabla f(\mat{x}_r)_i) \cdot \eta$.  Therefore, $(\zeta_{\mat{x}_r})_i = (\xi)_i$.  For any $i \not\in \supp(\mat{n}_r)$, we have $(\mat{x}_r)_i = (\mat{y}_r)_i$.  If they are not zero, then $(\zeta_{\mat{x}_r})_i$ and $(\xi)_i$ are identical by Lemma~\ref{g_subgradient-t}(i)(a).  We conclude that both $((\zeta_{\mat{x}_r})_i - (\xi)_i) \cdot (\mat{x}_r)_i$ and  $((\zeta_{\mat{x}_r})_i - (\xi)_i) \cdot (\mat{y}_r)_i$ are zero for all $i \in [n]$.  This proves (ii).

Before proving (iii) and (iv), we first prove the following equation: $\forall \, \mat x, \mat z \in \mathbb{R}^n, \forall\, \zeta_1 \in \partial g(\mat x), \forall \, \zeta_2 \in \partial g(\mat z)$,
\begin{equation}
F(\mat x)-F(\mat z) = -\frac{1}{2}\|\mat A (\mat z - \mat x)\|^2- \langle  \mat \nabla f(\mat x) + \zeta_1, \mat z - \mat x\rangle +\langle  \zeta_1 - \zeta_2, \mat z \rangle.
\label{eq:-1}
\end{equation}
Take any $i\in [n]$ and any $\zeta_1 \in \partial g(\mat{x})$. By Lemma~\ref{g_subgradient-t}(i)(a), $(\zeta_1)_i \cdot (\mat x)_i= \eta|(\mat x)_i|$.  Therefore, $\langle {\zeta}_1  , \mat x \rangle = \sum_{i=1}^n (\zeta_1)_i \cdot (\mat x)_i = \eta \|\mat x\|_1 = g(\mat x)$, which implies that
\begin{equation}
g(\mat x)-g(\mat z)= \langle {\zeta}_1  , \mat x \rangle -\langle \zeta_2  , \mat z \rangle.
\label{eq:0}
\end{equation}
It has been proved in our unpublished manuscript~\cite{chengwongold} that $f(\mat x)-f(\mat z) = -\frac{1}{2}\|\mat A (\mat z - \mat x)\|^2 - \langle  \mat \nabla f(\mat x), \mat z - \mat x\rangle$.  We give the proof below for completeness.  For all $s \in [0,1]$, define $\mat{z}_s = \mat{x} + s(\mat{z}-\mat{x})$.  By the chain rule, we have $\frac{\partial f}{\partial s} = \bigl\langle \frac{\partial f}{\partial \mat{z}_s},\frac{\partial \mat{z}_s}{\partial s} \bigr\rangle
	= \bigl\langle \nabla f(\mat{z}_s), \, \mat{z} -\mat{x} \bigr\rangle$.
	We integrate along a linear movement from $\mat{x}$ to $\mat{z}$.  Using the fact that $\nabla f(\mat{z}_s) = \mat{A}^t\mat{A}(\mat{x} + s(\mat{z}-\mat{x})) - \mat{A}^t \mat b = \nabla f(\mat{x}) + s\mat{A}^t\mat{A}(\mat{z}-\mat{x})$, we obtain $f(\mat{z}) = f(\mat{x}) + \int_{0}^1 \langle \nabla f(\mat{z}_s),  \mat{z} -\mat{x} \rangle \, \mathtt{d}s = f(\mat{x}) + \int_0^1 \langle \nabla f(\mat{x}), \, \mat{z}-\mat{x} \rangle \, \mathtt{d}s + 
		\int_0^1 s\iprod{\mat{A}^t\mat{A}(\mat{z}-\mat{x}), \, \mat{z}-\mat{x}} \, \mathtt{d}s \nonumber = f(\mat{x}) + \bigl[\langle \nabla f(\mat{x}), \, \mat{z}-\mat{x} \rangle \cdot s\bigr]^1_0 + 
		\bigl[\frac{1}{2}\norm{\mat{A}(\mat{z}-\mat{x})}^2 \cdot s^2 \bigr]^1_0 = f(\mat{x}) + \langle \nabla f(\mat{x}), \, \mat{z}-\mat{x} \rangle + \frac{1}{2}\norm{\mat{A}(\mat{z}-\mat{x})}^2$.	It follows immediately that $f(\mat{x}) - f(\mat{z}) =  -\langle \nabla f(\mat{x}), \, \mat{z}-\mat{x} \rangle - \frac{1}{2}\norm{\mat{A}(\mat{z}-\mat{x})}^2$.  By~\eqref{eq:0}, we can add $g(\mat{x}) - g(\mat{z})$ to the left side of this equation and $\langle \zeta_1  , \mat x \rangle -\langle \zeta_2  , \mat z \rangle$ to the right side.  We get $F(\mat x)-F(\mat z) = -\frac{1}{2}\|\mat A (\mat z - \mat x)\|^2 - \langle  \mat \nabla f(\mat x), \mat z - \mat x\rangle +\langle \zeta_1  , \mat x \rangle -\langle \zeta_2  , \mat z \rangle  = -\frac{1}{2}\|\mat A (\mat z - \mat x)\|^2- \langle  \mat \nabla f(\mat x), \mat z - \mat x\rangle +\langle \zeta_1  , \mat x \rangle -\langle \zeta_2 , \mat z \rangle  -\langle \zeta_1  , \mat z \rangle +\langle \zeta_1  , \mat z \rangle = -\frac{1}{2}\|\mat A (\mat z - \mat x)\|^2- \langle  \mat \nabla f(\mat x) + \zeta_1, \mat z - \mat x\rangle +\langle  \mat \zeta_1 - \zeta_2  , \mat z \rangle$.  This completes the proof of \eqref{eq:-1}. 

Consider (iii).  By \eqref{eq:-1} with $\mat{x} = \mat{x}_r$ and $\zeta_1 = \zeta_{\mat{x}_r}$, we get 
\begin{align}
F(\mat x_r)-F(\mat z) 
&=~-\frac{1}{2}\|\mat A (\mat  z - \mat x_r)\|^2- \langle  \mat \nabla f(\mat x_r) + \zeta_{\mat{x}_r}, \mat  z - \mat x_r\rangle +\langle  \mat \zeta_{\mat x_r}- \zeta_2, \mat  z \rangle \nonumber \\
&\le~-\langle  \mat \nabla f(\mat x_r) + \zeta_{\mat{x}_r}, \mat  z - \mat x_r\rangle +\langle  \mat \zeta_{\mat x_r}- \zeta_2  , \mat  z \rangle. \label{eq:1}
\end{align}
Take any $i\in [n]$. By Lemma~\ref{g_subgradient-t}(i)(a), if $(\mat{z})_i > 0$, then $(\zeta_2)_i = \eta \geq (\zeta_{\mat{x}_r})_i$, and if $(\mat{z})_i < 0$, then $(\zeta_2)_i = -\eta \leq (\zeta_{\mat{x}_r})_i$.  As a result, $(\zeta_{\mat{x}_r} - \zeta_2)_i \cdot (\mat{z})_i \leq 0$ for all~$i$, proving that $\langle \zeta_{\mat{x}_r} -\zeta_2, \mat{z} \rangle \leq 0$.  Substituting $\langle \zeta_{\mat{x}_r} -\zeta_2, \mat{z} \rangle \leq 0$ into \eqref{eq:1} gives $F(\mat x_r)-F(\mat z) \le - \langle  \mat \nabla f(\mat x_r) + \zeta_{\mat{x}_r}, \mat  z - \mat x_r\rangle = - \langle  \gamma_{\mat x_r}, \mat  z - \mat x_r\rangle$.  This proves (iii).

Consider (iv). Let $\xi$ be any subgradient in $\partial g(\mat{y}_r)$ that satisfies Lemma~\ref{gfy_to_gamma_y-t}(i).  By \eqref{eq:-1} with $\mat x=\mat y_r$, $\zeta_1 = \xi$, $\mat z=\mat x_r$, and $\zeta_2 = \zeta_{\mat{x}_r}$, we have $F(\mat{x}_r) - F(\mat{y}_r) = \frac{1}{2}\norm{\mat{A}(\mat{x}_r - \mat{y}_r)}^2 + \langle \nabla f(\mat{y}_r) + \xi, \mat{x}_r - \mat{y}_r \rangle  - \langle \xi - \zeta_{\mat{x}_r}, \mat{x}_r\rangle$.
The middle term vanishes by Lemma~\ref{gfy_to_gamma_y-t}(i).  Therefore,
\begin{equation}
    F(\mat x_r)-F(\mat y_r) = \frac{1}{2}\|\mat A (\mat x_r - \mat y_r)\|^2 +\langle \mat \zeta_{\mat x_r}-\xi  , \mat x_r \rangle.  \label{eq:2}
\end{equation}
By \eqref{eq:-1} again with
$\mat x=\mat x_r$, $\zeta_1 = \zeta_{\mat{x}_r}$, $\mat z=\mat y_r$, and $\zeta_2 = \xi$.  It gives
$F(\mat x_r)-F(\mat y_r)  = -\frac{1}{2}\|\mat A (\mat x_r - \mat y_r)\|^2- \langle  \mat \nabla f(\mat x_r) + \zeta_{\mat{x}_r}, \mat y_r - \mat x_r\rangle +\langle  \mat \zeta_{\mat x_r}- \mat \xi  , \mat y_r \rangle$.
Summing the above equation and (\ref{eq:2}) gives:
\begin{align*}
2F(\mat x_r)-2F(\mat y_r) 
&= -\langle  \mat \nabla f(\mat x_r) + \zeta_{\mat{x}_r}, \mat y_r - \mat x_r\rangle +\langle  \mat \zeta_{\mat x_r}- \mat \xi  , \mat y_r \rangle  +\langle \mat \zeta_{\mat x_r}-\xi  , \mat x_r \rangle \\
&= -\langle  \mat \nabla f(\mat x_r) + \zeta_{\mat{x}_r}, \mat y_r - \mat x_r\rangle \tag*{\text{($\because$ Lemma~\ref{gfy_to_gamma_y-t}(ii))}} \\
&= -\langle \gamma_{\mat{x}_r},\mat{y}_r-\mat{x}_r \rangle.
\end{align*}
This completes the proof of (iv).\hfill\qed
\end{proof}

}

\section{Existence of a good descent direction}
\label{app:nice}

Define the following subset of $E_r$:
\begin{align*}
H_r = \bigl\{ i \in E_r :&~~i \in \supp(\mat{x}_*) \,\,\, \vee \, \\ 
&~~\text{$i$ is one of the $\tau_{r+1}$ heaviest elements in $E_r$} \bigr\}.
\end{align*}
Recall that for any vector $\gamma \in \real^n$, $\gamma \! \downarrow 
\! H_r$ denotes the projection of $\gamma$ in the linear subspace spanned by $\{\mat{e}_i : i \in H_r\}$.

For all $i \in E_r$, let $\mat{s}_i = -\mathrm{sign}((\nabla f(\mat{x}_r))_i) \cdot \mat{e}_i$.  Recall from Lemma~\ref{lem:32} that every conical combination of $\{\mat{s}_i : i \in E_r\}$ is a descent direction from $\mat{x}_r$.

\begin{lemma}
\label{lem:tech-1}
  For any $\alpha \in (0,1]$, there exists $j$ among the $t=\alpha |H_r|$ heaviest elements in $H_r$ such that for every $i \in H_r$, if the weight of $i$ is at least the weight of $j$, then $\langle \gamma_{\mat x_r} \! \downarrow \! H_r,\mat{s}_i \rangle^2 \geq \norm{\gamma_{\mat x_r}  \! \downarrow \! H_r}^2 \cdot \alpha/(2j\ln t)$.
\end{lemma}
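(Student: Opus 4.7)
The plan is to translate the inequality into a statement purely about the sorted sequence of absolute values of the coordinates of $\gamma_{\mat{x}_r} \! \downarrow \! H_r$, and then establish the bound by a contradiction argument using the harmonic sum. For any $i \in E_r \supseteq H_r$, Lemma~\ref{g_subgradient-t}(ii) gives $\mathrm{sign}((\gamma_{\mat{x}_r})_i) = \mathrm{sign}((\nabla f(\mat{x}_r))_i) \in \{-1,1\}$, and since $\mat{s}_i = -\mathrm{sign}((\nabla f(\mat{x}_r))_i)\cdot \mat{e}_i$, it follows that $\langle \gamma_{\mat{x}_r} \! \downarrow \! H_r, \mat{s}_i\rangle = -|(\gamma_{\mat{x}_r})_i|$ for each $i \in H_r$. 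Moreover, the identity $(\gamma_{\mat{x}_r})_i = (\nabla f(\mat{x}_r))_i - \mathrm{sign}((\nabla f(\mat{x}_r))_i)\cdot \eta$ gives $|(\gamma_{\mat{x}_r})_i| = |(\nabla f(\mat{x}_r))_i| - \eta$ on $E_r$, so ordering $H_r$ by weight coincides with ordering by $|(\gamma_{\mat{x}_r})_i|$.

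Relabel the elements of $H_r$ in decreasing order of weight so that $v_1 \geq v_2 \geq \cdots \geq v_m$, where $m = |H_r|$ and $v_k$ denotes $|(\gamma_{\mat{x}_r})_{i_k}|$, and set $V = \norm{\gamma_{\mat{x}_r} \! \downarrow \! H_r}^2 = \sum_{k=1}^m v_k^2$. After the translation above, the lemma reduces to showing that some rank $j \in [t]$ satisfies $v_j^2 \geq V\alpha/(2j\ln t)$, since for every $i$ at least as heavy as the element of rank $j$ we have $|(\gamma_{\mat{x}_r})_i|^2 \geq v_j^2$ by monotonicity.

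I would argue by contradiction: suppose $v_j^2 < V\alpha/(2j\ln t)$ for every $j \in [t]$. For the top $t$ ranks, summing and using $\sum_{j=1}^t 1/j \leq 1 + \ln t$ gives
\[
\sum_{j=1}^t v_j^2 \,<\, \frac{V\alpha}{2\ln t}\,(1 + \ln t).
\]
For the tail, $v_j \leq v_t$ for $j > t$ combined with the hypothesis at $j=t$ yields
\[
\sum_{j=t+1}^m v_j^2 \,\leq\, (m-t)\,v_t^2 \,<\, \frac{V\alpha\,(m-t)}{2t\ln t}.
\]
Adding the two bounds and using $t = \alpha m$ (so $(m-t)/t = 1/\alpha - 1$) gives, after simplification, $V < V\alpha/2 + V/(2\ln t)$, equivalently $1 < \alpha/2 + 1/(2\ln t)$. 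For $\alpha \in (0,1]$, this forces $\ln t < 1/(2-\alpha) \leq 1$, contradicting that $t$ is at least a small absolute constant (which can be assumed since otherwise $\ln t$ degenerates in the bound itself).

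The obstacle is not conceptual but numerical accounting: one needs the two-term bound on $V$ to be strictly less than $V$, and this only happens once $\ln t$ exceeds $1/(2-\alpha)$. The sign-matching step via Lemma~\ref{g_subgradient-t}(ii) is short, and the contradiction argument is an elementary harmonic-series estimate; the only care needed is keeping the factor $\alpha$ and the split between top-$t$ and tail consistent so that the cross-term $V\alpha\cdot(m-t)/(2t\ln t)$ collapses to the clean $V/(2\ln t)$.
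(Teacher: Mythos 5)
Your proof is correct and follows essentially the same route as the paper's: both arguments pigeonhole the sorted squared coordinates of $\gamma_{\mat{x}_r}\!\downarrow\! H_r$ against the harmonic profile $\alpha/(2j\ln t)$, the only difference being that you close the pigeonhole by a global contradiction with the explicit tail bound $\sum_{j>t} v_j^2 \le (m-t)v_t^2$, whereas the paper notes that the $t$ heaviest elements carry at least an $\alpha$-fraction of the total mass while the profile over $[t]$ sums to at most $\alpha$. The small-$t$ caveat you flag (needing $\ln t \ge 1/(2-\alpha)$) is not a gap specific to your argument: the paper's step $\sum_{i=1}^t 1/i \le 2\ln t$ likewise requires $t \ge e$.
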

\begin{proof}
Consider a histogram $T_1$ of $\alpha/(2i\ln t)$ against $i \in [t]$.   The total length of the vertical bars in $T_1$ is $\sum_{i=1}^t \alpha/(2i\ln t) \leq \alpha$ as $\sum_{i=1}^t 1/i \leq 1 + \ln t \leq 2\ln t$.

Consider another histogram $T_2$ of $\langle \gamma_{\mat{x}_r}    \! \downarrow \! H_r,\mat{s}_i\rangle^2/\norm{\gamma_{\mat{x}_r}  \! \downarrow \! H_r}^2$ against $i \in H_r$.  By Lemma~\ref{g_subgradient-t}(ii), $-\mathrm{sign}((\gamma_{\mat{x}_r})_i) = \mathrm{sign}(\mat{s}_i)$ for all $i \in H_r$.  Therefore, $-\gamma_{\mat{x}_r}  \! \downarrow \! H_r$ is a conical combination of $\{\mat{s}_i : i \in H_r \}$.  It follows that the total length of the vertical bars in $T_2$, which is $\sum_{i\in H_r} \langle \gamma_{\mat x_r} \! \downarrow \! H_r,\mat{s}_i\rangle^2/\norm{\gamma_{\mat{x}_r} \! \downarrow \! H_r}^2$, is equal to 1.

For $i \in E_r$, $(\gamma_{\mat{x}_r})_i = (\nabla f(\mat{x}_r))_i + (\zeta_{\mat{x}_r})_i$, $(\zeta_{\mat{x}_r})_i = -\mathrm{sign}((\nabla f(\mat{x}))_i) \cdot \eta$, and $|(\nabla f(\mat{x}_r))_i| > \eta$.  Therefore, $(\gamma_{\mat{x}_r})_i = \mathrm{sign}((\nabla f(\mat{x}_r))_i) \cdot (|(\nabla f(\mat{x}_r))_i| - \eta)$, which implies that $|(\gamma_{\mat{x}_r})_i| = |(\nabla f(\mat{x}_r))_i| - \eta$.  Hence, the $\tau_{r+1}$ heaviest elements of $E_r$ are also the elements of $E_r$ with the $\tau_{r+1}$ largest $|(\gamma_{\mat{x}_r})_i|$'s. 
Consequently, the total length of the vertical bars in $T_2$ for the first $t=\alpha |H_r|$ indices is at least $\alpha$.  

There must be an index $j$ among the $t$ heaviest elements of $H_r$ such that the vertical bar in $T_2$ at $j$ is not shorter than the vertical bar in $T_1$ at $j$.  That is, for every $i \in H_r$, if the weight of $i$ is at least the weight of $j$, then $\langle \gamma_{\mat x_r} \! \downarrow \! H_r,\mat{s}_i \rangle^2/\norm{\gamma_{\mat x_r}  \! \downarrow \! H_r}^2 \geq \alpha/(2j\ln t)$.\hfill\qed
\end{proof}

Let $G_r$ be the subset of the $\tau_{r+1}$ heaviest elements in $E_r$, which are also the $\tau_{r+1}$ heaviest elements in $H_r$.  Using $\alpha = \frac{\tau_{r+1}}{s+\tau_{r+1}}$, Lemma~\ref{lem:tech-1} implies that for every $i \in G_r$, $\gamma_{\mat{x}_r} \! \downarrow \! \{i\}$ makes an angle no larger than $\arccos\left(1/\sqrt{2(s+\tau_{r+1})\ln\tau_{r+1}}\right)$ with $\gamma_{\mat{x}_r} \! \downarrow \! H_r$.  This is the basis that a descent direction can be obtained using a smaller subset $G_r$ of $E_r$.  This angle bound can be reduced using Lemma~\ref{lem:nice} below which is proved in our unpublished manuscript~\cite{chengwongold}.

\begin{figure}
	\centerline{\includegraphics[scale=0.65]{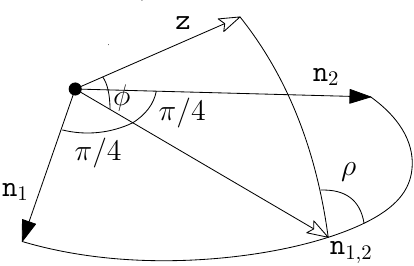}}
	\caption{The vector $\mat{n}_{1,2}$ bisects the right angle $\angle (\mat{n}_1,\mat{n}_2)$.  The angle $\rho$ is at least $\pi/2$.}
	\label{fg:nice}
\end{figure}

\begin{lemma}
	\label{lem:nice}
	Take any $c \leq 1/\sqrt{2}$.  Let $\mat{z}$ be a vector in $\real^D$ for some $D \geq 2$.  Suppose that there is a set $V$ of unit vectors in $\real^D$ such that the vectors in $V$ are mutually orthogonal, and for every $\mat{n} \in V$, $\cos \angle (\mat{n},\mat{z}) \geq c|V|^{-1/2}$.  There exists a conical combination $\mat{y}$ of the vectors in $V$ such that $\cos\angle (\mat{y},\mat{z}) \geq c/\sqrt{2}$.
\end{lemma}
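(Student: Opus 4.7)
The plan is to take $\mat{y}$ to be the orthogonal projection of $\mat{z}$ onto $\mathrm{span}(V)$, with no rescaling needed. Write $V = \{\mat{n}_1,\ldots,\mat{n}_k\}$ where $k = |V|$, set $c_i = \langle \mat{n}_i, \mat{z}\rangle$, and define $\mat{y} = \sum_{i=1}^{k} c_i\,\mat{n}_i$. The hypothesis $\cos\angle(\mat{n}_i,\mat{z}) \geq c/\sqrt{k}$ is exactly $c_i \geq (c/\sqrt{k})\norm{\mat{z}} > 0$ for every $i$, so all coefficients are strictly positive and $\mat{y}$ is automatically a conical combination of the vectors in $V$.

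First I would compute the two pieces of the cosine using orthonormality. By pairwise orthogonality of the $\mat{n}_i$, $\norm{\mat{y}}^2 = \sum_i c_i^2$, and $\langle \mat{y},\mat{z}\rangle = \sum_i c_i \cdot c_i = \sum_i c_i^2 = \norm{\mat{y}}^2$. Hence $\cos\angle(\mat{y},\mat{z}) = \langle \mat{y},\mat{z}\rangle/(\norm{\mat{y}}\,\norm{\mat{z}}) = \norm{\mat{y}}/\norm{\mat{z}}$, which is just the geometric identity that the cosine of the angle between $\mat{z}$ and its orthogonal projection onto a subspace equals the ratio of their lengths.

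The final step is a term-by-term lower bound on $\norm{\mat{y}}^2$: using $c_i^2 \geq c^2\norm{\mat{z}}^2/k$ for each of the $k$ summands gives $\norm{\mat{y}}^2 \geq c^2\norm{\mat{z}}^2$, and therefore $\cos\angle(\mat{y},\mat{z}) \geq c \geq c/\sqrt{2}$, in fact sharper by a factor of $\sqrt{2}$ than what the lemma asks for. I do not see a genuine obstacle along this route --- it is essentially Pythagoras plus a per-coordinate inequality --- and the hypothesis $c \leq 1/\sqrt{2}$ is used only implicitly, to keep the strengthened conclusion $\cos\angle \geq c$ compatible with $\cos\angle \leq 1$.

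An alternative that mirrors Figure~\ref{fg:nice} more literally is an iterative halving argument: partition $V$ into orthogonal pairs, replace each pair $(\mat{n}_i,\mat{n}_j)$ by its unit bisector $(\mat{n}_i+\mat{n}_j)/\sqrt{2}$, whose cosine with $\mat{z}$ is at least $(2c/\sqrt{k})/\sqrt{2} = c/\sqrt{k/2}$, and iterate. After roughly $\log_2 k$ rounds the surviving bisector has cosine at least $c$ with $\mat{z}$. The main obstacle along this route is handling $|V|$ not a power of $2$ while preserving mutual orthogonality of the bisectors across rounds, which is presumably where the slack factor $\sqrt{2}$ and the restriction $c \leq 1/\sqrt{2}$ in the statement come from. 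I would prefer the direct projection because it sidesteps the case analysis and even yields a tighter constant.
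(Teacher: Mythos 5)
Your projection argument is correct, and it is genuinely different from (and sharper than) the proof in the paper. You write $\mat{y} = \sum_{i} \langle \mat{n}_i,\mat{z}\rangle\,\mat{n}_i$, note that the hypothesis forces every coefficient $\langle \mat{n}_i,\mat{z}\rangle \geq (c/\sqrt{|V|})\norm{\mat{z}} > 0$ so that $\mat{y}$ is a conical combination, and then use orthonormality to get $\langle\mat{y},\mat{z}\rangle = \norm{\mat{y}}^2$, hence $\cos\angle(\mat{y},\mat{z}) = \norm{\mat{y}}/\norm{\mat{z}} \geq \sqrt{|V|\cdot c^2\norm{\mat{z}}^2/|V|}\,/\norm{\mat{z}} = c$. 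Every step checks out, and you obtain the bound $c$ rather than $c/\sqrt{2}$, without ever invoking the hypothesis $c \leq 1/\sqrt{2}$. The paper instead carries out the iterative halving you sketch as your alternative: it pairs up the vectors, replaces each orthogonal pair by its unit bisector, applies the spherical law of cosines to show each bisector's cosine with $\mat{z}$ improves by a factor $\sqrt{2}$ per stage, and repeats for $\log_2|W|$ stages on a maximal power-of-two subset $W \subseteq V$; the factor $\sqrt{2}$ in the conclusion is exactly the loss from discarding $V \setminus W$, and the restriction $c \leq 1/\sqrt{2}$ is needed to terminate early when some intermediate vector already makes angle at most $\pi/3$ with $\mat{z}$. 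Your route eliminates the power-of-two case analysis, the early-termination case, and the spherical trigonometry, at the cost of nothing; the only thing the paper's formulation "buys" is that its weaker conclusion is what is quoted downstream in Lemma~\ref{lem:tech-nice}, so adopting your bound would propagate an improvement of a factor $2$ inside the logarithm-free part of the constant there.
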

\begin{proof}
	Let $\theta = \arccos\left(c |V|^{-1/2}\right)$.   If $\theta \leq \pi/3$, we can pick any vector $\mat{n} \in V$ as $\mat{y}$ because $\cos\angle(\mat{n},\mat{z}) \geq \cos\theta \geq \cos(\pi/3) \geq c/\sqrt{2}$ for any $c \leq 1/\sqrt{2}$.  Suppose that $\theta > \pi/3$.  Let $W$ be a maximal subset of $V$ whose size is a power of 2.  Arbitrarily label the vectors in $W$ as $\mat{n}_1, \mat{n}_2, \ldots$.   Consider the unit vector $\mat{n}_{1,2} = \frac{1}{\sqrt{2}}\mat{n}_1 + \frac{1}{\sqrt{2}}\mat{n}_2$.   Let $\phi = \angle (\mat{n}_{1,2},\mat{z})$.  Refer to Figure~\ref{fg:nice}.   By assumption, $\mat{n}_1 \perp \mat{n}_2$.  Let $\rho$ be the non-acute angle between the plane spanned by $\{\mat{n}_1,\mat{n}_2\}$ and the plane spanned by $\{\mat{n}_{1,2},\mat{z}\}$.  By the spherical law of cosines, $\cos\theta \leq \cos\angle (\mat{n}_2,\mat{z}) = \cos\phi\cos(\pi/4) + \sin\phi\sin(\pi/4)\cos\rho$.
	Note that $\cos\rho \leq 0$ as $\rho \geq \pi/2$.  So $\cos\phi \geq \sec(\pi/4) \cos\theta = \sqrt{2}\cos\theta$.
	The same analysis holds between $\mat{z}$ and the unit vector $\mat{n}_{3,4} = \frac{1}{\sqrt{2}}\mat{n}_3 + \frac{1}{\sqrt{2}}\mat{n}_4$, and so on.  So we obtain $|W|/2$ vectors $\mat{n}_{2i-1,2i}$ for $i = 1,\ldots,|W|/2$ such that $\angle (\mat{n}_{2i-1,2i},\mat{z}) \leq \arccos\bigl(\sqrt{2}\cos\theta\bigr)$.  Call this the first stage.   Repeat the above with the $|W|/2$ unit vectors $\mat{n}_{1,2}, \mat{n}_{3,4}, \ldots$ in the second stage and so on.  We end up with one vector in $\log_2 |W|$ stages.  If we produce a vector that makes an angle at most $\pi/3$ with $\mat{z}$ before going through all $\log_2 |W|$ stages, the lemma is true.  Otherwise, we produce a vector $\mat{y}$ in the end such that $\cos \angle (\mat{y},\mat{z}) \geq \bigl(\sqrt{2}\bigr)^{\log_2 |W|}\cos\theta \geq \bigl(\sqrt{2}\bigr)^{\log_2 |V|-1}\cos\theta  \geq \sqrt{|V|/2} \cdot \cos\theta = c/\sqrt{2}$.
\end{proof}

\begin{lemma}
\label{lem:tech-nice}
Let $G_r$ be the subset of the $\tau_{r+1}$ heaviest elements of $E_r$.
There exists a descent direction $\mat{n}_r$ from $\mat{x}_r$ such that $\mat{n}_r$ is a conical combination of $\{\mat{s}_i : i \in G_r\}$ and $\langle -\gamma_{\mat x_r} ,{\mat{n}}_r \rangle \geq \norm{\gamma_{\mat x_r} \! \downarrow \! H_r} \cdot \sqrt{\frac{\tau_{r+1}}{4(s+\tau_{r+1})\ln\tau_{r+1}}}$. 
\end{lemma}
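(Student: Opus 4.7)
The plan is to apply Lemma~\ref{lem:tech-1} with a well-chosen $\alpha$, translate its coordinatewise guarantee into an angular bound against $-\gamma_{\mat{x}_r}\! \downarrow \! H_r$, and then feed the resulting collection of orthogonal directions into Lemma~\ref{lem:nice} to produce a single conical combination with the required angle (and hence the required inner product).

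First I will set $\alpha=\tau_{r+1}/(s+\tau_{r+1})$. Since $|H_r|\leq |G_r|+|\supp(\mat{x}_*)|\leq \tau_{r+1}+s$, the quantity $t=\alpha|H_r|$ satisfies $t\leq\tau_{r+1}$. A key structural observation is that every element of $H_r\setminus G_r$ lies in $\supp(\mat{x}_*)\cap E_r$ but not among the $\tau_{r+1}$ heaviest elements of $E_r$, so its weight is dominated by the minimum weight in $G_r$; consequently, once $H_r$ is sorted by weight, the top $t$ entries all lie inside $G_r$. Applying Lemma~\ref{lem:tech-1} yields an index $j\in[t]$ such that, writing $V=\{\mat{s}_i:i\in H_r,\,\text{weight}(i)\geq\text{weight}(j)\}$, one has $V\subseteq\{\mat{s}_i:i\in G_r\}$, $|V|\geq j$, and $\langle \gamma_{\mat{x}_r}\! \downarrow \! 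H_r,\mat{s}_i\rangle^2\geq \norm{\gamma_{\mat{x}_r}\! \downarrow \! H_r}^2\cdot\alpha/(2j\ln t)$ for every $\mat{s}_i\in V$.

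Next I will recast this as the hypothesis of Lemma~\ref{lem:nice}. A quick sign check (using $(\gamma_{\mat{x}_r})_i=\mathrm{sign}((\nabla f(\mat{x}_r))_i)\cdot(|(\nabla f(\mat{x}_r))_i|-\eta)$ for $i\in E_r$) shows $\langle -\gamma_{\mat{x}_r},\mat{s}_i\rangle=|(\gamma_{\mat{x}_r})_i|\geq 0$, so I can take square roots with the natural sign. With $\mat{z}=-\gamma_{\mat{x}_r}\! \downarrow \! H_r$, $c=\sqrt{\alpha/(2\ln t)}$, and the observation $|V|\geq j$, the inequality becomes $\cos\angle(\mat{s}_i,\mat{z})\geq c|V|^{-1/2}$. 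The vectors in $V$ are mutually orthogonal (distinct signed coordinate axes), and $c\leq 1/\sqrt{2}$ in the range of interest (degenerate cases where $t$ is tiny or $\ln t$ vanishes can be dispatched by picking any single $\mat{s}_i$ of maximum weight in $G_r$). Lemma~\ref{lem:nice} then supplies a conical combination $\mat{y}$ of $V$ with $\cos\angle(\mat{y},\mat{z})\geq c/\sqrt{2}=\sqrt{\alpha/(4\ln t)}$.

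Finally I will set $\mat{n}_r=\mat{y}/\norm{\mat{y}}$, a unit vector that is still a conical combination of $\{\mat{s}_i:i\in G_r\}$ and hence a descent direction from $\mat{x}_r$ by Lemma~\ref{lem:32}. Because $\mat{n}_r$ is supported in $H_r$, $\langle -\gamma_{\mat{x}_r},\mat{n}_r\rangle=\langle -\gamma_{\mat{x}_r}\! \downarrow \! H_r,\mat{n}_r\rangle\geq \norm{\gamma_{\mat{x}_r}\! \downarrow \! H_r}\cdot\sqrt{\alpha/(4\ln t)}$; the choices $\alpha=\tau_{r+1}/(s+\tau_{r+1})$ and $t\leq\tau_{r+1}$ then convert the right-hand side into $\norm{\gamma_{\mat{x}_r}\! \downarrow \! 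H_r}\cdot\sqrt{\tau_{r+1}/(4(s+\tau_{r+1})\ln\tau_{r+1})}$, exactly the claimed bound. The main obstacle I anticipate is crisply verifying the containment that the $t$ heaviest entries of $H_r$ lie inside $G_r$ (this pigeonholing is what lets the bound involve only $G_r$ while $H_r$ still controls the projection norm); once that is pinned down, everything else is careful sign-tracking and a direct invocation of Lemmas~\ref{lem:tech-1} and~\ref{lem:nice}.
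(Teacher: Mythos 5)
Your proof is correct and follows the same route as the paper: apply Lemma~\ref{lem:tech-1} with $\alpha=\tau_{r+1}/(s+\tau_{r+1})$, convert the coordinatewise bound into angle bounds against $\gamma_{\mat{x}_r}\!\downarrow\! H_r$, and feed the resulting mutually orthogonal directions into Lemma~\ref{lem:nice}. In fact your chaining of the two lemmas is more careful than the paper's one-line argument: Lemma~\ref{lem:tech-1} only guarantees the per-coordinate bound $\alpha/(2j\ln t)$ for elements at least as heavy as the index $j$ it produces, not for all of $G_r$ as the paper's proof asserts, and your choice of $V$ as exactly that subset (using $|V|\ge j$ to absorb the factor $j$ when checking the hypothesis $\cos\angle(\mat{s}_i,\mat{z})\ge c|V|^{-1/2}$) is the technically correct way to arrive at the same constant $\sqrt{\tau_{r+1}/(4(s+\tau_{r+1})\ln\tau_{r+1})}$. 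The degenerate small-$t$ cases you flag are glossed over in the paper as well, so your treatment is no weaker on that front.
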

\begin{proof}
Using $\alpha = \frac{\tau_{r+1}}{s+\tau_{r+1}}$, Lemma~\ref{lem:tech-1} implies that for every $i \in G_r$, $\gamma_{\mat{x}_r} \! \downarrow \! \{i\}$ makes an angle no larger than $\arccos\left(1/\sqrt{2(s+\tau_{r+1})\ln\tau_{r+1}}\right)$ with $\gamma_{\mat{x}_r} \! \downarrow \! H_r$. We apply Lemma~\ref{lem:nice} with $V = G_r$ and $c = \sqrt{\frac{\tau_{r+1}}{2(s+\tau_{r+1})\ln\tau_{r+1}}}$.  By Lemma~\ref{lem:nice}, there is a conical combination of $\{\mat{s}_i : i \in G_r\}$ that improves this angle bound to $\arccos\bigl(\sqrt{\frac{\tau_{r+1}}{4(s+\tau_{r+1})\ln\tau_{r+1}}}\bigr)$.
\end{proof}

\cancel{
\begin{lemma}
\label{lem:zero}
$\langle \gamma_{\mat{x}_r}, \mat{x}_* - \mat{x}_r \rangle = \langle \gamma_{\mat{x}_r} \! \downarrow \! F_r, \mat{x}_* - \mat{x}_r \rangle$.
\end{lemma}
\begin{proof}
Take any $i \not\in E_r$.  We have $(\gamma_{\mat{x}_r})_i = 0$ because $\gamma_{\mat{x}_r} = \zeta_{\mat{x}_r} + \nabla f(\mat{x}_r)$ and $(\zeta_{\mat{x}_r})_i = -(\nabla f(\mat{x}_r))_i$  by definition. Therefore, both $(\gamma_{\mat{x}_r})_i \cdot (\mat{x}_* - \mat{x}_r)_i$ and $(\gamma_{\mat{x}_r} \! \downarrow \! F_r)_i \cdot (\mat{x}_* - \mat{x}_r)_i$ are zero.

For any $i \in F_r$, $(\gamma_{\mat{x}_r})_i = (\gamma_{\mat{x}_r} \! \downarrow \! F_r)_i$ by definition.  Therefore, $(\gamma_{\mat{x}_r})_i \cdot (\mat{x}_* - \mat{x}_r)_i = (\gamma_{\mat{x}_r} \! \downarrow \! F_r)_i \cdot (\mat{x}_* - \mat{x}_r)_i$.

For any $i \in E_r \setminus F_r$, $i \not\in \supp(\mat{x}_*)$ as $\supp(\mat{x}_*) \cap E_r \subseteq F_r$.  So $(\mat{x}_*)_i = 0$.  We also have $(\mat{x}_r)_i = 0$ as $i \in E_r$.  Therefore, both $(\gamma_{\mat{x}_r})_i \cdot (\mat{x}_* - \mat{x}_r)_i$ and $(\gamma_{\mat{x}_r} \! \downarrow \! F_r)_i \cdot (\mat{x}_* - \mat{x}_r)_i$ are zero.
\end{proof}

}

\cancel{

\subsection{Proof of Lemma~\ref{lem:gap-ratio-t}}
\label{app:increase}

We restate Lemma~\ref{lem:gap-ratio-t} for convenience.

\vspace{10pt}

\begingroup\itshape
\noindent{\bf Restate Lemma~\ref{lem:gap-ratio-t}.}
$F(\mat{x}_0) - F(\mat{x}_*) \leq \frac{3}{2}\norm{\mat{b}}^2$.
\endgroup

\vspace{10pt}

\begin{proof}
\begin{align*}
F(\mat{x}_0) - F(\mat{x}_*) &\leq~\frac{1}{2}\norm{\mat{Ax}_0 - \mat{b}}^2 - \frac{1}{2}\norm{\mat{Ax}_* - \mat{b}}^2 \\
&=~\frac{1}{2}\bigl(\norm{\mat{Ax}_0-\mat{b}} - \norm{\mat{Ax}_*-\mat{b}}\bigr) \cdot \bigl(\norm{\mat{Ax}_0-\mat{b}} + \norm{\mat{Ax}_*-\mat{b}}\bigr) \\
&\leq~\frac{1}{2}\norm{\mat{A}(\mat{x}_0-\mat{x}_*)}\cdot \bigl(2\norm{\mat{b}} + \norm{\mat{Ax}_*}\bigr) \\
&=~\norm{\mat{Ax}_*} \left(\norm{\mat{b}} + \norm{\mat{Ax}_*}/2 \right).
\end{align*}
Assume that $\norm{\mat{Ax}_*} > 0$ as there is nothing to prove otherwise.  It follows that $\norm{\mat{x}_*} > 0$.  We claim that $\norm{\mat{Ax}_*} \leq \norm{\mat{b}}$.  Otherwise, we can decrease both $\norm{\mat{Ax}_*}$ and $\norm{\mat{Ax}_* - \mat{b}}$ slightly by moving $\mat{Ax}_*$ slightly towards the origin of $\real^k$.  Since $\mat{A}$ is a linear map, a linear movement of $\mat{x}_*$ towards the origin of $\real^n$ yields a linear movement of $\mat{Ax}_*$ towards the origin of $\real^k$.  But this gives a reduction in $\norm{\mat{x}_*}_1$ and hence a contradiction to the optimality of $\mat{x}_*$.  By our claim, we have $F(\mat{x}_0) - F(\mat{x}_*) \leq \frac{3}{2}\norm{\mat{b}}^2$.  

This proves (i).
Clearly, $\norm{\mat{x}_*} \leq \norm{\mat{x}_*} + \frac{1}{2\eta}\norm{\mat{b}}^2$.  If $\norm{\mat{x}_*} \geq \eps\norm{\mat{b}}/2$, then $\norm{\mat{b}} + \norm{\mat{x}_*}/2 \leq (\frac{1}{2} + \frac{2}{\eps})\norm{\mat{x}_*} < \frac{3}{\eps}(\norm{\mat{x}_*} + \frac{1}{2\eta}\norm{\mat{b}}^2)$.

We have $F(\mat{x}_0) = \frac{1}{2}\norm{\mat{b}}^2 \geq F(\mat{x}_*) = \frac{1}{2}\norm{\mat{Ax}_*}^2 - \langle \mat{Ax}, \mat{b} \rangle + \frac{1}{2}\norm{\mat{b}}^2 + \eta\norm{\mat{x}_*}_1$.  It follows that $\eta\norm{\mat{Ax}_*} - \norm{\mat{Ax}_*}\norm{\mat{b}} \leq \eta\norm{\mat{x}_*}_1 - \langle \mat{Ax}_*,\mat{b} \rangle \leq 0$.  If $\eta > \norm{\mat{b}}$, then $\mat{Ax}_* = 0$, and in this case, $\mat{x}_*$ must be zero in order to minimize $\eta\norm{\mat{x}_*}_1$.  This proves (ii).

If $\norm{\mat{Ax}_*} \leq \norm{\mat{b}}/4$, then $F(\mat{x}_*) \geq \frac{1}{2}\norm{\mat{Ax} - \mat{b}}^2 \geq \frac{9}{32}\norm{\mat{b}}^2$.  If $\norm{\mat{Ax}_*} > \norm{\mat{b}}/4$, then $F(\mat{x}_*) \geq \eta\norm{\mat{x}_*}_1 \geq \eta\norm{\mat{x}_*} \geq \eta\norm{\mat{Ax}_*} \geq \eta\norm{\mat{b}}/4$, which is  no more than $\frac{9}{32}\norm{\mat{b}}^2$ as $\eta \leq \norm{\mat{b}}$ by (ii).  This proves (iii).
\end{proof}

}

\end{document}